\pdfoutput=1

\documentclass[a4paper,11pt]{article}
\usepackage{iftex}

\usepackage[utf8]{inputenc}
\usepackage[T1]{fontenc}

\usepackage{amsmath}
\usepackage{amssymb}
\usepackage{amsthm}
\usepackage{mathtools}
\usepackage{thmtools}
\usepackage{algorithm}
\usepackage{thm-restate}
\usepackage{algpseudocode}
\usepackage{enumitem}
\usepackage{subcaption}

\usepackage[disable]{todonotes}

\usepackage{isomath}
\usepackage{libertine}
\usepackage[libertine]{newtxmath} 
\usepackage[scaled=0.96]{zi4} 

\usepackage[DIV=12]{typearea} 

\usepackage{microtype}
\usepackage[font=small]{caption}
\usepackage[algo2e,ruled,vlined,noend,linesnumbered]{algorithm2e}

\usepackage{xcolor}
\usepackage{xspace}

\usepackage[
    style=alphabetic,
    maxalphanames=4,
    minalphanames=4,
    maxnames=20,
    minnames=20,
    backref=true,
    doi=true,
    url=true,
    backend=biber
]{biblatex}

\ifpdf
\usepackage[ocgcolorlinks]{hyperref} 
\else
\usepackage[hidelinks]{hyperref}
\fi

\newcommand*\samethanks[1][\value{footnote}]{\footnotemark[#1]}

\allowdisplaybreaks[1]

\makeatletter
\g@addto@macro\bfseries{\boldmath}
\makeatother

\makeatletter
\g@addto@macro\mdseries{\unboldmath}
\g@addto@macro\normalfont{\unboldmath}
\g@addto@macro\rmfamily{\unboldmath}
\g@addto@macro\upshape{\unboldmath}
\makeatother

\makeatletter
\g@addto@macro\bfseries{\boldmath}
\def\thmhead@plain#1#2#3{%
  \thmname{#1}\thmnumber{\@ifnotempty{#1}{ }\@upn{#2}}%
  \thmnote{ {\the\thm@notefont\unboldmath(#3)}}}
\let\thmhead\thmhead@plain
\makeatother

\DeclareCiteCommand{\citem}
    {}
    {\mkbibbrackets{\bibhyperref{\usebibmacro{postnote}}}}
    {\multicitedelim}
    {}


\renewcommand*{\multicitedelim}{\addcomma\space}

\AtEveryCitekey{\clearfield{note}}

\makeatletter
\AtBeginDocument{%
    \newlength{\temp@x}%
    \newlength{\temp@y}%
    \newlength{\temp@w}%
    \newlength{\temp@h}%
    \def\my@coords#1#2#3#4{%
      \setlength{\temp@x}{#1}%
      \setlength{\temp@y}{#2}%
      \setlength{\temp@w}{#3}%
      \setlength{\temp@h}{#4}%
      \adjustlengths{}%
      \my@pdfliteral{\strip@pt\temp@x\space\strip@pt\temp@y\space\strip@pt\temp@w\space\strip@pt\temp@h\space re}}%
    \ifpdf
      \typeout{In PDF mode}%
      \def\my@pdfliteral#1{\pdfliteral page{#1}}
      \def\adjustlengths{}%
    \fi
    \ifxetex
      \def\my@pdfliteral #1{}
      \def\adjustlengths{\setlength{\temp@h}{-\temp@h}\addtolength{\temp@y}{1in}\addtolength{\temp@x}{-1in}}%
    \fi%
    \def\Hy@colorlink#1{%
      \begingroup
        \ifHy@ocgcolorlinks
          \def\Hy@ocgcolor{#1}%
          \my@pdfliteral{q}%
          \my@pdfliteral{7 Tr}
        \else
          \HyColor@UseColor#1%
        \fi
    }%
    \def\Hy@endcolorlink{%
      \ifHy@ocgcolorlinks%
        \my@pdfliteral{/OC/OCPrint BDC}%
        \my@coords{0pt}{0pt}{\pdfpagewidth}{\pdfpageheight}%
        \my@pdfliteral{F}
        %
        \my@pdfliteral{EMC/OC/OCView BDC}%
        \begingroup%
          \expandafter\HyColor@UseColor\Hy@ocgcolor%
          \my@coords{0pt}{0pt}{\pdfpagewidth}{\pdfpageheight}%
          \my@pdfliteral{F}
        \endgroup%
        \my@pdfliteral{EMC}%
        \my@pdfliteral{0 Tr}
        \my@pdfliteral{Q}%
      \fi
      \endgroup
    }%
}
\makeatother
\usepackage[capitalize]{cleveref}

\DeclareSourcemap{
	\maps[datatype=bibtex]{
		\map{
			\step[fieldset=abstract, null]
			\step[fieldset=pagetotal, null]
			\step[fieldset=timestamp, null]
			\step[fieldset=biburl, null]
			\step[fieldset=bibsource, null]
		}
		\map{
			\step[fieldsource=doi,final]
			\step[fieldset=url,null]
		}  
		\map[overwrite]{
			\pertype{inproceedings}
			\step[fieldset=editor, null]
			\step[fieldset=series, null]
			\step[fieldset=volume, null]
			\step[fieldset=publisher, null]
			\step[fieldset=isbn, null]
       }
	}
}


\newcommand{\antonis}[1]{\todo[linecolor=orange!50!black,backgroundcolor=orange!25,bordercolor=orange!50!black]{\scriptsize \textbf{AS:} #1}}



\addbibresource{./references.bib}

\colorlet{DarkRed}{red!50!black}
\colorlet{DarkGreen}{green!50!black}
\colorlet{DarkBlue}{blue!50!black}

\hypersetup{
	linkcolor = DarkRed,
	citecolor = DarkGreen,
	urlcolor = DarkBlue,
	bookmarksnumbered = true,
	linktocpage = true
}

\DontPrintSemicolon
\SetKwProg{Procedure}{Procedure}{}{}
\SetProcNameSty{textsc}
\SetFuncSty{textsc}
\SetKw{KwAnd}{and}
\SetKw{KwDo}{do}

\SetCommentSty{mycommfont}

\declaretheorem[numberwithin=section]{theorem}
\declaretheorem[numberlike=theorem]{lemma}

\declaretheorem[numberlike=theorem]{corollary}
\declaretheorem[numberlike=theorem]{Definition}
\declaretheorem[numberlike=theorem]{claim}
\declaretheorem[numberlike=theorem]{observation}



\newcommand{\dist}{\operatorname{dist}}
\newcommand{\distrIS}{distance-$r$ independent set\xspace}

\newcommand{\distIS}[1]{distance-#1 independent set\xspace}
\newcommand{\distDS}[1]{distance-#1 dominating set\xspace}

\newcommand{\subscript}[2]{$#1 _ #2$}
\newcommand{\old}{\text{(old)}}

\DeclareMathOperator{\polylog}{polylog}

\DeclareMathOperator*{\argmax}{argmax}

\hyphenation{pseudo-code}

\title{Dynamic Consistent $k$-Center Clustering with Optimal Recourse}

\author{
  Sebastian Forster \thanks{Department of Computer Science, University of Salzburg, Salzburg, Austria. This work is supported by the Austrian Science Fund (FWF): P 32863-N. This project has received funding from the European Research Council (ERC) under the European Union's Horizon 2020 research and innovation programme (grant agreement No 947702).}
  \and 
  Antonis Skarlatos\samethanks[1]
}

\date{}

\ifpdf
\hypersetup{
    pdftitle = {Dynamic Consistent $k$-Center Clustering with Optimal Recourse},
    pdfauthor = {Sebastian Forster, Antonis Skarlatos}
}
\else
\fi

\begin{document}
\maketitle
\begin{abstract}
Given points from an arbitrary metric space and a sequence of point updates sent by an adversary, what is
the minimum \emph{recourse} per update (i.e., the minimum number of changes needed to the set of centers after an update), 
in order to maintain a constant-factor approximation to a $k$-clustering problem? This question has received attention in recent years under the name \emph{consistent clustering}.

Previous works by Lattanzi and Vassilvitskii [ICLM '17] and Fichtenberger, Lattanzi, Norouzi-Fard, and Svensson [SODA '21]
studied $k$-clustering objectives, including the $k$-center and the $k$-median objectives, under only point insertions.
In this paper we study the $k$-center objective in the fully dynamic setting, where the update is either a point insertion or a point deletion. Before our work, Łącki \textcircled{r} Haeupler \textcircled{r} Grunau \textcircled{r} Rozhoň \textcircled{r} Jayaram~[SODA '24]
gave a deterministic fully dynamic constant-factor approximation algorithm for the $k$-center objective
with worst-case recourse of $2$ per point update (i.e., point insertion/point deletion).

In this work, we prove that the $k$-center clustering problem admits optimal recourse bounds by developing a deterministic fully dynamic
constant-factor approximation algorithm with worst-case recourse of $1$ per point update. Moreover our algorithm performs simple choices based on light data structures, and
thus is arguably more direct and faster than the previous one which uses a sophisticated combinatorial structure. 
Additionally to complete the picture, we develop a new deterministic decremental algorithm and a new deterministic incremental algorithm,
both of which maintain a $6$-approximate $k$-center solution with worst-case recourse of $1$ per point update. Our incremental algorithm improves over the $8$-approximation algorithm by Charikar, Chekuri, Feder, and Motwani [STOC '97].
Finally, we remark that since all three of our algorithms are deterministic, they work against an adaptive adversary.

\end{abstract}

\thispagestyle{empty}
\newpage
\tableofcontents
\thispagestyle{empty}
\newpage

\setcounter{page}{1}

\section{Introduction}
Clustering is a fundamental and well-studied problem in computer science and artificial intelligence, and the goal is to find structure in data by grouping together similar data points.
Clustering arises in approximation algorithms, unsupervised learning, computational geometry,
community detection, image segmentation, databases, and other areas~\cite{hansen1997cluster, schaeffer2007graph, fortunato2010community, shi2000normalized, arthur2007k, tan2013data, coates2012learning}. Clustering algorithms optimize a given objective function, and specifically for $k$-clustering problems the goal is to output a set of $k$ centers that minimize a $k$-clustering objective.

One of the classical and most well-studied $k$-clustering objectives is the \emph{$k$-center objective}. In particular, given a metric space $(\mathcal{X}, \dist)$,
a set of points $P \subseteq \mathcal{X}$, and a positive integer $k \leq |P|$,
the goal of the $k$-center clustering problem is to output a subset $S \subseteq P$ of at most 
$k$ points, referred to as \emph{centers}, such that the maximum distance of any point in $P$ to its closest center is minimized. The $k$-center clustering problem admits a polynomial-time $2$-approximation algorithm~\cite{Gonzalez85, HochbaumS86}, and it is known to be NP-hard to approximate the $k$-center objective within a factor of $(2-\epsilon)$ for any constant $\epsilon > 0$~\cite{hsu1979easy}. The $k$-center clustering problem has also been studied
in the graph setting where the metric is induced by the shortest path metric of a weighted undirected graph~\cite{Thorup04, EppsteinHS20, AbboudCLM23}.

In recent years, with the explosion of data and the emergence of modern computational paradigms where data is constantly changing, there has been notable interest in developing dynamic clustering algorithms~\cite{Cohen-AddadSS16, GoranciHL18, HenzingerK20, GoranciHLSS21, HenzingerLM20, ChanGS18, BateniEFHJMW23, BhattacharyaCLP23, cruc_for_gor_yas_skar2024dynamic}. In the dynamic setting, the point set $P$ is subject to point updates sent by an adversary, and one of the goals is to maintain a set of centers with small approximation ratio and low update time. In the \emph{incremental setting} the adversary can only insert new points to $P$, in the \emph{decremental setting} the adversary can only delete points from $P$, and in the \emph{fully dynamic setting} the adversary can insert new points to $P$ and delete points from $P$. Several works focus on reducing the update time after a point update~\cite{ChanGS18, BateniEFHJMW23, BhattacharyaCLP23} and after an edge update in the graph setting~\cite{cruc_for_gor_yas_skar2024dynamic}.

In this paper, we study the dynamic \emph{consistent} clustering problem, and specifically
the dynamic \emph{consistent $k$-center} clustering problem. The main goal in this problem is to develop a dynamic algorithm that makes few changes to its set of centers after an update.
The number of changes made to the set of centers after an update (i.e., point insertion/point deletion) is the \emph{recourse} of the algorithm. Therefore, the goal is to achieve the minimum possible recourse while the approximation ratio of the maintained
solution is small. In turn, the goal can be either to minimize the amortized recourse or the worst-case recourse
in combination with either an oblivious or an adaptive adversary. Moreover, usually we want the approximation ratio of the maintained
solution to be constant.

Consistency is important from both
a theoretical and a practical point of view, and in recent
years it has received a lot of attention in the machine learning literature~\cite{LattanziV17, fichtenberger2021consistent, Cohen-AddadHPSS19, GuoKLX21, lkacki2024fully}.
From the theoretical perspective, minimizing the recourse for a fixed approximation ratio is a very interesting
combinatorial question because it shows how closely the approximate solutions align after an update occurs~\cite{lkacki2024fully}. Consistency is also closely related to the notion of low-recourse in the online algorithms literature~\cite{GuG016, MegowSVW12, BernsteinHR19, GuptaL20, BhattacharyaBLS23, EpsteinL14, SandersSS09, GuptaK0P17, GuptaKS14}. In our context, 
the irrevocability of past choices is relaxed, allowing for changes to past decisions.
From a practical perspective, imagine grouping people into $k$ teams. After a person
arrives or leaves, it makes sense not to change the set of $k$ leaders significantly, as over time the people in each group have developed trust with their respective leader.
Another example related to machine learning problems is when the $k$ centers are used for a task which is very costly to recompute, and so we aim to reduce the number of recomputations after a modification in the input.

\subsection{Related Work}
Consistent clustering was introduced by Lattanzi and Vassilvitskii~\cite{LattanziV17}, who developed an incremental $O(1)$-approximation algorithm for well-studied $k$-clustering problems (e.g., $k$-median, $k$-means, $k$-center) with $O(k^2 \log^4 n)$ total recourse (i.e., $O(k^2 \log^4 n / n)$ amortized recourse), where $n$ is the total number of points inserted by the adversary. They also provided an $\Omega(k \log n)$ lower bound on the total recourse (i.e., $\Omega(k \log n / n)$ lower bound for the amortized recourse). Later on, for the $k$-median clustering problem in the incremental setting, the total recourse was improved to $O(k \polylog n)$ by
Fichtenberger, Lattanzi, Norouzi-Fard, and Svensson~\cite{fichtenberger2021consistent}, which is optimal up to polylogarithmic factors.

Regarding the consistent $k$-center clustering problem, in~\cite{LattanziV17} the authors argue that in the incremental setting, the ``doubling algorithm'' by Charikar, Chekuri, Feder, and Motwani~\cite{CharikarCFM97} achieves the tight $O(k \log n)$ total recourse bound. To the best of our knowledge, there are no works that focus solely to the decremental setting. However recently,
Łącki \textcircled{r} Haeupler \textcircled{r} Grunau \textcircled{r} Rozhoň \textcircled{r} Jayaram~\cite{lkacki2024fully}
were the first to study the fully dynamic setting, and they developed a deterministic fully dynamic algorithm that maintains a constant-factor approximate solution with worst-case recourse of $2$. In more detail, they proved the following result.

\begin{theorem}[\cite{lkacki2024fully}]\label{th:fully_consist_priorwork}
    There is a deterministic fully dynamic algorithm that, given a point set $P$ from an arbitrary metric space subject to point insertions and point deletions and an integer $k \geq 1$, maintains a subset of points $S \subseteq P$ such that: 
    \vspace{-0.3em}
    \begin{itemize}
        \setlength\itemsep{-0.3em}
        \item The set $S$ is a $24$-approximate solution for the $k$-center clustering problem. 
        \item The worst-case recourse is at most $1$ for any point insertion
        and at most $2$ for any point deletion.
    \end{itemize}
\end{theorem}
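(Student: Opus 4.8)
The plan is to reduce the problem to maintaining, for every \emph{scale} $2^i$ with $i$ ranging over the $O(\log\Delta)$ powers of two that span the current ratio $\Delta$ between the largest and smallest pairwise distance, a greedily built $2^i$\emph{-net} $N_i\subseteq P$: a set that is $2^i$\emph{-separated} (any two points of $N_i$ are at distance more than $2^i$) and $2^i$\emph{-covering} (every point of $P$ lies within distance $2^i$ of $N_i$). At any moment the algorithm would output $N_{i^\star}$, where $i^\star$ is the smallest scale with $|N_{i^\star}|\le k$. For the approximation guarantee: $N_{i^\star-1}$ is $2^{i^\star-1}$-separated and has more than $k$ points, so by pigeonhole two of them lie in a common cluster of an optimal solution, which has diameter at most $2r^\star$ (writing $r^\star$ for the optimal $k$-center radius of the current set); hence $2^{i^\star}\le 4r^\star$. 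Since $N_{i^\star}$ covers $P$ within radius $2^{i^\star}$, it is a $4$-approximation, and the weaker constant $24$ in the statement will be the price of the recourse-friendly relaxations below.

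The next ingredient is to maintain a \emph{single} net with $O(1)$ recourse. Upon inserting a point $p$, at each scale either $p$ is already dominated by $N_i$, in which case nothing changes, or $p$ is ``new and far'' and is simply added to $N_i$, which preserves both invariants. Upon deleting a point that is not in $N_i$, nothing changes. The only delicate case is deleting a point $c\in N_i$: the points that were dominated only by $c$ (its \emph{orphans}) all lie within $2^i$ of $c$, hence pairwise within $2\cdot 2^i$, so promoting any single orphan $c'$ into $N_i$ re-covers all of them within radius $2\cdot 2^i$. This costs recourse $2$ at that scale (remove $c$, insert $c'$) and inflates the covering radius; arranging that successive promotions do not let this slack grow — we would aim for an effective covering radius of at most $6\cdot 2^i$ at every scale — is where the structure gets technical and must store, per scale, a laminar/forest record of which center is responsible for which region together with a re-rooting rule of bounded branching.

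The remaining and genuinely hardest point is to turn $O(1)$ recourse \emph{per scale} into $O(1)$ recourse for the \emph{output}. An update changes $|N_i|$ by at most one at each scale and therefore moves $i^\star$ by at most one level, but $N_{i^\star\pm 1}$ may differ from $N_{i^\star}$ in many points, which would ruin the bound if one naively re-read the best scale. The fix is to maintain the hierarchy so that the active level migrates \emph{gradually}: keep the nets laminar across scales, maintain an explicit pointer to the current output level, and on each update perform at most one promotion/demotion that moves this pointer one step toward the correct level, de-amortizing the work so that the \emph{worst-case}, not merely amortized, number of output changes is controlled. Composing the per-scale deletion rule (recourse $2$) with this gradual migration — arranged so that it costs nothing extra for insertions and reuses the single promotion for deletions — is what is designed to give worst-case recourse at most $1$ for a point insertion and at most $2$ for a point deletion.

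The main obstacle, as indicated, is precisely the interaction between the cascade of re-coverings triggered by deleting a center (an orphan is promoted, which may itself have had orphans, and so on up the forest) and the requirement that the read-off of the best scale change by only $O(1)$ elements per update; controlling both at once is what forces the ``sophisticated combinatorial structure'' that the present paper sets out to replace by lighter data structures. Once such a structure is in place, the stated bound should fall out by bookkeeping: the cost at the output scale is at most $6\cdot 2^{i^\star}$ by the deletion slack and $2^{i^\star}\le 4r^\star$ by the packing argument, so the maintained set is a $24$-approximation, while every update touches the output set in at most $1$ place for an insertion and at most $2$ places for a deletion; determinism, hence correctness against an adaptive adversary, is automatic since no randomness is used.
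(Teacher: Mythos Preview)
The statement you are trying to prove is Theorem~\ref{th:fully_consist_priorwork}, which in this paper is \emph{not} proved at all: it is quoted from prior work~\cite{lkacki2024fully} and serves only as the benchmark against which the authors compare their own Theorem~\ref{th:fully_consist_intro}. There is therefore no ``paper's own proof'' to compare against; the paper merely remarks that the algorithm of~\cite{lkacki2024fully} relies on a sophisticated combinatorial object called a \emph{leveled forest}, and that its update time is a large unspecified polynomial.

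As a reconstruction of the approach in~\cite{lkacki2024fully}, your outline is plausible at the level of broad strokes (hierarchy of nets, forest bookkeeping for orphans, gradual migration of the output level), and your approximation arithmetic $6\cdot 2^{i^\star}\le 6\cdot 4r^\star=24r^\star$ lands on the right constant. However, the proposal does not actually establish the two load-bearing claims. First, you assert that the effective covering radius stays at most $6\cdot 2^i$ under repeated deletions, but you only argue a single promotion inflates the radius by a factor of~$2$; you do not give the invariant or the re-rooting rule that prevents this slack from accumulating across a chain of deletions within the same cluster. Second, the passage from ``$O(1)$ recourse per scale'' to ``$O(1)$ recourse for the output'' is precisely where the leveled-forest machinery does real work: your sketch says the pointer moves one level per update, but a one-level move can still swap out $\Theta(k)$ centers unless the nets at adjacent scales are kept \emph{nested with bounded symmetric difference}, and you neither state nor maintain such an invariant. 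You correctly identify this as the main obstacle, but then defer it with ``once such a structure is in place, the stated bound should fall out by bookkeeping''---which is exactly the part that is not bookkeeping.
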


The authors in~\cite{lkacki2024fully} mention that the update time of their algorithm is a large polynomial without stating the exact bound, and that their algorithm solves the $k$-center clustering problem for all values of $k$ simultaneously.

\paragraph{Motivation and Open Question.}
The most important factors in the consistent clustering problem are the recourse and the approximation ratio. 
Furthermore, note that worst-case guarantees are stronger than amortized guarantees, and obtaining a deterministic algorithm 
is crucial because it immediately implies robustness against an adaptive adversary. Observe that an adaptive
adversary can repeatedly keep deleting and inserting a highly significant point (e.g., a center), forcing any 
bounded-approximation algorithm to remove and add that point from/to its set of centers after every update. In turn, this implies a lower bound of $1$ on the amortized recourse. 

From that perspective, the algorithm by~\cite{lkacki2024fully} is almost optimal
because it achieves a worst-case recourse of $2$, provides a constant-factor approximation, works in the fully dynamic setting, and is deterministic---making it robust against an adaptive adversary.
For this reason, the following particularly intriguing question (also posed in~\cite{lkacki2024fully}) remains open, asking for optimal bounds:

\vspace{0.5em}
\noindent\fbox{%
    \parbox{\textwidth}{%
        \begin{center}
        \emph{Is there a deterministic fully dynamic algorithm that maintains a constant-factor approximate solution for the $k$-center clustering problem with worst-case recourse of at most $1$?}
        \end{center}
    }%
}

\subsection{Our Contributions}
In this work, we answer this question in the affirmative, which stands as our main result.

\begin{theorem} \label{th:fully_consist_intro}
    There is a deterministic fully dynamic algorithm that, given a point set $P$ from an arbitrary metric space 
    subject to point insertions and point deletions and an integer $k \geq 1$,
    maintains a subset of points $S \subseteq P$ such that: 
    \vspace{-0.3em}
    \begin{itemize}
        \setlength\itemsep{-0.3em}
        \item The set $S$ is a $50$-approximate solution for the $k$-center clustering problem.
        \item The worst-case recourse is at most $1$ for any point update.
    \end{itemize}
\end{theorem}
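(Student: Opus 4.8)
The plan is to first reduce the approximation guarantee to controlling a single, carefully chosen distance scale, and then to make the reported solution follow that scale in a lazy, one-change-per-update fashion. I would discretize distances into powers of $2$ and, for every scale $2^i$, maintain a $k$-bounded ruling-set-type structure $R_i \subseteq P$: a set that is \emph{independent} at scale $2^i$ (pairwise distances greater than $2^i$) and \emph{dominating} at scale $O(2^i)$ (every point of $P$ lies within $O(2^i)$ of $R_i$). Let $\ell$, the \emph{critical scale}, be the smallest index with $|R_\ell| \le k$. Then $R_\ell$ is already a constant-factor approximation: it covers $P$ within $O(2^\ell)$, while $|R_{\ell-1}| > k$ exhibits $k+1$ points pairwise more than $2^{\ell-1}$ apart, so by pigeonhole every $k$-clustering has some cluster of diameter $> 2^{\ell-1}$, hence $r^* = \Omega(2^{\ell-1}) = \Omega(2^\ell)$. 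Fixing the constants in the ruling-set scales optimally is what will eventually yield the stated approximation factor of $50$.

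The second ingredient is a dynamic data structure that maintains all the $R_i$ so that a single point insertion or deletion changes each individual $R_i$ by only $O(1)$ points. The key is to \emph{not} insist on canonical (e.g.\ lexicographically-first) ruling sets, which cascade under updates, but instead to repair whatever ruling set is currently stored with a minimal local fix: on inserting $p$, if $p$ is already dominated do nothing, otherwise add $p$ (this preserves independence and domination and never forces a removal); on deleting a point $q \in R_i$, remove $q$ and re-establish domination for the $O(1)$ orphaned ``cluster'' around $q$ by promoting a bounded number of previously-dominated points, leaving all other elements of $R_i$ untouched. The reported solution $S$ is then derived from the critical net but maintained lazily: instead of switching wholesale from $R_\ell$ to a different net whenever $\ell$ moves, $S$ performs at most one change per update toward the current target, while we maintain the invariant that $S$ always has at most $k$ points and dominates $P$ within $O(2^\ell)$, which keeps $S$ a constant-factor approximation at every moment in time.

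The main obstacle is making this laziness actually work, i.e.\ proving that one change per update keeps $S$ valid. There are two difficulties. First, although each $R_i$ changes by only $O(1)$ per update for a \emph{fixed} $i$, the critical index $\ell$ itself can jump by more than one level in a single update (for instance when a far, isolated cluster is deleted and the data abruptly becomes ``easier''), and the nets at two different scales can differ in up to $k$ points, so a naive re-synchronization of $S$ would cost $\Theta(k)$. Second, when $|S|$ is exactly $k$ and a new point lands far from $S$, the single allowed change must be a swap (drop a center, add the new point), and one has to argue such a swap keeps the covering radius $O(2^\ell)$ — roughly, because among the $k+1$ witnessing far points two are within $O(2^\ell)$ of each other and can share a center. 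I would resolve both by phrasing the invariant for $S$ against a constant-width \emph{band} of scales rather than a single net, giving enough slack to absorb $O(1)$-sized net changes spread over $O(1)$ steps, and by showing that a large jump of $\ell$ is always ``witnessed'' by the points that entered or left $P$ in that very update, so that $S$ is already structurally consistent with the new critical scale up to $O(1)$ discrepancies — no mass re-clustering is ever needed. Converting these facts into a genuine worst-case (not amortized) bound then requires a small scheduling/potential argument: keep an $O(1)$-size budget of pending changes, spend one unit per update, and verify that every intermediate $S$ satisfies the size and covering invariants. I expect this scheduling bookkeeping, together with tracking all the constants down to $50$, to be the technically heaviest part; the approximation arithmetic itself is routine once the ruling-set parameters are fixed.
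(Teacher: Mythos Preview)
Your high-level picture (scale hierarchy, critical level, pigeonhole lower bound on $R^*$) is sound and is indeed how one proves the approximation side. But the part you identify as ``the technically heaviest'' is where the proposal actually breaks, and the paper's route is quite different from what you sketch.

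First, the lazy catch-up with an $O(1)$ ``budget of pending changes'' is an amortization device, not a worst-case one. Each update may add $O(1)$ new pending changes and you discharge one; nothing prevents the backlog from growing without bound, and while it is growing you have no control over how far $S$ is from the current target net. Your claim that a large jump of $\ell$ is ``witnessed'' by the single updated point so that $S$ is already consistent up to $O(1)$ discrepancies is the crux, and it is not true in the direction that matters: when a deletion makes the instance much easier ($\ell$ drops sharply), the remaining centers of $S$ can be scattered at the old, coarse scale and be nowhere near a net at the new, fine scale; one swap does not fix this, and a band of constant width does not absorb an arbitrarily large drop of $\ell$. Second, the per-level repair ``promote a bounded number of dominated points'' is not bounded in an arbitrary metric unless you simultaneously relax the domination radius; if you do relax it, repeated deletions of the freshly promoted center compound that relaxation unboundedly unless you anchor everything to the last ``real'' center of the cluster. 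You have not introduced that anchor.

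The paper does \emph{not} maintain a hierarchy of nets and does not track a moving target lazily. It keeps a single set $S$ of exactly $k$ centers and a single scale $r^{(\delta)}$ (with $r^{(\delta)}=5\,r^{(\delta-1)}$), together with an explicit partition of $P$ into clusters, each tagged \emph{regular}, \emph{extended}, or \emph{zombie}. The invariants are: $|S|=k$; $S\cup\{p_S\}$ is a distance-$r^{(\delta-1)}$ independent set for some witness $p_S$; and $S$ is a distance-$5r^{(\delta)}$ dominating set. The $50$ then falls out as $2\cdot5\cdot5$. The worst-case recourse of $1$ comes from two mechanisms that have no analogue in your sketch. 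On deletion of a center $c_i$, the algorithm searches for a \emph{chain} $p_{t_1},c_{t_2},p_{t_2},\ldots,c_{t_l},p_{t_l}$ through zombie clusters ending at a point $p_{t_l}$ far from all remaining centers; if found, $p_{t_l}$ is the single new center and the chain only \emph{relabels} existing centers (zero additional recourse); if no chain exists, a Reassigning Operation proves every affected point is already within $r^{(\delta)}$ of some surviving center, so one furthest point suffices as the replacement. On insertion, if the new point is far, a pair $c_s,c_i$ with $\dist(c_s,c_i)\le r^{(\delta)}$ is merged (crucially, the ordering of $S$ forces $c_i$ to be \emph{regular}, bounding the blow-up), freeing one slot. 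The anchor you are missing is exactly the ``most recent non-zombie center'' of each cluster, which caps the radius blow-up from iterated center deletions at a fixed constant rather than letting it compound.
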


\paragraph{Comparison between Theorem~\ref{th:fully_consist_priorwork} and Theorem~\ref{th:fully_consist_intro}.}
Let us make a few remarks about our result in comparison with the result of~\cite{lkacki2024fully}. 
Our approximation ratio is slightly worse, but notice that our recourse is optimal and usually 
the main goal is to obtain optimal recourse with constant approximation ratio. In addition, 
our algorithm performs simple choices based on light data structures.
On the other hand, the algorithm by~\cite{lkacki2024fully} uses a more sophisticated data structure that the authors call leveled forest. Therefore, our approach is more natural and clean than the approach in~\cite{lkacki2024fully}.
Last but not least, our algorithm is faster and easily implementable.

\vspace{1em}

To complete the picture, we also study the consistent $k$-center clustering problem in the partially dynamic setting.
The motivation is to achieve optimal bounds on the recourse while obtaining a small constant for the approximation ratio.
Our result in the decremental setting is the following.

\begin{theorem}
    There is a deterministic decremental algorithm that, given a point set $P$ from an arbitrary metric space 
    subject to point deletions and an integer $k \geq 1$,
    maintains a subset of points $S \subseteq P$ such that:  \vspace{-0.3em}
    \begin{itemize}
        \setlength\itemsep{-0.3em}
        \item The set $S$ is a $6$-approximate solution for the $k$-center clustering problem.
        \item The worst-case recourse is at most $1$ for any point deletion.
    \end{itemize}
\end{theorem}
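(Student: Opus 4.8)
The plan is to maintain a greedy (Gonzalez-type) ordering of the current point set $P$ and let the solution $S$ be a prefix of it, updating the ordering \emph{lazily} so that each point deletion perturbs $S$ by at most one element. If $|P|\le k$ we simply keep $S=P$, which is an exact (hence $6$-approximate) solution; a deletion removes the deleted point from $S$ for free and may need to add the single element of $P\setminus S$ when $|P|$ drops to $k$, so the recourse is at most $1$. We may therefore assume $|P|>k$ and $|S|=k$ throughout.

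In this regime we maintain an ordered tuple $(c_1,\dots,c_k)$ of points of $P$ together with a (roughly) non-increasing sequence of scales $\lambda_1\ge\lambda_2\ge\dots\ge\lambda_{k-1}>0$ obeying a relaxed greedy invariant: every center is far from its predecessors, $\dist(c_{i+1},\{c_1,\dots,c_i\})\ge\lambda_i$ for all $i$ (writing $\dist(p,A)=\min_{a\in A}\dist(p,a)$), and every prefix covers $P$ at its own scale up to a constant, $\max_{p\in P}\dist(p,\{c_1,\dots,c_i\})\le 3\lambda_i$ for all $i$. An honest greedy ordering satisfies this with constant $1$ and with $\lambda_i$ equal to the $i$-th covering radius; allowing the larger constant is precisely what leaves room for lazy repairs, and tracking that slack is what turns the optimal approximation factor $2$ into the factor $6$ in the statement. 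Granting the invariant, the approximation bound is the standard packing argument: by the separation property the $k$ centers are pairwise at distance at least $\lambda_{k-1}$, so together with any point realizing the current covering radius $r:=\max_{p\in P}\dist(p,S)$ they form $k+1$ points pairwise at distance at least $\min(\lambda_{k-1},r)\ge r/3$ (using $r\le 3\lambda_{k-1}$); hence any $k$-center solution must put two of them in one cluster, giving $\mathrm{OPT}\ge r/6$, i.e.\ $S$ is a $6$-approximation.

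It remains to maintain the invariant under a deletion with recourse at most $1$; here we use crucially that in the decremental setting the covering radius is monotone non-increasing. Deleting a non-center leaves $(c_1,\dots,c_k)$ and the scales untouched, the separation property is unaffected, and every covering inequality can only improve, so the recourse is $0$. The interesting case is the deletion of a center $c_j$: we delete it for free, note that the truncated prefix $c_1,\dots,c_{j-1}$ with scales $\lambda_1,\dots,\lambda_{j-2}$ still satisfies the invariant, and must refill the last slot. We do this by inserting a single new point $c^\star$ — a farthest surviving point from $\{c_1,\dots,c_{j-1}\}$, i.e.\ the greedy choice for extending that prefix — and then re-choosing the order and the scales $\lambda_{j-1},\dots,\lambda_{k-1}$ of the refilled block; reordering is free since $S$ is a set, so the net recourse is the single addition of $c^\star$.

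The heart of the proof, and the step I expect to be the main obstacle, is showing that a valid reordering and rescaling of the block $\{c^\star,c_{j+1},\dots,c_k\}$ always exists. Two things can go wrong: $c^\star$ may land within the current scale of one of the surviving later centers $c_i$ ($i>j$), which would violate the separation invariant at position $i$ in every ordering that keeps $c^\star$ before $c_i$; and setting $\lambda_{j-1}$ to the new covering radius of the prefix $c_1,\dots,c_{j-1}$ may exceed $\lambda_{j-2}$ and spoil monotonicity of the scales. I would resolve this by a short case analysis on the geometry of the deleted center's neighborhood — roughly, on whether the points formerly covered by $c_j$ are concentrated near $c_j$ or spread across an already-covered region — showing in the first case that placing $c^\star$ at position $j$ works directly (the later centers are untouched), and in the second case that $c_j$ was effectively redundant, so that $\{c_1,\dots,c_{j-1},c_{j+1},\dots,c_k\}$ is already a valid $(k-1)$-prefix that we extend by one fresh farthest point appended at the end; in both cases the surviving centers plus $c^\star$ admit an ordering satisfying the separation property and the constant-relaxed covering property, with a bounded, symmetric relaxation of scale monotonicity absorbed into the constant, and no further center moves. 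Each update costs only a constant number of farthest-point queries, supported by elementary data structures.
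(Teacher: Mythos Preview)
Your approach is genuinely different from the paper's---the paper never touches a Gonzalez ordering in the decremental setting. Instead it maintains an explicit partition of $P$ into clusters, each labelled \emph{regular} or \emph{zombie}, together with a single radius $\hat r$. When a center $c_i$ is deleted, the algorithm first looks inside $c_i$'s own cluster for a replacement far from the other centers; failing that, it searches for a chain of zombie clusters along which center indices can be \emph{shifted} (only the endpoint of the chain is a new center, so recourse is still $1$); failing even that, it reassigns the stranded points into neighbouring clusters and picks a global farthest point. The invariants it maintains are only three global statements ($|S|=k$; $S\cup\{p_S\}$ has pairwise distances $\ge\hat r$; $S$ covers $P$ at radius $3\hat r$), not a per-prefix condition, and the factor $3$ comes from a triangle-inequality argument through the ``most recent regular center'' of each cluster.

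Your plan has a concrete gap at exactly the point you flagged as the main obstacle, and the sketched two-case split does not close it. Take $k=4$ on the line with $P=\{0,25,26,27,50,75,100\}$, Gonzalez order $(c_1,c_2,c_3,c_4)=(0,100,50,25)$, scales $(100,50,25)$. Now delete $c_3=50$. Your rule sets $c^\star$ to the farthest survivor from $\{c_1,c_2\}=\{0,100\}$, which is $27$ (distance $27$), so the new center set is $S=\{0,25,27,100\}$. In \emph{every} ordering of this $S$, two of $\{25,27\}$ appear among the first three, forcing the last separation to be at most $2$; but the $3$-prefix always leaves the point $75$ at distance $\ge 25$, so the covering constraint $25\le 3\cdot 2$ fails. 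Worse, $S$ itself has covering radius $25$ while the optimum on the new $P$ is at most $2$ (centers $\{0,27,75,100\}$), so the approximation ratio exceeds~$12$. Your ``second case'' does rescue this particular instance (extending $\{0,100,25\}$ by the farthest point gives $75$), but that means the algorithm must \emph{choose} between two different replacement points, and your criterion for that choice---``whether the points formerly covered by $c_j$ are concentrated near $c_j$''---is not an algorithmic condition, nor have you argued the two cases are exhaustive. The paper's zombie-chain mechanism is precisely the missing idea: it lets the algorithm discover, in one shot, which single new point to add \emph{and} how to relabel existing centers so that the covering bound is preserved with constant~$3$.
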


To the best of our knowledge, there are no prior works in the decremental-only setting. Usually for the $k$-center clustering problem it is harder to handle point deletions than point insertions, because for example the adversary can actually delete a center.
Note also that the algorithm by~\cite{lkacki2024fully} incurs larger recourse for point deletions, and in~\cite{ChanGS18} the most difficult updates were point deletions. 

On the other hand, most works on consistent clustering focus on the incremental setting. Nonetheless, there is still room for improvement in the incremental setting, and our respective result is as follows.

\begin{theorem}
    There is a deterministic incremental algorithm that,
    given a point set $P$ from an arbitrary metric space subject to point insertions and an integer $k \geq 1$,
    maintains a subset of points $S \subseteq P$ such that: 
    \vspace{-0em}
    \begin{itemize}
        \setlength\itemsep{-0.3em}
        \item The set $S$ is a $6$-approximate solution for the $k$-center clustering problem.
        \item The worst-case recourse is at most $1$ for any point insertion.
    \end{itemize}
\end{theorem}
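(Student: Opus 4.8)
The plan is to maintain a center set $S$ with $|S|\le k$ together with a scale parameter $\tau$ that only ever increases and serves as a lower bound on the current optimal radius $r^\star$. Concretely, the aim is to keep $S$ as a \emph{net} of $P$ at scale $\tau$: distinct points of $S$ are at distance more than $\tau$ from each other, and every point of $P$ is within $O(\tau)$ of $S$, so the maintained radius is $O(\tau)$; combined with $\tau \le O(r^\star)$ this gives a constant-factor approximation, and the constants must be chosen carefully to make that factor exactly $6$. The bound $\tau \le O(r^\star)$ is kept as an invariant argued inductively rather than stored: every time $\tau$ is raised there are $k+1$ points of $P$ that are pairwise more than $\tau$ apart, so by pigeonhole on the $\le k$ optimal clusters two of them lie in the same cluster, hence are at distance $\le 2r^\star$, which gives $\tau \le 2r^\star$; since points are only inserted, this witness survives forever.

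An insertion of a point $p$ splits into three cases. If $p$ already lies within the coverage radius of $S$, nothing changes and the recourse is $0$, since the net remains valid for the enlarged point set. If $p$ is farther from $S$ than the coverage radius but $|S|<k$, we simply add $p$ to $S$, which costs recourse $1$ and preserves both the separation property (as $p$ is far from all current centers) and coverage. The interesting case is $|S|=k$ with $p$ far from $S$: now $S\cup\{p\}$ is a set of $k+1$ pairwise well-separated points, which certifies a stronger lower bound on $r^\star$ and thus permits raising $\tau$; we must then rebuild a net of size $\le k$ at the larger scale while changing $S$ by at most one center.

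The crux — and the step I expect to be the main obstacle — is precisely this scale increase. The naive ``doubling'' approach of~\cite{CharikarCFM97}, which raises $\tau$ until $S\cup\{p\}$ becomes a valid packing and re-thins in one shot, can discard many centers in a single update (incompatible with worst-case recourse $1$) and only yields an $8$-approximation; moreover a short scale-invariance calculation shows that no choice of a single coverage multiplier lets a single-net algorithm re-cover the discarded center's region at the new scale with only one change. The fix I would pursue is to maintain the net at two consecutive scales at once (a fine net at scale $\tau$ and a coarse net at scale $\Theta(\tau)$, nested), output the coarse one, and \emph{deamortize} the transition between scales: when the coarse net is about to exceed $k$ points, morph the output towards the next scale's net one center per update, charging each discarded center to the recourse budget of a future insertion — the counts match, because right after a transition the coarse net has exactly as many empty slots as there are centers still scheduled for removal. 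The delicate points to verify are: (a) during a transition every insertion is handled with total recourse $1$ (either the new point is already covered by the still‑oversized transitioning output, or it is exactly the update that performs the scheduled removal); (b) the output stays a valid packing with the stated coverage throughout the transition, which fixes the ratio between the two scales and the coverage multiplier and is where the factor $6$ is extracted; and (c) the bootstrap when $|P|$ first exceeds $k$, together with a consistent tie-breaking rule when several center pairs realize the minimum pairwise distance.
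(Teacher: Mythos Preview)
Your proposal contains a real misconception that steers you away from the paper's much simpler argument. You assert that ``no choice of a single coverage multiplier lets a single-net algorithm re-cover the discarded center's region at the new scale with only one change'', and on that basis you move to a two-scale nested-net scheme with deamortized transitions. But the paper's algorithm \emph{is} a single-net algorithm with worst-case recourse~$1$: it keeps one set $S$ of exactly $k$ centers and one radius $r^{(\delta)}$, and when an inserted point $p$ has $\dist(p,S)>r^{(\delta)}$ it removes exactly one center $c_i$ and inserts $p$. The idea you are missing is the \emph{choice} of $c_i$: take the center realizing $\min_{x\neq y\in S}\dist(x,y)$, breaking ties by largest index in a Gonzalez (farthest-first) ordering of $S$. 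Because $S\cup\{p\}$ is then a set of $k+1$ points that are pairwise at least $\dist(c_i,S\setminus\{c_i\})$ apart, pigeonhole gives $\dist(c_i,S\setminus\{c_i\})\le 2R^*$; hence every point formerly served (within $r^{(\delta)}$) by $c_i$ is within $r^{(\delta)}+2R^*$ of a surviving center. The Gonzalez ordering (re-run after each doubling) is what makes this work inductively: it guarantees that \emph{every} center removed since the last doubling has a surviving neighbour within $2R^*$, not just the most recent one (this is Claim~\ref{clm:exempted_centers} in the paper). Combined with the invariant $r^{(\delta)}\le 4R^*$, one gets coverage $6R^*$ directly, with no deamortization and no second net.

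Your alternative scheme may be salvageable, but as written it is only a plan: items (a)--(c) that you flag as ``delicate points to verify'' are precisely the content of the proof, and none is carried out. In particular you never derive the constant~$6$; you only assert that it ``is where the factor $6$ is extracted''. Moreover, the phrase ``still-oversized transitioning output'' suggests the output may temporarily have more than $k$ centers, which would not be a feasible $k$-center solution; you would need to argue carefully that $|S|\le k$ holds at every step. Before investing in the two-net machinery, revisit the single-net approach with the Gonzalez tie-breaking rule --- it is both simpler and sufficient.
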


Our incremental algorithm improves over the $8$-approximation ``doubling algorithm'' by Charikar, Chekuri, Feder, and Motwani~\cite{CharikarCFM97}. To obtain this improvement, we incorporate the Gonzalez's algorithm~\cite{Gonzalez85} into our incremental
algorithm. To compare the two incremental algorithms and for completeness, we demonstrate in Section~\ref{sec:doubling_alg} that with slight modifications, the ``doubling algorithm'' admits a worst-case recourse of $1$ as well. 
Finally, the fact that the incremental setting is easier than the decremental setting is evident in our algorithms, because our incremental algorithm has faster update time than our decremental algorithm. 
Specifically, the number of operations performed by our incremental algorithm after a point insertion, is a polynomial in the input parameter $k$.

There is also an interesting relationship between our three results. The merge of our incremental and decremental algorithms
with a few adjustments yields our fully dynamic algorithm, and conversely, the division of our fully dynamic algorithm with a few adjustments results in
our two partially dynamic algorithms. 

\subsection{Structure of the Paper}
In Section~\ref{sec:tech_over}, we
sketch our algorithms and give intuition about their guarantees. Some basic preliminaries follow in Section~\ref{sec:prel}. Then in Section~\ref{sec:fully}, Section~\ref{sec:decr}, and Section~\ref{sec:incr}, we provide the formal descriptions of the fully dynamic algorithm, the decremental algorithm, and the incremental algorithm, respectively. We remark that there are similarities among the three algorithms. Nevertheless, we have chosen to keep the sections independent, allowing the reader to approach each one separately at the cost of some repetition.
\section{Technical Overview} \label{sec:tech_over}
In this section, we give a high-level overview of our techniques and describe the intuition behind our algorithms.

\subsection{Approximation Ratio}
We start by describing the way the approximation ratio is controlled by satisfying three invariants throughout the algorithm.
Consider a \emph{$k$-center instance} which is a point set $P$ from an arbitrary metric space
and an integer parameter $k \geq 1$. We denote by $S$ the set of $k$ \emph{centers}, and assume that the distance of any point in $P$ to its closest center is at most $r$. Moreover, assume that there is a point $p_S$ in the current point set $P$ which is not a center such that the pairwise distances of points in $S \cup \{p_S\}$
are greater than $\frac{2r}{\rho}$. Then using standard arguments from the literature of the $k$-center clustering problem (see Corollary~\ref{cor:r_less2R}), we can deduce that the cost of the solution $S$ is at most $\rho$ times larger than the optimal cost of the $k$-center instance (i.e., the set $S$ is a $\rho$-approximate solution).

In our dynamic consistent $k$-center algorithms, we explicitly maintain a set of centers $S$ and a value $\delta$ called \emph{extension level}
which is associated with a \emph{radius} $r^{(\delta)}$. The goal then is to maintain a set of points $S$ that satisfies the following three invariants:
\vspace{-0.3em}
\begin{enumerate}
    \setlength\itemsep{-0.2em}
    \item The size of the set $S$ is exactly $k$.
    \item There exists a point $p_S \in P \setminus S$ such that the set $S \cup \{p_S\}$ is a \distIS{$r^{(\delta-1)}$}.
    \item The set $S$ is a \distDS{$(\alpha r^{(\delta)})$},
\end{enumerate}
\vspace{-0.3em}
\noindent
where $r^{(\delta)} = \beta r^{(\delta-1)}$, and $\alpha \geq 1, \beta > 1$ are fixed constants. 
The first invariant says that the set $S$ consist of $k$ centers. The second invariant combined with the first invariant
says that there exist $k + 1$ points whose pairwise distances are greater than $r^{(\delta-1)}$. The third invariant
says that the distance of any point to its closest center is at most $\alpha r^{(\delta)}$. Also, the relation between the 
two associated radii $r^{(\delta-1)}$ and $ r^{(\delta)}$ of the two respective extension levels $\delta-1$ and $\delta$ is $r^{(\delta)} = \beta r^{(\delta-1)}$.
Using the first and the second invariant, we can deduce that $r^{(\delta-1)} \leq 2R^*$ where $R^*$ is the optimal cost.
Together with the third invariant, we get that the set $S$ is a $(2\alpha \beta)$-approximate solution for the $k$-center clustering problem. Therefore by satisfying these three invariants for fixed constants $\alpha, \beta$, we can maintain
a $k$-center solution with constant approximation ratio.

In our setting, the point set $P$ is subject to point updates that are sent by an adaptive adversary, and in turn some invariants may not be satisfied after the update.
In this case, our algorithm either has to modify the set $S$ or increase/decrease the value of the extension level, in order to ensure small approximation ratio.
Our main goal is to achieve a \emph{worst-case recourse} of $1$, which means that the set $S$ should change by at most one center per update. In other words,
we want the symmetric difference between the set of centers before the update with the set of centers
after the update to be of size at most $2$.

\subsection{State of the Algorithm}
Before we continue describing how to handle updates, we briefly present the state of the algorithm.
As we have already explained, the algorithm maintains a set $S$ of $k$ \emph{center points}, and
a value $\delta$ which is called extension level and is associated with a radius $r^{(\delta)}$. 
Additionally for each center point $c \in S$, the algorithm \emph{internally} maintains a \emph{cluster} $C$ which consists of the points served by
the center $c$. We note that a point in $P$ is not necessarily part of the cluster of its closest center.\footnote{Thus, our algorithm internally maintains an explicit clustering consisting of $ k $ point sets, but the output it maintains only consists of a representative point from each of these clusters, the $ k $ centers.
Our overall approximation guarantee then holds for the closest-center clustering induced by these centers in which every point is assigned to the cluster of its closest center (which is in line with the $k$-center objective).}
The clusters form a partition of the point set $P$, and the idea is that only the center $c$ is responsible to serve
the points inside the corresponding cluster $C$. Furthermore, we maintain the center points of the set $S$ in some \emph{order}.
This means that the centers and the clusters have an index in $S$, and each cluster has the same index as its unique center.

A cluster is categorized as either a \emph{regular cluster}, an \emph{extended cluster}, or a \emph{zombie cluster},
and the corresponding center is respectively called either a \emph{regular center}, an \emph{extended center}, or a \emph{zombie center}.
Let us give a high-level description of each such \emph{state} of a cluster and its center.
A cluster $C$ is a regular cluster when all the points in $C$ are within distance $r^{(\delta)}$ from its center $c$. 
In the beginning of the algorithm, all clusters and centers are regular. After the insertion of a new point from the adversary, a cluster $C$ can become extended because some points from another cluster move to the cluster $C$. After the deletion of a center point $c$ from the adversary, the cluster $C$ of $c$ becomes a zombie cluster, and
the deleted center $c$ is replaced by another point $c'$ which is the zombie center of the cluster $C$.

A cluster $C$ becomes a zombie when its center is deleted but the cluster $C$ does not necessarily remain a zombie throughout the algorithm.
Once all the points in the cluster $C$ are within distance $r^{(\delta)}$ from its current center $c$, the cluster $C$ becomes regular again.
In order to control the approximation ratio, the algorithm guarantees that as long as a cluster $C$ is a zombie cluster, no additional points are added to $C$. To maintain this guarantee after a cluster becomes extended, the algorithm ensures that the distance between a zombie center
and all other centers is greater than $r^{(\delta)}$. Furthermore, if a point $p$ in a cluster $C$ is too far from its center $c$ (i.e., $\dist(p, c) > r^{(\delta)}$), and also $p$ is within distance~$r^{(\delta)}$ from another center $c'$ of a \emph{non-zombie} (i.e., either regular or extended) cluster $C'$, then the algorithm removes $p$ from~$C$ and adds it to~$C'$.

\subsection{Point Updates}
We describe here how the algorithm reacts to point updates sent by the adaptive adversary. 
In particular we demonstrate how after a point update, the set $S$ can continue satisfying the three invariants while ensuring
worst-case recourse of $1$. We remark that Invariant~1 is satisfied immediately by the way the algorithm
satisfies the other two invariants.

\subsubsection{Point Deletions}
Under point deletions, observe that all the invariants can be immediately violated. 
In the following, we explain how the algorithm satisfies Invariant~2 and Invariant~3 after a point deletion.

\paragraph{Satisfying Invariant~2.}
After a point deletion, Invariant~2 can be violated due to the deletion of a point from the set $S \cup \{p_S\}$,
where $p_S$ is the point which is the certificate for the second invariant.
Assume that the adversary deletes the point $p_S$ violating Invariant~2.
As a remedy, we introduce the following process that we call \emph{Decreasing Operation} (or \emph{Regulating Operation}), which
is called after every point update (and not just after a point deletion).
This process checks if the set $S$ is already a \distDS{$r^{(\delta)}$} and if so, the extension level $\delta$ is set to the smallest value $\delta'$ such that the set $S$ is a \distDS{$r^{(\delta')}$}. Then since $r^{(\delta-1)} < r^{(\delta)}$ we can argue that for the value $\delta - 1$,
the set $S$ is not a \distDS{$r^{(\delta-1)}$}, implying that the second invariant holds.

The situation becomes more difficult when the adversary deletes a center point $c_i$. 
Let $C_i$ be the cluster with center $c_i$, and let $i$ be their index in the ordered set $S$. Our aim is to replace the deleted center $c_i$ with another point $c'$ from the same cluster $C_i$,
such that $c'$ is at a distance greater than $r^{(\delta-1)}$ from the remaining centers (i.e., points in $S \setminus \{c_i\}$).
If this is possible, then by replacing $c_i$ with $c'$ and calling in the end the Decreasing Operation, Invariant~2 is satisfied.
Otherwise, all the points inside the cluster $C_i$ are blocked by some remaining centers, meaning that all the points in $C_i$
are within distance $r^{(\delta-1)}$ from the set $S \setminus \{c_i\}$ of the remaining center points. Notice that if a point $p$ in the cluster $C_i$ is blocked by another center of a regular cluster\footnote{The same is also true if the point $p$ is blocked by another center of an extended cluster.} then the algorithm can just move the point $p$ to the other regular cluster. Therefore we can safely assume that after the deletion of $c_i$, all the points
inside the cluster $C_i$ are blocked by some remaining center of a zombie cluster.

To that end, we demonstrate the more challenging case which is when every point inside the cluster $C_i$ is within distance $r^{(\delta)}$ from some remaining zombie center.\footnote{Note that $r^{(\delta)}$ is more strict than $r^{(\delta-1)}$ here. The reason is that we want to keep the zombie centers at a distance greater than $r^{(\delta)}$ from the rest of the centers. In this way, we can argue that a zombie cluster cannot become an extended cluster and its points cannot move to an extended cluster. In turn, this allows us to get a constant approximation ratio by setting the parameter $\alpha$ in Invariant~3 to a constant.} For a clearer explanation, we use explicit indices starting from $1$. Hence, assume that the adversary deletes the center $c_1$, and that all the points in $C_1$ are within distance $r^{(\delta)}$ from a remaining zombie center. Let $p_1$ be a point in the cluster $C_1$, and let $c_2$ be the zombie center of a zombie cluster $C_2$ such that $\dist(p_1, c_2) \leq r^{(\delta)}$. Assume that there is a point $p_2$ in the zombie cluster $C_2$ at a distance greater than
$r^{(\delta)}$ from the remaining center points in $S \setminus \{c_1\}$. In this case, 
the algorithm assigns $p_2$ to be the center of the cluster $C_2$ and
the point $c_2$ becomes the center of the cluster $C_1$. Then the algorithm adjusts the indices so that $c_1$ is the old $c_2$ and $c_2$ is the point $p_2$. But what if every point in $C_2$ is within distance $r^{(\delta)}$ from some remaining zombie center? Then let $c_3$ be the zombie center such that $\dist(p_2, c_3) \leq r^{(\delta)}$. Now this situation may continue, because all points inside the zombie cluster $C_3$ of $c_3$ could also be blocked by a zombie center. The idea then is to continue in the same way until we finally find a
zombie cluster $C_l$, for which there is a point $p_l \in C_l$ at a distance greater than $r^{(\delta)}$ from all the remaining centers (i.e., points in $S \setminus \{c_1\}$). Let us consider the following possible scenarios: 

\begin{enumerate}
    \item Assume that there is such a sequence of points $p_j, c_j$ found by the previous process (see Figure~\ref{fig:example}). In this case, the algorithm assigns the point $p_l$ to be the center
    of the cluster $C_l$ and shifts the centers accordingly with respect to the sequence found by the previous process.
    At this point, we remark that the actual action of the algorithm is to
    add the point $p_l$ to the set of centers $S$, and then adjust the indices of the existing centers.
    In this way the recourse of the algorithm is $1$, because the set $S$ changes by at most one point.
    Recall that the recourse does not count the adjustments in the implicit order of the centers.

    \item Otherwise, assume that there is no such sequence as described before. This means that the 
    previous process does not find such a sequence, but
    there is a crucial structural observation to be made for the clusters that participate in this process.
    All these clusters can be converted to regular clusters because all the points scanned by the process 
    are within distance $r^{(\delta)}$ from a center of this process. 
    Therefore we introduce the \emph{Reassigning Operation}, where the algorithm adds every point $p$ to the cluster that has its center
    within distance $r^{(\delta)}$ from $p$, and the algorithm converts every cluster that participated in the previous
    process to a regular cluster. As a result, all the points of the cluster $C_1$ of the deleted center $c_1$, move
    to a different regular cluster. Finally, the new center $c_1$ which forms a new cluster $C_1$ is set to be
    the furthest point from the remaining centers. By choosing $c_1$ like that, combined with the execution of the Decreasing Operation, Invariant~2 is satisfied.
\end{enumerate}

\begin{figure}
    \centering
    \begin{subfigure}{.33\textwidth}
      \centering
      \includegraphics[width=0.9\linewidth]{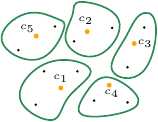}
    \end{subfigure}%
    \begin{subfigure}{.33\textwidth}
      \centering
      \includegraphics[width=0.9\linewidth]{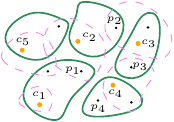}
    \end{subfigure}
    \begin{subfigure}{.33\textwidth}
      \centering
      \includegraphics[width=0.9\linewidth]{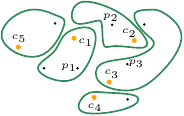}
    \end{subfigure}
     \caption{A $k$-center instance with $k = 5$. The green thick regions depict the clusters of each center. Notice that the clusters are pairwise disjoint and they form a partition of the input.\\
     \underline{Left figure}: The initial state of the instance. The adversary deletes $c_5, c_2, c_1$ one at a time. \\
     \underline{Middle figure}: The state of the instance after the three point deletions. The dashed pink regions depict the areas within a distance of $r^{(\delta)}$ from each center. At this moment the adversary deletes $c_1$, and there is a sequence of points $p_1, c_2, p_2, c_3, p_3, c_4, p_4$, such that $\dist(p_1, c_2) \leq r^{(\delta)}, \dist(p_2, c_3) \leq r^{(\delta)}, \dist(p_3, c_4) \leq r^{(\delta)}, \text{and} \dist(p_4, \{c_2, c_3, c_4, c_5\}) > r^{(\delta)}$.  \\
     \underline{Right figure}: The state of the instance after the adversary deletes $c_1$. Since such a sequence of points $p_j, c_j$ is found, the new $c_4$ is $p_4$, the indices of centers are shifted accordingly, and the clusters are updated respectively.}
    \label{fig:example}
\end{figure}

\paragraph{Satisfying Invariant~3.}
Recall that a cluster $C$ is a regular cluster when the distance between every point in $C$ and its center $c$ is 
at most $r^{(\delta)}$. When the adversary deletes the center $c$, the cluster $C$ becomes a zombie cluster 
and the algorithm replaces the old center $c$ with another point $c'$ which is called the zombie center of the cluster $C$. 
Observe that by the triangle inequality, the distance between any pair of points
in a regular cluster $C$ is at most $2r^{(\delta)}$. 
Therefore, by replacing the deleted (zombie) center $c$ 
with another zombie center in $C$, the factor $\alpha$ in Invariant~3 
increases at most by an extra factor of $2$.

However based on our previous discussion regarding Invariant~2, sometimes an immediate replacement 
for the deleted center point $c$ is not possible, because all points in $C$ are within distance $r^{(\delta)}$ from some remaining zombie center. In this case, the algorithm tries to detect a sequence as described before in the discussion about Invariant~2. The observation here is that by the way the algorithm reassigns the center points, 
in the worst-case the center $c$ becomes a point which used to be a center that was blocking a point $p$ in $C$. In turn, for the updated centers we have that $\dist(p, c) \leq r^{(\delta)}$, and by using the triangle inequality again we can deduce that the factor $\alpha$ in Invariant~3 is increased at most by an extra factor of $3$.
In turn, the approximation ratio is increased at most by an extra factor of $3$.
Similar arguments work also in the case where the cluster $C$ is not regular.

\subsubsection{Point Insertions}
Under point insertions, observe that the invariant that can be immediately violated is Invariant~3.

\paragraph{Satisfying Invariant~3.}
After a point insertion, Invariant~3 can be violated due to the insertion of a
new point at a distance greater than $\alpha r^{(\delta)}$ from all the current centers.
Nonetheless, we react already when the new inserted point is at a distance greater than $r^{(\delta)}$ from the set
of centers $S$. In this scenario, the idea is to increase the extension level to some $\delta'$, and in turn the associated radius to some $r^{(\delta')}$, through the \emph{Increasing Operation} (or \emph{Doubling Operation})
such that all points in $P$ are within distance $r^{(\delta')}$ from the unmodified (yet) set of centers $S$. 
However, we do not want to violate Invariant~2, and as a result this goal may not be feasible. 
Nevertheless, there is an important structural observation to be made for the center points and
the last value of the extension level $\delta$ that does not violate Invariant~2.
If there is a point at a distance greater than $r^{(\delta)}$ from the set $S$ and also the extension level $\delta$ cannot increase anymore without violating Invariant~2, then the minimum pairwise distance of center points in $S$ must
be at most $r^{(\delta)}$.

In this case, the algorithm finds a pair of center points $c_s, c_i \in S$ as follows: $c_s$ is the center of least index in $S$ such that there is another center $c_i$
within distance $r^{(\delta)}$ from $c_s$ (i.e., $\dist(c_s, c_i) \leq r^{(\delta)}$).\footnote{Note that the distance $\dist(c_s, c_i)$ is not necessarily equal to $\min_{x, y \in S,\, x \neq y} \dist(x, y)$. \label{footnote:xy_ci_cs}} Then the points from the cluster $C_i$ of $c_i$ move to the cluster $C_s$ of $c_s$, and the cluster $C_s$ becomes an extended cluster. Moreover by setting the parameter $\beta$ appropriately, we can argue that if the Increasing Operation is called then
only the new inserted point $p^+$ could be at a distance greater than $r^{(\delta)}$ from the set $S \setminus \{c_i\}$. Therefore, by
assigning $p^+$ to be the $i$-th center of the set $S$, Invariant~3 is satisfied.
We remark that the reason the center points $c_s, c_i$ are selected in this way is because
we can argue that the cluster $C_i$ of $c_i$ is a regular cluster, and so the factor $\alpha$ in Invariant~3 (and in turn the approximation ratio) increases at most by an extra factor of $2$.

\paragraph{Point Insertion and Zombie Cluster.}
We have already described the scenarios where the adversary inserts a new point at a distance greater than $r^{(\delta)}$ from all the current centers. Moreover, if the new point is inserted within distance $r^{(\delta)}$ from a regular or an extended center, then the
algorithm does not need to modify the set of centers since the new point can be added to the corresponding cluster. But what if the new point is within distance $r^{(\delta)}$ from a zombie center? If the extension level can increase without violating Invariant~2, 
then by doing so, all invariants are satisfied.

To that end, the more intricate case is when the two following events happen simultaneously:
(1) The new inserted point is at a distance greater than $r^{(\delta)}$ from all the regular and
extended clusters, but it is within distance $r^{(\delta)}$ from a zombie center $c_z$. 
(2) The minimum pairwise distance of center points is at most $r^{(\delta)}$. 
This is a subtle case because in order to control the approximation ratio, we do not want the zombie center to be responsible to serve new points, as then we can argue that the zombie center $c$ of a 
zombie cluster $C$ is close enough to any point in $C$.
Furthermore note that due to event (2), any increase to the extension level would violate Invariant~2 which is undesirable. 

In this case (see Figure~\ref{fig:example_1}), the algorithm chooses the pair of center points $c_s, c_i \in S$ as described before. Next, the cluster $C_s$ of $c_s$ becomes an extended cluster, and the exempted center point $c_i$ is set to be
the point $c_z$. In turn, the area of radius $r^{(\delta)}$ around $c_i$ becomes the new regular cluster $C_i$, and the
remaining points from the cluster $C_z$ of $c_z$ form the new $z$-th zombie cluster.
The next step is to find the new position of the zombie center $c_z$. To do this, we pretend that the point $c_z$
has been deleted by the adversary, and the algorithm proceeds accordingly as described before in the point deletion.

\begin{figure}
    \centering
    \begin{subfigure}{.33\textwidth}
      \centering
      \includegraphics[width=0.9\linewidth]{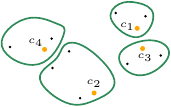}
    \end{subfigure}%
    \begin{subfigure}{.33\textwidth}
      \centering
      \includegraphics[width=0.9\linewidth]{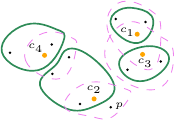}
    \end{subfigure}
    \begin{subfigure}{.33\textwidth}
      \centering
      \includegraphics[width=0.9\linewidth]{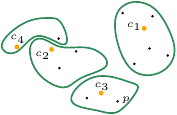}
    \end{subfigure}
     \caption{A $k$-center instance with $k = 4$. The green thick regions depict the clusters of each center. Notice that the clusters are pairwise disjoint and they form a partition of the input. \\
     \underline{Left figure}: The initial state of the instance. \\
     \underline{Middle figure}: The moment when the adversary inserts $p$ at a distance smaller than $r^{(\delta)}$ from the zombie center $c_2$. The dashed pink regions depict the areas within a distance of $r^{(\delta)}$ from each center. \\
     \underline{Right figure}: The state of the instance after the algorithm processes the insertion of $p$. At first, the cluster of $c_1$ becomes an extended cluster,
     the point $c_3$ is exempted from being center, and the new $c_3$ is the old $c_2$. Then the zombie cluster $C_2$ of $c_2$ is updated to consist of points from the old $C_2$ that are at a distance greater than $r^{(\delta)}$ from
     the new $c_3$. Finally, for the relocation of $c_2$ the algorithm proceeds as if the adversary had deleted the old point $c_2$ (without
     actually deleting it, since this point is $c_3$ now).}
    \label{fig:example_1}
\end{figure}

\paragraph{Gonzalez Operation.}
Finally, we describe the main reason why our incremental-only algorithm improves over the $8$-approximation ``doubling algorithm'' by~\cite{CharikarCFM97}.
Note that due to the absence of point deletions, the notion of zombie clusters (which appear when the center of a cluster is deleted), is not needed.
Apart from that, the main difference of our incremental algorithm to our fully dynamic algorithm is the way we maintain the order of the centers, and the way we choose the aforementioned pair of points $c_s, c_i \in S$.

Specifically in our incremental algorithm, we maintain an order of the centers with respect to their
pairwise distances, by running Gonzalez's algorithm~\cite{Gonzalez85} -- the heuristic that repeatedly selects the point farthest from the existing centers as the next center -- on the set $S$ of current centers. Next analogously to the fully dynamic algorithm, the Increasing (Doubling) Operation is performed after the insertion of a point, and then the algorithm chooses the aforementioned pair of centers $c_s, c_i \in S$ as follows: the distance between them is the minimum over all the pairwise distances of center points and $c_i$ has maximum index according to Gonzalez's algorithm. In this scenario, the distance between the new point and the set $S$ is greater than $r^{(\delta)}$, while the minimum pairwise distance among center points is at most $r^{(\delta)}$. Therefore, there exist $k + 1$  points whose pairwise distances are at least $\dist(c_s, c_i)$, and so $\dist(c_s, c_i) \leq 2R^*$. Using the properties of Gonzalez's algorithm, we show that actually any center point that has been exempted from being a center is within distance $2R^*$ from its closest center point. Also by Invariant~2, we can deduce that $r^{(\delta)} \leq 4R^*$. As a result, every point is within distance $6R^*$ from its closest center, and thus the $6$-approximation follows.

\section{Preliminaries} \label{sec:prel}
Let $P$ be a point set from an arbitrary metric space $(\mathcal{X}, \dist)$. The notation $(\mathcal{X}, \dist)$ is omitted
whenever it is clear from the context.
For a point $p \in P$ and a subset of points $S \subseteq P$,
the \emph{distance} $\dist(p, S)$ between $p$ and $S$ is equal to $\min_{q \in S} \dist(p, q)$, namely
the distance from $p$ to its closest point in $S$.

\begin{Definition}[k-center clustering problem]
    Given a point set $P$ from an arbitrary metric space and an integer $k \geq 1$, 
    the goal is to output a subset of points $S \subseteq P$ of size at most~$k$,
    such that the value $\max_{p \in P} \dist(p, S)$ is minimized.
\end{Definition}

Consider a \emph{$k$-center instance} $(P, k)$, which is the pair of the given input point set and the
integer parameter $k$. Let $S$ be a feasible solution for the instance (i.e., a subset of at most $k$ points).
The points of the set $S$ are called \emph{centers} or \emph{center points}. 
The \emph{cost} (also referred to as \emph{radius}) of a feasible solution $S$ is equal to $\max_{p \in P} \dist(p, S)$.
We denote by $R^* \coloneqq \min_{|S| \leq k} \max_{p \in P} \dist(p, S)$  the optimal cost (also referred to as optimal radius) of the given instance, and by $S^*$ any optimal solution with cost $R^*$ (i.e., $\max_{p \in P} \dist(p, S^*) = R^*$).

In the \emph{dynamic setting}, the point set $P$ is subject to point updates sent by an adaptive adversary.
The number of changes made to the set of centers $S$ after a point update (i.e., point insertion/point deletion) is the \emph{recourse}. More formally, assuming that the set of centers is always of size $k$ (which is guaranteed by our algorithms), the recourse is half the size of the symmetric difference between the set of centers before the update with the set of centers after the update. 
\subsection{Static $k$-Center Algorithms}
We present here some relevant definitions and results about two well-known
algorithms that we utilize later in our dynamic consistent $k$-center algorithms.
These two algorithms are the \emph{Gonzalez algorithm}~\cite{Gonzalez85} and the algorithm by Hochbaum and Shmoys~\cite{HochbaumS86},
and both of them achieve a $2$-approximation for the $k$-center clustering problem in the static setting.

\begin{Definition}[Distance-$r$ Independent Set]
    Given a point set $P$ from an arbitrary metric space and a parameter $r > 0$, 
    a \emph{distance-$r$ independent set} $S$ in $P$ is a subset of points such that
    the pairwise distances of points in $S$ are strictly more than $r$.
    In other words, it holds that $\min_{p, q \in S} \dist(p, q) > r$.
\end{Definition}

\begin{Definition}[Distance-$r$ Dominating Set]
    Given a point set $P$ from an arbitrary metric space and a parameter $r > 0$, 
    a \emph{distance-$r$ dominating set} $S$ in $P$ is a subset of points such that
    the distance between $S$ and every point in $P$ is at most $r$.
    In other words, it holds that $\max_{p \in P} \dist(p, S) \leq r$.
\end{Definition} 

The following results can be attributed to Hochbaum and Shmoys~\cite{HochbaumS86}
who presented a reduction from the $k$-center clustering problem to the maximal \distrIS problem.
In the analysis of our algorithms, we use the next lemmas to argue that a \distrIS of size at least $k + 1$ implies that our maintained solution has constant approximation ratio.

\begin{lemma}[\cite{HochbaumS86}] \label{lem:2r_atmostk}
    Consider a $k$-center instance $(P, k)$ with optimal radius $R^*$.
    Then for every $r \geq 2R^*$, it holds that any \distrIS is of size at most~$k$.
\end{lemma}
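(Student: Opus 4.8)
The plan is the standard pigeonhole reduction from $k$-center to independent sets. Fix some $r \ge 2R^*$ and let $S$ be an arbitrary \distrIS in $P$; the goal is to show $|S| \le k$.

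First I would bring in an optimal solution $S^*$ of the instance $(P,k)$, which by definition satisfies $|S^*| \le k$ and $\max_{p \in P} \dist(p, S^*) = R^*$. In particular every point of $P$ lies within distance $R^*$ of $S^*$, so for each $p \in S$ I can fix a point $f(p) \in S^*$ with $\dist(p, f(p)) = \dist(p, S^*) \le R^*$. This defines a map $f \colon S \to S^*$, and the idea is that this map cannot collapse two distinct points of $S$ onto the same optimal center.

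The key step is to verify that $f$ is injective. Suppose toward a contradiction that $p, q \in S$ are distinct but $f(p) = f(q) =: c^*$. Then by the triangle inequality $\dist(p, q) \le \dist(p, c^*) + \dist(c^*, q) \le R^* + R^* = 2R^* \le r$, which contradicts the defining property of a \distrIS, namely that all pairwise distances are strictly greater than $r$. Hence $f$ is injective, and therefore $|S| \le |S^*| \le k$, as desired.

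There is essentially no obstacle in this argument; the only point to be careful about is that the hypothesis $r \ge 2R^*$ is used precisely to convert the upper bound $\dist(p,q) \le 2R^*$ into a contradiction with the strict lower bound $\dist(p,q) > r$ coming from the definition of a \distrIS, so the role of the factor $2$ (and of the strictness) should be made explicit.
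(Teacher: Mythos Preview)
Your proof is correct and is the standard pigeonhole argument. The paper does not actually give its own proof of this lemma (it simply cites \cite{HochbaumS86}); the same idea does appear in the paper's proof of the closely related Lemma~\ref{lem:k+1_atleastr_lesseq2R}, where the pigeonhole principle and triangle inequality are applied in essentially the way you describe.
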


\begin{corollary} \label{cor:r_less2R}
    Consider a $k$-center instance $(P, k)$ with optimal radius $R^*$, and
    assume that there exists a \distrIS of size at least $k + 1$. Then
    it must be true that $r < 2R^*$.
\end{corollary}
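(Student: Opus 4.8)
The plan is to derive this immediately from Lemma~\ref{lem:2r_atmostk} by contraposition. Suppose for contradiction that $r \geq 2R^*$. Then Lemma~\ref{lem:2r_atmostk} applies directly to the instance $(P,k)$ and tells us that \emph{every} \distrIS in $P$ has size at most $k$. But this contradicts the hypothesis that there exists a \distrIS of size at least $k+1$. Hence the assumption $r \geq 2R^*$ is untenable, and we conclude $r < 2R^*$.

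There is essentially no obstacle here: the corollary is the logical contrapositive of the lemma, specialized to the threshold $k+1$. The only thing to be slightly careful about is that Lemma~\ref{lem:2r_atmostk} is stated as a universal statement over all radii $r \geq 2R^*$ and all distance-$r$ independent sets, so one should phrase the contradiction as ``the existence of a size-$(k+1)$ \distrIS witnesses that the conclusion of Lemma~\ref{lem:2r_atmostk} fails for this particular $r$, which is impossible when $r \geq 2R^*$.'' No additional properties of the metric, of $S^*$, or of the independent set beyond its cardinality are needed.

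If one preferred a self-contained argument rather than invoking the lemma, the alternative would be to assign each of the $k+1$ points of the \distrIS to its nearest center in an optimal solution $S^*$; by pigeonhole two of them share a center, so by the triangle inequality their distance is at most $2R^*$, contradicting the fact that pairwise distances in the \distrIS exceed $r \geq 2R^*$. But since Lemma~\ref{lem:2r_atmostk} is already available, the one-line contrapositive is the cleanest route and is what I would write.
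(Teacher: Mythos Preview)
Your proof is correct and matches the paper's approach exactly: the paper's proof is the one-line contrapositive, supposing $r \geq 2R^*$ and invoking Lemma~\ref{lem:2r_atmostk} to obtain a contradiction.
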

\begin{proof}
    Supposing that $r \geq 2R^*$, leads to a contradiction due to Lemma~\ref{lem:2r_atmostk}, and so the claim follows.
\end{proof}

\begin{lemma} \label{lem:k+1_atleastr_lesseq2R}
    Consider a $k$-center instance $(P, k)$ with optimal radius $R^*$, and let $S$
    be a set of size at least $k + 1$ such that the pairwise distances of points in $S$ are at least $r$.
    Then it must be true that $r \leq 2R^*$.
\end{lemma}
\begin{proof}
    Consider an optimal solution $S^*$ of size $k$ with radius $R^*$. By the pigeonhole principle there must exist
    a center $c^* \in S^*$ and two different points $p, q \in S$ such that $\dist(p, c^*) \leq R^*$ and $\dist(q, c^*) \leq R^*$.
    By the triangle inequality, we have that $\dist(p, c^*) + \dist(q, c^*) \geq \dist(p, q) \geq r$.
    Thus, without loss of generality we can assume that $\dist(p, c^*) \geq \frac{r}{2}$. In turn, this implies that $R^* \geq \frac{r}{2}$, as needed.
\end{proof}

The next theorem is the algorithm by Hochbaum and Shmoys~\cite{HochbaumS86} which
says that one can compute a $2$-approximate solution for the $k$-center clustering problem
via $O(\log |P|)$ invocations of an algorithm that computes a maximal \distrIS.

\begin{theorem}[\cite{HochbaumS86}] \label{th:2appr_alg}
    Consider a $k$-center instance $(P, k)$, where $P$ is a point set from an arbitrary metric space,
    and let $D$ be the sorted set of the ${|P| \choose 2}$ pairwise distances of points in $P$. 
    By applying binary search on the set $D$ and calling a maximal \distrIS algorithm $\mathcal{A}$ at each iteration, we can find
    a $2$-approximate solution for the $k$-center instance. The running time of the algorithm is $O(T_{\mathcal{A}} \cdot \log |P|)$,
    where $T_{\mathcal{A}}$ is the running time of $\mathcal{A}$.
\end{theorem}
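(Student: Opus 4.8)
The plan is to perform binary search over the sorted list $D=(d_1<\dots<d_m)$ of pairwise distances, invoking $\mathcal A$ at each probed value $r$ to obtain a maximal \distrIS $S_r:=\mathcal A(P,r)$, using $|S_r|$ to decide whether to move to smaller or larger distances, and finally outputting one of these $S_r$'s. First I would dispose of the trivial case $|P|\le k$ by returning $P$ (cost $0$), so assume $|P|>k$; then by pigeonhole two points of $P$ lie within $R^*$ of a common optimal center, so $d_1\le 2R^*$. Two elementary facts drive each probe: (i) if $|S_r|\le k$, then maximality of $S_r$ makes it a \distrDS, hence a feasible solution of cost at most $r$; and (ii) if $|S_r|\ge k+1$, then Corollary~\ref{cor:r_less2R} gives $r<2R^*$. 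So it would suffice to reach a value $r\le 2R^*$ with $|S_r|\le k$.

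The key observation I would exploit is that the property ``all pairwise distances exceed $r$'' — and hence the very notion of a maximal \distrIS — is constant as $r$ ranges over a gap $[d_i,d_{i+1})$ between consecutive elements of $D$. Let $d_j$ be the largest element of $D$ with $d_j\le 2R^*$ (well defined, since $d_1\le 2R^*$). Since $d_j\le 2R^*<d_{j+1}$, a set is a maximal distance-$d_j$ independent set iff it is a maximal distance-$2R^*$ independent set, so $S_{d_j}$ is in particular a distance-$2R^*$ independent set and Lemma~\ref{lem:2r_atmostk} yields $|S_{d_j}|\le k$; the same lemma applied directly gives $|S_{d_t}|\le k$ for every $d_t\in D$ with $d_t>2R^*$. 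Consequently every ``heavy'' probe value, i.e.\ one with $|S_r|\ge k+1$, is at most $d_j\le 2R^*$.

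With this in hand I would run a standard binary search for a down-crossing: the sizes are $>k$ at the degenerate ``no-constraint'' level (all $|P|>k$ points) and equal to $1\le k$ at $r=d_m$, so in $O(\log m)=O(\log|P|)$ calls to $\mathcal A$ the search reaches consecutive indices with $|S_{d_i}|\ge k+1$ and $|S_{d_{i+1}}|\le k$. Since $S_{d_i}$ is heavy we have $d_i\le d_j$ by the previous paragraph, and since $|S_{d_j}|\le k$ while $|S_{d_i}|\ge k+1$ we get $d_i\ne d_j$, hence $d_i<d_j$ and therefore $d_{i+1}\le d_j\le 2R^*$; outputting $S_{d_{i+1}}$ then gives, by fact (i), a feasible solution of cost at most $d_{i+1}\le 2R^*$, i.e.\ a $2$-approximation, in total time $O(T_{\mathcal A}\cdot\log|P|)$. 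The delicate point — and the step I would be most careful about — is exactly this last claim that whatever value the search lands on is at most $2R^*$: for an arbitrary subroutine $\mathcal A$, $|S_r|$ need not be monotone in $r$, so one genuinely needs the gap-constancy observation together with Lemma~\ref{lem:2r_atmostk} to rule out heavy values above $2R^*$ and thereby certify that any down-crossing the search finds is safe.
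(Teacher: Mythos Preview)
Your proof is correct and follows the same approach as the paper: binary search over the sorted pairwise distances $D$, querying a maximal \distrIS at each probe and branching on whether its size exceeds $k$, then returning a set of size at most $k$ at a value $r\le 2R^*$. The paper's own proof is terser: it simply defines $r_{\min}$ as the smallest $r\in D$ with $|S_r|\le k$, asserts that binary search finds it, and argues $r_{\min}\le 2R^*$ ``since the value $2R^*$ is such a candidate'' (even though $2R^*$ need not lie in $D$). You are strictly more careful on both of these points: you explicitly confront the fact that for an arbitrary subroutine $\mathcal A$ the sizes $|S_r|$ need not be monotone in $r$, and you use the gap-constancy observation (the \distrIS predicate is constant on each interval $[d_i,d_{i+1})$) together with Lemma~\ref{lem:2r_atmostk} to show that \emph{every} heavy probe value lies strictly below $d_j\le 2R^*$, so that any down-crossing the binary search locates already suffices. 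This extra rigor is a genuine improvement over the paper's sketch, not a different method.
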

\begin{proof}
    Let $r$ be the value of a binary search iteration on the set $D$.
    At each binary search iteration, we execute an algorithm $\mathcal{A}$ that computes a maximal \distrIS in $P$, and let $S$ be the
    output of $\mathcal{A}$. If the size of $S$ is greater than $k$, then the binary search moves to larger values in the set $D$,
    otherwise the binary search moves to smaller values in the set $D$.
    
    Let $r_\text{min}$ be the smallest $r$ such that $\mathcal{A}$
    returns a maximal \distrIS $S_\text{min}$ of size at most~$k$. By Lemma~\ref{lem:2r_atmostk} we have that
    that $r_\text{min} \leq 2R^*$, since the value $2R^*$ is such a candidate.
    Notice that by the maximality of $S_\text{min}$, we can infer that $S_\text{min}$ is also a \distDS{$r_\text{min}$}. In turn, this implies that
    the returned set $S_\text{min}$ is a $2$-approximate solution.
    The value of $r_\text{min}$ is found after $O(\log |D|) = O(\log |P|)$ executions of $\mathcal{A}$.
\end{proof}

\section{Fully Dynamic Consistent $k$-Center Clustering} \label{sec:fully}
In the fully dynamic setting, we are given as input a point set $P$ from an arbitrary metric space which undergoes point insertions
and point deletions. The goal is to fully dynamically maintain a feasible solution for the $k$-center clustering problem with 
small approximation ratio and low recourse. In this section, we develop a deterministic fully dynamic algorithm that 
maintains a $50$-approximate solution for the $k$-center clustering problem
with a worst-case recourse of $1$ per update, as demonstrated in the following theorem.
Recall that by $R^*$ we denote the current optimal radius of the given instance.

\begin{restatable}{theorem}{fullyconsist} \label{th:fully_consist}
    There is a deterministic fully dynamic algorithm that, given a point set $P$ from an arbitrary metric space 
    subject to point insertions and point deletions and an integer $k \geq 1$,
    maintains a subset of points $S \subseteq P$ such that: 
    \vspace{-0.3em}
    \begin{itemize}
        \setlength\itemsep{-0.3em}
        \item The set $S$ is a \distDS{$50R^*$} in $P$ of size $k$.
        \item The set $S$ changes by at most one point per update.
    \end{itemize}
\end{restatable}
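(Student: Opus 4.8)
The algorithm will explicitly maintain a set $S\subseteq P$ of exactly $k$ centers, an integer \emph{extension level} $\delta$ with an associated radius $r^{(\delta)}$ satisfying $r^{(\delta)}=\beta r^{(\delta-1)}$ for a fixed constant $\beta>1$, and an internal partition of $P$ into $k$ \emph{clusters}, one per center, each tagged \emph{regular}, \emph{extended}, or \emph{zombie}. The approximation analysis is decoupled from the update handling through the three invariants of the technical overview: (i) $|S|=k$; (ii) there is $p_S\in P\setminus S$ such that $S\cup\{p_S\}$ is a distance-$r^{(\delta-1)}$ independent set; (iii) $S$ is a distance-$(\alpha r^{(\delta)})$ dominating set, for a fixed constant $\alpha\ge 1$. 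Invariants (i) and (ii) give a $(k{+}1)$-point set with pairwise distances exceeding $r^{(\delta-1)}$, so Corollary~\ref{cor:r_less2R} yields $r^{(\delta-1)}<2R^*$, hence $r^{(\delta)}<2\beta R^*$; with (iii) the set $S$ then dominates $P$ within $\alpha r^{(\delta)}<2\alpha\beta R^*$. Choosing the constants with $2\alpha\beta\le 50$ proves the approximation claim, so the whole proof reduces to giving update procedures that (a) restore (i)--(iii) and (b) change $S$ by at most one point per update, where index rearrangements and internal re-partitionings of clusters do not count toward recourse.

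\textbf{Auxiliary operations.}
I would first isolate three subroutines used as building blocks and prove that each preserves the invariants. The \emph{Regulating (Decreasing) Operation}, invoked at the end of \emph{every} update, lowers $\delta$ to the least $\delta'$ for which $S$ is still a distance-$r^{(\delta')}$ dominating set; then $S$ is not a distance-$r^{(\delta'-1)}$ dominating set, so some point witnesses invariant (ii), while (iii) holds by construction. The \emph{Doubling (Increasing) Operation} raises $\delta$ as far as invariant (ii) allows whenever an inserted point lies beyond $r^{(\delta)}$ of $S$; the key structural lemma, proved from the definition of the radii and an appropriate choice of $\beta$, is that if $\delta$ cannot be raised further then the minimum pairwise distance among centers is at most $r^{(\delta)}$, which unlocks a ``merge two nearby centers'' move that frees one slot of $S$ for the new point. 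The \emph{Reassigning Operation} rebuilds the clusters scanned during a deletion as regular clusters. Throughout, the algorithm maintains the additional invariant that every zombie center is at distance $>r^{(\delta)}$ from all other centers; I would prove this is preserved by each operation, since it is exactly what guarantees a zombie cluster never receives new points and is never absorbed into an extended cluster, keeping the degradation of $\alpha$ a \emph{constant} independent of the update sequence.

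\textbf{Handling updates.}
For an insertion of $p^+$: if $p^+$ is within $r^{(\delta)}$ of a regular or extended center, add it to that cluster and run Regulating (recourse $0$); otherwise try Doubling, which, if it succeeds, again gives recourse $0$. If not, pick the least-index center $c_s$ having another center $c_i$ within $r^{(\delta)}$; using the structural lemma, the choice of $\beta$, and that $c_s$ has least index (hence $C_i$ is regular), argue that after merging $C_i$ into $C_s$ only $p^+$ can lie beyond $r^{(\delta)}$ of $S\setminus\{c_i\}$, install $p^+$ as the $i$-th center, handle the sub-case where $p^+$ was near a zombie center $c_z$ by treating $c_z$ as if deleted, and run Regulating (recourse $1$, with $\alpha$ growing by at most a factor $2$ from the regular-cluster merge). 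For a deletion: deleting a non-center or the witness $p_S$ is repaired by Regulating alone; deleting a center $c_i$ is the crux. First try to promote a point of $C_i$ at distance $>r^{(\delta-1)}$ from $S\setminus\{c_i\}$; if impossible, move every $C_i$-point blocked by a non-zombie center into that cluster, then run the iterative search producing a chain $p_1,c_2,p_2,c_3,\dots$ with $\dist(p_j,c_{j+1})\le r^{(\delta)}$ where each $c_{j+1}$ is a zombie center blocking $p_j$. Either the chain ends at some $p_l$ far from all remaining centers --- then \emph{add} $p_l$ to $S$ and shift indices along the chain (recourse $1$), with $\alpha$ degrading by at most a factor $3$ via two triangle inequalities --- or it does not, in which case Reassigning converts all scanned clusters to regular, the points of $C_i$ migrate out, and the farthest remaining point becomes the new $c_i$ (recourse $1$); in every branch a final Regulating restores (ii).

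\textbf{Main obstacle.}
The delicate part is the deletion-of-a-center chain argument. One must show the iterative process terminates --- it scans a strictly growing set of distinct zombie clusters, of which there are fewer than $k$ --- that in both terminating branches exactly one point enters or leaves $S$, so the recourse is genuinely $1$ once re-indexing and internal re-partitioning are excluded, and, most delicately, that the distance from any displaced point to its newly assigned center is bounded by a \emph{fixed} constant times $r^{(\delta)}$ uniformly over the whole update sequence. This last point is where the zombie-separation invariant and the least-index / regular-cluster choices of $c_s,c_i$ are essential: they prevent compounding of the approximation error. The remaining work is bookkeeping --- checking that these branch-specific choices are mutually consistent, that the cluster tags are updated correctly, and that Regulating can always be invoked at the end without re-violating any invariant --- together with pinning down concrete constants $\alpha,\beta$ with $2\alpha\beta\le 50$.
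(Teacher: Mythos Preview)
Your proposal follows essentially the paper's approach: the same three invariants, the same Decreasing/Increasing/Reassigning subroutines, the regular/extended/zombie cluster tagging with the zombie-separation invariant, the least-index choice of $c_s$ for merges, and the chain argument for center deletions; the paper instantiates your constants as $\alpha=\beta=5$. One small slip to fix when you carry out the details: you propose promoting a replacement point of $C_i$ at distance $>r^{(\delta-1)}$ from the remaining centers, but this threshold must be $>r^{(\delta)}$ (as the paper uses) or the new zombie center can violate the very zombie-separation invariant you rely on.
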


Our algorithm maintains a set $S = \{c_1, \ldots, c_k\}$ of $k$ \emph{center points} in some \emph{order}
and a value $\delta$ called \emph{extension level}. Further, the algorithm associates a \emph{radius} $r^{(\delta)}$ with 
each value of $\delta$. Each center point $c_i$ is the unique center of the \emph{cluster} $C_i$ which
contains exactly the points served by $c_i$. In our algorithm, every point belongs to exactly
one cluster (i.e., every point is served by exactly one center), and so
the idea is that the center $c_i$ is responsible to serve every point inside the cluster $C_i$.
A cluster $C_i$ is categorized as either a \emph{regular cluster}, an \emph{extended cluster}, or a \emph{zombie cluster}.
The corresponding center is called \emph{regular center} (\emph{regular center point}), \emph{extended center} (\emph{extended center point}), 
or \emph{zombie center} (\emph{zombie center point}), respectively. 
We refer to this as the \emph{state} of the cluster and its center.
Each cluster $C_i$ that belongs in the
union of regular and extended clusters is called \emph{non-zombie cluster}, and its corresponding center is called \emph{non-zombie center} (\emph{non-zombie center point}). 
We define the sets $S_\text{reg}$, $S_\text{ext}$, $S_\text{zmb}$, and $S_\text{nz} = S_\text{reg} \cup S_\text{ext}$
as the center points of $S$ that are regular, extended, zombie, and non-zombie centers respectively.

A regular cluster $C_i$ contains only points that are
within distance $r^{(\delta)}$ from the corresponding regular center $c_i$. However, an extended or a zombie cluster $C_i$ can also contain points that are at distances greater than $r^{(\delta)}$ from their corresponding center $c_i$. 
We remark that the implicit order of center points in $S$ may change
after a point update, but we guarantee that the set $S$ itself changes by at most one point per update (i.e., the worst-case recourse is $1$).
Throughout the algorithm we want to satisfy the following three invariants:
\begin{enumerate}[label=\textbf{Invariant \arabic*}, itemindent=3em, itemsep=-0.2em]
    \item\label{inv_1_fully} \hspace{-0.4em}: The size of the set $S$ is exactly $k$.
    \item\label{inv_2_fully}\hspace{-0.4em}: There is a point $p_S \in P \setminus S$ such that the set $S \cup \{p_S\}$ is a \distIS{$r^{(\delta-1)}$}.
    \item\label{inv_3_fully}\hspace{-0.4em}: The set $S$ is a \distDS{$5r^{(\delta)}$}.
\end{enumerate}

\subsection{Fully Dynamic Algorithm}
At first we describe the preprocessing phase of the algorithm, and afterwards we describe
the way the algorithm handles point updates. A pseudocode of the algorithm is provided in Algorithm~\ref{alg:fully_consist}.

\subsubsection{Preprocessing Phase}
In the preprocessing phase, we run the algorithm of Theorem~\ref{th:2appr_alg} to obtain a $2$-approximate solution $S$, and let
$r \coloneqq \max_{p \in P} \dist(p, S)$. Next, we repeat the following process with the goal to satisfy all the three invariants. 
While the set $S$ has size strictly less than
$k$, we add to $S$ an arbitrary point $c$ that is at a distance greater than $\frac{r}{5}$ from the current centers (i.e., $\dist(c, S) > \frac{r}{5}$).
If there is no such point, the value of $r$ is divided by five, and the process continues until the size of $S$ reaches $k$. 
Once the size of $S$ becomes $k$, the value of $r$ continues to be divided by five as long as the set $S$ is a \distDS{$\frac{r}{5}$}.

In the special case where the point set $P$ contains fewer than $k$ points, we assume that the preprocessing phase is ongoing and the
set $S$ consists of all points of $P$. Notice that in this scenario the three invariants may not be satisfied. However, this is not an issue as every point serves as a center. Eventually, when the size of $P$ becomes at least $k + 1$, we execute once the previous steps of the preprocessing phase.

The value of the associated radius $r^{(\delta')}$ for each $\delta'$ is set to $5^{\delta'} \cdot r$, where $r$ is the final value after the preprocessing phase has finished. Initially, the extension level $\delta$ of the algorithm is set to zero, all the center points of $S$ are regular centers, and all the clusters are regular clusters. Also, each cluster is set to contain points in $P$ that are within distance $r^{(0)}$ from its center, such that the clusters
form a partition of the point set $P$.

\paragraph{Analysis of the preprocessing phase.}
After the preprocessing phase has finished, the size of $S$ is $k$ and the set $S$ is a \distDS{$r^{(\delta)}$}, 
where $\delta$ is the extension level which is initially zero.
Additionally, by construction there exists a point $p_S \in P \setminus S$ such that the set $S \cup \{p_S\}$
is a \distIS{$r^{(\delta-1)}$}. Hence, in the beginning all the three invariants are satisfied.
Combining \ref{inv_1_fully} with \ref{inv_2_fully}, we can infer that there exists a \distIS{$r^{(\delta-1)}$} of size at least $k + 1$.
Thus it holds that $r^{(\delta-1)} < 2R^*$ by Corollary~\ref{cor:r_less2R}, and since $r^{(\delta)} = 5r^{(\delta-1)}$ by construction, it follows that the set $S$ is initially a $10$-approximate solution.\footnote{Actually, the set $S$ is initially a $2$-approximate solution, because we run at first the $2$-approximation algorithm of Theorem~\ref{th:2appr_alg} and the addition of more points to $S$ can only reduce its approximation ratio. However, we use this simplistic analysis as a warm-up to demonstrate the utility of the invariants.}

\subsubsection{Point Updates}
Let $\delta$ be the current extension level of the algorithm, and recall that $r^{(\delta)}$ is the associated radius. Under point updates, the algorithm proceeds as follows.

\paragraph{Point deletion.} 
Consider the deletion of a point $p$ from the point set $P$, and let $P \coloneqq P \setminus \{p\}$. If $p$ is not a center point (i.e., $p \notin S$),
then the point $p$ is removed from the cluster it belongs to, and 
the algorithm just calls the Decreasing Operation which is described below. Note that the set $S$ remains unmodified and the recourse is zero.
Otherwise assume that the deleted point $p$ is the $i$-th center point of $S$ (i.e., $p \in S$ and specifically $p = c_i$).
The algorithm removes the point $p$ from the set $S$ and from the cluster $C_i$ it belongs to.
Since the center $c_i$ of $C_i$ is deleted, the cluster $C_i$ becomes a zombie cluster
(if the deleted point $c_i$ is a zombie center, then its corresponding cluster $C_i$ is already a zombie cluster). 
Next, the algorithm calls the Decreasing Operation\footnote{The Decreasing Operation is called at this moment in order to ensure that all the points of the zombie cluster $C_i$ that are within distance $r^{(\delta)}$
from a non-zombie center, move to the corresponding non-zombie cluster.} and then continues based on the two following cases:

\begin{enumerate}[label=C\arabic*.]
    \item If there is another point $c'$ in $C_i$ at a distance greater than $r^{(\delta)}$ from the remaining centers of $S$ (i.e., if $\exists c' \in C_i$ such that $\dist(c', S) > r^{(\delta)}$),\footnote{Recall that at this moment $c_i \notin S$ and $c_i \notin C_i$. Moreover, all points initially in $C_i$ that are within distance $r^{(
    \delta)}$ from a non-zombie center have been removed from $C_i$ and added to the corresponding non-zombie cluster. \label{footnote:c_i_del_decr}} then the point 
    $c'$ is assigned as the center of the zombie cluster $C_i$. In particular the point $c'$ is added to~$S$, and as the point $c'$ is now the $i$-th center point of the ordered set $S$, the algorithm sets $c_i \coloneqq c'$ and
    the point $c_i = c'$ becomes the zombie center of the zombie cluster $C_i$. 
    The recourse in this case is $1$.

    \item Otherwise, every point $c' \in C_i$ is within distance $r^{(\delta)}$ from the remaining centers of $S$. In this case, the algorithm tries to detect a sequence of $2l-1$ points of the following form:
    \begin{equation} \label{eq:seq_for_replac_fully}
        p_{t_1}, c_{t_2}, p_{t_2}, \;\ldots\;, c_{t_l}, p_{t_l}, \tag{seq}
    \end{equation}    
    where $t_1, \dots, t_l$ are indices different from each other,
    $t_1 = i, p_{t_1} \in C_i$, for every $j: 2 \leq j \leq l$ the points $p_{t_j}, c_{t_j}$ satisfy the following conditions:
    \begin{itemize}
        \item $c_{t_j} \in S \cap S_\text{zmb}$ is a zombie center such that $\dist(p_{t_{j-1}}, c_{t_j}) \leq r^{(\delta)}$,\footref{footnote:c_i_del_decr}

        \item $t_j$ is the index of $c_{t_j}$ in the ordered set $S$,

        \item $p_{t_j} \in C_{t_j}$ such that $\dist(p_{t_j}, c_{t_j}) > r^{(\delta)}$, \antonis{Is $\dist(p_{t_j}, c_{t_j}) > r^{(\delta)}$ necessary?}
    \end{itemize}
    and $\dist(p_{t_l}, S) > r^{(\delta)}$ for the last index $t_l$.

    \begin{enumerate}[label=C2\alph*.]
        \item If the algorithm finds such a sequence, then the point $p_{t_l}$ is assigned as the center of the zombie cluster $C_{t_l}$,
        and the indices in the ordered set $S$ are shifted to the right with respect to the sequence (see Figure~\ref{fig:example}).
        In particular the point $p_{t_l}$ is added to $S$, and let $q_{t_j}$ be temporarily the center point $c_{t_j}$ 
        for every $j: 2 \leq j \leq l$. Next, the algorithm sets $c_{t_l}$ to $p_{t_l}$, and $c_{t_j}$ to $q_{t_{j+1}}$ for every $j: 1 \leq j \leq l - 1$.
        The clusters are updated respectively, namely the point $q_{t_j}$ is removed from $C_{t_j}$ and the center point $c_{t_{j-1}}$ is added to $C_{t_{j-1}}$ for every $j: 2 \leq j \leq l$,\footnote{Note that actually the center point $c_{t_{j-1}}$ is now the point $q_{t_j}$.} 
        and note that the new center point $c_{t_l}$ is already in $C_{t_l}$.\footnote{$c_{t_l}$ belongs to $C_{t_l}$
        because the algorithm sets $c_{t_l}$ to $p_{t_l}$, and $p_{t_l}$ belongs to $C_{t_l}$ based on the third condition of~(\ref{eq:seq_for_replac_fully}).}
        After the change of the order, for all $j: 1 \leq j \leq l$ the center point $c_{t_j}$
        is the zombie center of the zombie cluster $C_{t_j}$. Notice that this operation affects only the order 
        of the centers in $S$ and not the set $S$ itself, which means that the recourse in this case is $1$.

        \item Otherwise, no such sequence exists that satisfies~(\ref{eq:seq_for_replac_fully}). In this case, the algorithm first calls the Reassigning Operation and then updates
        the $i$-th cluster, both described as follows: 

        \vspace{-0.5em}
        \subparagraph{Reassigning Operation.} \antonis{check that there are no repetitions and it runs in polynomial time.}
        Let $Q$ be a queue initially containing the points of the $i$-th cluster $C_i$.\footref{footnote:c_i_del_decr}
        During the Reassigning Operation, the following steps are performed iteratively as long as the queue $Q$ is not empty:
        \begin{itemize}
            \item First, the algorithm picks a point $q \in Q$ and finds a center point $c_{j'} \in S$ such that $\dist(q, c_{j'}) \leq r^{(\delta)}$.\footref{footnote:c_i_del_decr} Such a center point $c_{j'}$ must exist, since there is no sequence satisfying~(\ref{eq:seq_for_replac_fully}). Let $C_j$ be the cluster that contains the point $q$.

            \item The algorithm then removes the point $q$ from $C_j$ and $Q$, and adds $q$ to the cluster $C_{j'}$. Additionally, the cluster $C_{j'}$ becomes a regular cluster and its center $c_{j'}$ becomes a regular center.

            \item Next, all points in $C_{j'}$ that are at distance greater than $r^{(\delta)}$ from their center $c_{j'}$ are added to the queue $Q$ (in~\cref{algline:add_farpnt_to_Q}).
        \end{itemize}
        If the updated queue $Q$ is now empty then the Reassigning Operation ends. Otherwise, it is repeated using the updated queue~$Q$.
        
        \vspace{0.5em}
        Once the Reassigning Operation terminates, the algorithm finds a point $c'$ at maximum distance from the set 
        $S$ of the remaining centers, and builds the new $i$-th regular cluster with $c'$ as its center. In particular the point $c'$ is added to $S$, and as the point $c'$ is now the $i$-th center point of the ordered set $S$, the algorithm sets $c_i \coloneqq c'$ and the cluster $C_i$ is updated to contain 
        initially only the center point $c_i = c'$. The cluster $C_i$ is designated as a regular cluster and its center point $c_i$ as a regular center. The recourse in this subcase is $1$.
    \end{enumerate}
\end{enumerate}

\paragraph{Point insertion.}
Consider the insertion of a point $p^+$ into the point set $P$, and let $P \coloneqq P \cup \{p^+\}$. 
If $p^+$ is within distance $r^{(\delta)}$ from a non-zombie center point $c_j$ of $S$, then the point $p^+$ is added
to the corresponding non-zombie cluster $C_j$, and the algorithm just calls the Decreasing Operation which is described below.
Note that the set $S$ remains unmodified and the recourse is zero. 
Otherwise if the inserted point $p^+$ is at a distance greater than $r^{(\delta)}$
from all the non-zombie centers, then the algorithm continues based on the two following cases:

\begin{enumerate}[label=C\arabic*.]
    \setcounter{enumi}{2}
    \item If there exist two different center points such that the distance between them is at most $r^{(\delta)}$ 
    (i.e., $\min_{x, y \in S,\, x \neq y} \dist(x, y) \leq r^{(\delta)}$), 
    then let $c_s \in S$ be the center point of least index which is within distance $r^{(\delta)}$ from a different center point $c_i \in S$.\footref{footnote:xy_ci_cs}
    That is, the center points $c_s, c_i \in S$ satisfy the following conditions:
    \begin{equation} \label{choose_cs_ci_fully}
        \dist(c_s, c_i) \leq r^{(\delta)}, \, s\text{ is the minimum possible index in $S = \{c_1, \ldots, c_k\}$, and $s < i$.} \tag{ext}
    \end{equation}
    The algorithm exempts the center point $c_i$ from being a center and
    extends the cluster $C_s$. Specifically, the point $c_i$ is removed from the set $S$,
    which means that the corresponding cluster $C_i$ of $c_i$ loses its center.
    For that reason, the algorithm adds all the points inside the cluster $C_i$ to the cluster $C_s$.\footnote{In the analysis (see Lemma~\ref{lem:rem_ci_reg}) we show that $c_i$ must be a regular center, and thus the merging of these two clusters does not explode the approximation ratio.} The center point $c_s$ becomes an extended center and its cluster $C_s$ becomes an extended cluster. Next, the algorithm replaces in $S$ the exempted center point $c_i$ with another
    point, in the following way:
    
    \begin{enumerate}[label=C3\alph*.]
        \item If the new inserted point $p^+$ is within distance $r^{(\delta)}$ from a zombie center $c_z \in S$
        which is the $z$-th center point of the ordered set $S$,\footnote{Recall that at this moment the point which was the $i$-th center before the update has been removed from the set of centers $S$.\label{footnote:ci_exempt}} then the algorithm assigns the point $c_z$ to be the $i$-th center
        and assigns another point to serve as the $z$-th center in the ordered set $S$, as follows (see Figure~\ref{fig:example_1}). 
        Let $C'_z$ be temporarily the zombie cluster $C_z$. The algorithm sets $c_i \coloneqq c_z$,
        and the center point $c_i$ becomes a regular center. Moreover, the corresponding cluster $C_i$ is constructed as the set of points from $C'_z \cup \{p^+\}$ that are within distance $r^{(\delta)}$ from the updated regular center $c_i$, and also $C_i$ becomes a regular cluster.
        Afterwards, the zombie cluster $C_z$ is updated to consist of the remaining points from $C_z'$ whose distances are strictly more than $r^{(\delta)}$ from the center point $c_z$ (note that at this moment we have that $c_z = c_i$). To find the point which will serve as the $z$-th center in the ordered set $S$, the algorithm proceeds accordingly based on cases C1 and C2,
        after replacing in the algorithm's description the set $C_i$ with the set $C_z$ and the index~$i$ 
        with the index $z$ (see also Line~\ref{algline:call_replac} and the function Replace() in Algorithm~\ref{alg:fully_consist}).

        \item Otherwise, the new point $p^+$ is at a distance greater than $r^{(\delta)}$ from the set~$S$ of the remaining centers (i.e., $\dist(p^+, S) > r^{(\delta)}$).\footref{footnote:ci_exempt} In this case, the point $p^+$ is added to $S$ and as the point $p^+$ is now the $i$-th center point of the ordered set $S$, the algorithm sets $c_i \coloneqq p^+$. Additionally, the corresponding cluster $C_i$ is updated to be equivalent to the set $\{p^+\}$. The center point $c_i$ becomes a regular center and its corresponding cluster $C_i = \{p^+\}$ becomes a regular cluster. The recourse in this subcase is $1$.
    \end{enumerate} 

    \item Otherwise, the pairwise distances between the center points (i.e., points in $S$) are greater than $r^{(\delta)}$. 
    In this case, if there is another point $p \in P \setminus S$ at a distance strictly greater than $r^{(\delta)}$ from the set~$S$, 
    then the algorithm calls the Increasing Operation which is described below.
    Next for the updated value of the extension level~$\delta$,\footnote{Note that the value of the extension level $\delta$ may not change.} if every point is within distance $r^{(\delta)}$ from the set~$S$ (i.e., $\max_{p \in P} \dist(p, S) \leq r^{(\delta)}$), then the new inserted point $p^+$ is added to the cluster of its closest center. Note that the set $S$ remains unmodified and the recourse is zero. Otherwise, let $p$ be a point of maximum distance from the set~$S$, which means
    that $\dist(p, S) > r^{(\delta)}$. Subsequently, the algorithm finds the center points $c_s$ and $c_i$ as defined previously in~(\ref{choose_cs_ci_fully}), and observe that such centers must exist by the construction of the Increasing Operation. The algorithm proceeds as in case C3 (and eventually as in case C3b) by replacing in the algorithm's description the point $p^+$ with the point $p$ (see also Lines~\ref{algline:c3_aft_c4_beg}-\ref{algline:c3_aft_c4_end} in Algorithm~\ref{alg:fully_consist}).

    \subparagraph{Increasing Operation.} 
    At the start of the Increasing Operation, all clusters and centers become regular. Then the extension level $\delta$ is increased by one until either the set $S$ stops being a \distIS{$r^{(\delta)}$}
    or the set $S$ becomes a \distDS{$r^{(\delta)}$}. Recall that we have~$r^{(\delta)} = 5r^{(\delta-1)}$.
\end{enumerate}

\vspace{0.5em}
\noindent
In the end of each point update, the fully dynamic algorithm calls the Decreasing Operation.

\paragraph{Decreasing Operation.}
    During the Decreasing Operation the algorithm performs three tasks:
    \begin{enumerate}
        \item First, if the set $S$ is a \distDS{$r^{(\delta)}$} then all clusters and centers become regular, and $\delta$ is updated to the smallest value $\delta'$ such that the set $S$ is a \distDS{$r^{(\delta')}$}.\footnote{Note that the value of the extension level $\delta$ can either decrease or remain the same.}

        \item Second, every cluster $C_j$ (where $1 \leq j \leq k$) in which all points $q \in C_j$ lie within distance $r^{(\delta)}$ from its center $c_j$, becomes a regular cluster and its center $c_j$ becomes a regular center (in~\cref{algline:cluster_reg_2_fully}).

        \item Finally, every point $q \in P$ that satisfies the following conditions:
        \begin{itemize}
            \item $q$ is at a distance greater than $r^{(\delta)}$ from the center $c_j$ of the cluster $C_j$ that contains $q$,
            \item $q$ is within distance $r^{(\delta)}$ from another non-zombie center $c_{j'}$ of the non-zombie cluster $C_{j'}$,
        \end{itemize}
        is removed from $C_j$ and added to $C_{j'}$.\footnote{In case there is more than one non-zombie center point within distance $r^{(\delta)}$ from the point $q$, the algorithm adds $q$ to only one of the corresponding non-zombie clusters arbitrarily.}
    \end{enumerate}

\begin{algorithm}[H]
    \DontPrintSemicolon
    \caption{\textsc{fully dynamic consistent $k$-center}{}}
    \label{alg:fully_consist}

    \SetAlgoLined
    \SetArgSty{textrm}

    \tcp{Let $S = \{c_1, \dots, c_k\}$ be the ordered set of $k$ centers}

    \tcp{The value $r^{(\delta')}$ for each $\delta'$ is set to $5^{\delta'} \cdot r$, for the final value of $r$ in the preprocessing phase}
    \tcp{Let $\delta$ be the current extension level}

    \vspace{0.5em}
    
    \SetKwFunction{FIncreasingOp}{IncreasingOp}
    \Procedure{\FIncreasingOp{}} {
        For all $c_j \in S: c_j$ becomes regular center and $C_j$ becomes regular cluster \label{algline:cluster_reg_3_fully}
        \vspace{0.5em}
        
        \While{$\min_{x, y \in S,\,x \neq y} \dist(x, y) > r^{(\delta)} \textbf{ and } \max_{p \in P} \dist(p, S) > r^{(\delta)}$} {
            $\delta \gets \delta + 1$ \tcp{It holds that $r^{(\delta)} = 5 r^{(\delta-1)}$} \label{algline:incr_ext_lvl}
        }
    }

    \vspace{1em}

    \SetKwFunction{FDecreasingOp}{DecreasingOp}
    \Procedure{\FDecreasingOp{}} {
        \If {$\max_{q \in P} \dist(q, S) \leq r^{(\delta)}$} { \label{algline:decr_radius_1_fully}
            For all $c_j \in S: c_j$ becomes regular center and $C_j$ becomes regular cluster \label{algline:cluster_reg_1_fully}
        
            \vspace{0.5em}
            
            Find the smallest $\delta'$ such that $S$ is a \distDS{$r^{(\delta')}$}
            
            $\delta \gets \delta'$ \label{algline:decr_radius_2_fully}
        }
        \vspace{0.5em}
        
        \For{$j \in \{1, \ldots, k\}$} {
            \If{$\forall q \in C_j: \dist(q, c_j) \leq r^{(\delta)}$} {
                $C_j$ becomes regular cluster and $c_j$ becomes regular center \label{algline:cluster_reg_2_fully}
            }
        }
        
        \vspace{0.5em}
        
        \For{$q \in P$} {
            Let $C_j$ be the cluster such that $q \in C_j$
            \vspace{0.5em}
            
            \If{$\dist(q, c_j) > r^{(\delta)}$ \textbf{and} $\exists j': (C_{j'}$ is \textbf{not} zombie cluster \textbf{and} $\dist(q, c_{j'}) \leq r^{(\delta)}$)} { \label{algline:switch_cluster_1_fully_if}
                Remove $q$ from $C_j$ and add $q$ to $C_{j'}$ 
                \label{algline:switch_cluster_1_fully}
            }
        }
    }
    
    \vspace{1em}
    
    \SetKwFunction{FReassigningOp}{ReassigningOp}
    \Procedure{\FReassigningOp{$Q$}} {
        \While{$Q$ is not empty} {
            Pick $q$ from $Q$, and let $C_j$ be the cluster such that $q \in C_j$

            Find center $c_{j'} \in S$ such that $\dist(q, c_{j'}) \leq r^{(\delta)}$ \label{algline:find_c_fully}

            Remove $q$ from $C_j$ and $Q$, and add $q$ to $C_{j'}$ \label{algline:switch_cluster_2_fully}

            Cluster $C_{j'}$ becomes regular cluster and $c_{j'}$ becomes regular center \label{algline:cluster_reg_reas_oper}

            Add all points from $\{q_{j'} \in C_{j'} \mid \dist(q_{j'}, c_{j'}) > r^{(\delta)}\}$ to $Q$ \label{algline:add_farpnt_to_Q}
        }
    }
\end{algorithm}

\begin{algorithm}[H]
    \DontPrintSemicolon
    \SetAlgoLined
    \SetArgSty{textrm}
    
    \tcp{Continuation of Algorithm~\ref{alg:fully_consist}}
    \vspace{0.8em}

    \SetKwFunction{FReplace}{Replace}
    \Procedure{\FReplace{$C, n$}} {
        \If(\tcp*[h]{case C1}){$\exists c' \in C: \dist(c', S) > r^{(\delta)}$} { \label{algline:caseC1_fully}
            Add $c'$ to $S$ \label{algline:add_to_S_2}

            $c_n \gets c'$
            
            $c_n$ becomes zombie center
        } 
        \vspace{0.5em}
        
        \Else(\tcp*[h]{case C2}){
            Find a sequence $p_{t_1}, c_{t_2}, p_{t_2}, \;\ldots\;, c_{t_l}, p_{t_l}$ 
            of $2l-1$ points such that: 

            $t_1, \dots, t_l$ are indices different from each other,
            
            $t_1 = n, p_{t_1} \in C$, for every $j: 2 \leq j \leq l$: \label{algline:seq_init}
            
            (1) $c_{t_j} \in \{c \in S: c \text{ is a zombie center}\}$ and $\dist(p_{t_{j-1}}, c_{t_j}) \leq r^{(\delta)}$,

            (2) $t_j$ is the index of $c_{t_j}$ in $S = \{c_1, \ldots, c_k\}$,
            
            (3) $p_{t_j} \in C_{t_j}$ and $\dist(p_{t_j}, c_{t_j}) > r^{(\delta)}$,
            
            and $\dist(p_{t_l}, S) > r^{(\delta)}$ for the last index $t_l$ \label{algline:ptl_S}
            
            \vspace{0.5em}
            
            \If(\tcp*[h]{case C2a}){such a sequence is found} {
                Add $p_{t_l}$ to $S$ \label{algline:add_to_S_3}

                \vspace{0.5em}
                
                \For{$j \in \{1, \ldots, l-1\}$} {
                    $c_{t_j} \gets c_{t_{j+1}}$ \label{algline:shift_centers_1}

                    Remove $c_{t_j}$ from $C_{t_{j+1}}$ and add $c_{t_j}$ to $C_{t_j}$ \label{algline:shift_centers}

                    $c_{t_j}$ becomes zombie center
                }
                \vspace{0.5em}
                $c_{t_l} \gets p_{t_l}$ \tcp{$c_{t_l}$ is already in $C_{t_l}$}

                $c_{t_l}$ becomes zombie center
            }

            \vspace{0.5em}
            \Else(\tcp*[h]{case C2b}) {
                \FReassigningOp{$C$}
                
                Let $c'$ be a point of maximum distance from the set $S$

                Add $c'$ to $S$ \label{algline:add_to_S_4}

                $c_n \gets c'$, and $c_n$ becomes regular center \label{algline:reset_ci_C2b_fully}
            
                $C_n \gets \{c_n\}$, and $C_n$ becomes regular cluster \label{algline:reset_Cli_C2b_fully}
            }
        }
    }

    \vspace{1em}
    
    \SetKwFunction{FDeletePoint}{DeletePoint}
    \Procedure{\FDeletePoint{p}} {
        Remove $p$ from $P$
        \vspace{0.5em}
        
        \If{$p \notin S$} {
            Remove $p$ from the cluster it belongs to
            
            \FDecreasingOp{}

            \vspace{0.5em}
            \textbf{exit} from \FDeletePoint{}
        }
        \vspace{0.5em}
        \Else(\tcp*[h]{$p \in S$}) {   
            Let $i$ be the index of $p$ in $S = \{c_1, \dots, c_k\}$, namely $c_i = p$
            
            Remove $p$ from $S$ and $C_i$ \label{algline:rem_from_S_2}
            \vspace{0.5em}
            
            \If{$c_i$ is \textbf{not} zombie center} {
                Cluster $C_i$ becomes zombie cluster
            }
            \vspace{0.5em}

            \FDecreasingOp{}
            
            \FReplace{$C_i, i$}
        }

        \vspace{0.5em}
        \FDecreasingOp{}
    }
\end{algorithm}

\begin{algorithm}[H]
    \DontPrintSemicolon
    \SetAlgoLined
    \SetArgSty{textrm}

    \tcp{Continuation of Algorithm~\ref{alg:fully_consist}}
    \vspace{1em}
    
    \SetKwFunction{FInsertPoint}{InsertPoint}
    \Procedure{\FInsertPoint{$p^+$}} {
        Add $p^+$ to $P$
        
        $S_\text{nz} = \{c \in S \mid c \text{ is a non-zombie center}\}$
        \vspace{0.5em}
        
        \If{$\dist(p^+, S_\text{nz}) \leq r^{(\delta)}$} { \label{algline:ifnonzombie}
            Find a non-zombie center $c_j$ with index $j$ in  $S = \{c_1, \dots, c_k\}$ such that: $\dist(p^+, c_j) \leq r^{(\delta)}$
            
            \vspace{0.5em}
            Add $p^+$ to $C_j$
            
            \FDecreasingOp{}

            \vspace{0.5em}
            \textbf{exit} from \FInsertPoint{}
        }

        \vspace{0.5em}
        \tcp{In the following it holds that $\dist(p^+, S_\text{nz}) > r^{(\delta)}$}
        \vspace{0.5em}
        
        \If(\tcp*[h]{case C3}){$\min_{x, y \in S,\,x \neq y} \dist(x, y) \leq r^{(\delta)}$} { \label{algline:ci_cj_leq_rdelta_fully}
            Find $c_s, c_i \in S$ such that:
            $\dist(c_s, c_i) \leq r^{(\delta)}$,
            $s$ is the min possible index in $S = \{c_1, \dots, c_k\}$,
            and $s < i$ \label{algline:determ_i_fully}
            \vspace{0.5em}
            
            Add all points from $C_i$ to $C_s$
    
            $c_s$ becomes extended center and $C_s$ becomes extended cluster \label{algline:cs_extend_1}
    
            Remove $c_i$ from $S$ \label{algline:rem_from_S_1}
            \vspace{0.5em}  
            
            \If(\tcp*[h]{case C3a}){$\dist(p^+, S) \leq r^{(\delta)}$} {
                Find a zombie center $c_z$ with index $z$ in $S = \{c_1, \dots, c_k\}$ such that: $\dist(p^+, c_z) = \dist(p^+, S)$ \label{algline:find_cz}
    
                \vspace{0.5em}
                $c_i \gets c_z$
                
                $C_i \gets \{q \in C_z \cup \{p^+\} \mid \dist(q, c_i) \leq r^{(\delta)}\}$
    
                $c_i$ becomes regular center and $C_i$ becomes regular cluster \label{algline:C3a_reg}
                
                $C_z \gets C_z \setminus C_i$ \label{algline:Clz_updated}
                \vspace{0.5em}
                
                \FReplace{$C_z, z$} \label{algline:call_replac}
            }
            \vspace{0.5em}
            \Else(\tcp*[h]{case C3b}){  
                Add $p^+$ to $S$ \label{algline:add_to_S_1}
        
                $c_i \gets p^+$, and $c_i$ becomes regular center
            
                $C_i \gets \{c_i\}$, and $C_i$ becomes regular cluster
            } 
        }
        \vspace{0.5em}
        
        \Else(\tcp*[h]{case C4}) {
            \If{$\max_{p \in P} \dist(p, S) > r^{(\delta)}$} {
                \FIncreasingOp{}
            }
        
            \vspace{0.5em}
            Let $p \in P$ be a point of maximum distance from the set $S$ \label{algline:p_max_C4}
            \vspace{0.5em}
            
            \If{$\dist(p, S) > r^{(\delta)}$} { \label{algline:C3b_after_C4}
                Find $c_s, c_i \in S$ such that:
                $\dist(c_s, c_i) \leq r^{(\delta)}$,
                $s$ is the min possible index in $S$,
                and $s < i$ \label{algline:determ_i_fully_1}
                \vspace{0.5em}
                
                \tcp{case C3b after case C4} \label{algline:c3_aft_c4_beg}
                
                Add $p$ to $S$ and remove $c_i$ from $S$ \label{algline:update_S_1}
    
                Add all points from $C_i$ to $C_s$
    
                $c_s$ becomes extended center and $C_s$ becomes extended cluster \label{algline:cs_extend_2}
        
                $c_i \gets p$, and $c_i$ becomes regular center
            
                $C_i \gets \{p\}$, and $C_i$ becomes regular cluster \label{algline:c3_aft_c4_end}
            } 
            \vspace{0.5em}
            \Else{
                Find a center $c_j$ with index $j$ in $S = \{c_1, \dots, c_k\}$ such that: $\dist(p^+, c_j) = \dist(p^+, S)$

                Add $p^+$ to $C_j$
            }
        }

        \vspace{0.5em}
        \FDecreasingOp{}
    }
\end{algorithm}

\subsection{Analysis of the Fully Dynamic Algorithm} 
Our goal in this section is to prove Theorem~\ref{th:fully_consist} by analyzing
the fully dynamic Algorithm~\ref{alg:fully_consist}. Throughout the analysis, we use the following notation during a point update (i.e., a point insertion or a point deletion).
Let $S_\text{old}$ be the set of centers before the algorithm has processed the update,
and $S$ be the set of centers after the algorithm has processed the update.
Similarly, let $\delta_\text{old}$ be the value of the extension level before the algorithm has processed the update,
and $\delta$ be the value of the extension level after the algorithm has processed the update.

The proof of Theorem~\ref{th:fully_consist} almost immediately follows from the three invariants. Therefore
in the following we prove that after a point update,\footnote{Whenever we say “after a point update”, we mean after the algorithm has completed all updates to its data structures.} 
all the three invariants are satisfied in our fully dynamic Algorithm~\ref{alg:fully_consist}.

\subsubsection{Proof of Invariants $1$ and $2$}
\begin{lemma} \label{lem:size_S_worst_recourse_fully} 
    After a point update, the size of $S$ is $k$ (i.e., \ref{inv_1_fully} is satisfied). Moreover, the worst-case recourse per update is $1$.
\end{lemma}
\begin{proof}
    The proof follows an induction argument. In the preprocessing phase, the algorithm constructs a set $S$ of $k$ centers. We analyze the two types of point updates separately.
    
    Under a point deletion, the set of centers is modified if and only if the deleted
    point $p$ is a center point. In this case, the algorithm removes $p$ from the set of centers in Line~\ref{algline:rem_from_S_2}. Then, due to either case C1 in Line~\ref{algline:add_to_S_2},
    subcase C2a in Line~\ref{algline:add_to_S_3}, or subcase C2b in Line~\ref{algline:add_to_S_4},
    one point is added to the set of centers. Thus under point deletions, the size of the set of centers remains the same (i.e., $|S| = k$), and the worst-case recourse in a point deletion is at most $1$.
    
    Under a point insertion, the set of centers is modified only due to either case C3 or case C4. During case C3, one point is removed from the set of centers in Line~\ref{algline:rem_from_S_1}, and
    one point is added to the set of centers due to either subcase C3a, or
    subcase C3b in Line~\ref{algline:add_to_S_1}. Observe that in subcase C3a, the algorithm calls
    the Replace() function in Line~\ref{algline:call_replac} and proceeds accordingly either with case C1 or case C2. Hence as we argued before in the point deletion, one point is added to the set of centers in these cases. Finally during case C4, one point is removed from the set of centers and one point is added to the set of centers in Line~\ref{algline:update_S_1}. Therefore after a point update, the size of the set of centers
    remains the same (i.e., $|S| = k$), and the worst-case recourse in a point update is at most $1$.

    Notice that from the previous cases, only one case can occur per point update, which justifies our assertions regarding
    the worst-case recourse.
\end{proof}

\begin{observation} \label{obs:decrOper}
    Let $\delta_1$ and $\delta_2$ be the value of the extension level before and after the execution of the Decreasing Operation respectively.
    Then it holds that $\delta_2 \leq \delta_1$, and in turn for the associated radii it holds that $r^{(\delta_2)} \leq r^{(\delta_1)}$.
    Furthermore during the Decreasing Operation, a center point either becomes a regular center or its state is not affected.
\end{observation}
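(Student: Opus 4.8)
The plan is to trace through the three tasks that make up the Decreasing Operation and observe that (i) the extension level is touched only in task~(1), and (ii) the state of a center is touched only in tasks~(1) and~(2), in both cases only to turn it into a regular center.

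First I would establish $\delta_2 \le \delta_1$. The extension level is modified only inside task~(1), and only when $S$ is a \distDS{$r^{(\delta_1)}$}; in that case $\delta$ is reset to the smallest value $\delta'$ for which $S$ is a \distDS{$r^{(\delta')}$}. Since $\delta_1$ itself witnesses that such a $\delta'$ exists, the minimality of $\delta'$ gives $\delta_2 = \delta' \le \delta_1$. If instead $S$ is not a \distDS{$r^{(\delta_1)}$}, then task~(1) leaves $\delta$ untouched and $\delta_2 = \delta_1$. Either way $\delta_2 \le \delta_1$. The radius inequality $r^{(\delta_2)} \le r^{(\delta_1)}$ then follows immediately, since $r^{(\delta')} = 5^{\delta'} \cdot r$ is increasing in $\delta'$.

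For the last sentence I would enumerate the effect of each task on the state of a fixed center. Task~(1), when triggered, makes every center a regular center. Task~(2) makes a center regular exactly when all points of its cluster lie within $r^{(\delta)}$ of it, and leaves its state untouched otherwise. Task~(3) only moves points between clusters and never rewrites a center's state label. Hence, over the whole operation, the state of any fixed center is either changed to regular or left exactly as it was; in particular it can never become extended or zombie.

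There is essentially no hard step here: the statement is a bookkeeping fact that can be read off directly from the definition of the Decreasing Operation and the corresponding pseudocode. The only subtlety to flag is that tasks~(2) and~(3) run with the possibly-decreased value of $\delta$ produced by task~(1); this affects neither claim, since task~(2) only ever sets a state to regular and task~(3) never changes a state at all.
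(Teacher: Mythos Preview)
Your proposal is correct and follows essentially the same approach as the paper: both arguments observe that $\delta$ is only touched in task~(1), where $\delta_1$ is a valid candidate for the minimum, and that tasks~(1) and~(2) are the only places where a center's state can change, always to regular. Your version is slightly more explicit (you spell out why task~(3) cannot change a state and why $r^{(\delta')}$ is monotone), but the substance is identical.
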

\begin{proof}
    During the Decreasing Operation, the value of the extension level is updated
    in Line~\ref{algline:decr_radius_2_fully} only if the set $S$ is a \distDS{$r^{(\delta_1)}$}
    as indicated in Line~\ref{algline:decr_radius_1_fully}. Hence, as $\delta_1$ is a candidate in Line~\ref{algline:decr_radius_2_fully} and the algorithm finds the smallest such $\delta'$, it follows that $\delta_2 \leq \delta_1$ and by construction $r^{(\delta_2)} \leq r^{(\delta_1)}$.

    Finally, observe that a cluster and its center can potentially become regular either in Line~\ref{algline:cluster_reg_1_fully} or in Line~\ref{algline:cluster_reg_2_fully}. If this does not occur, then their state remains unchanged.
\end{proof}

\begin{lemma} \label{lem:invar_2_satisf_fully}
    After a point update, there exists a point $p_S \in P \setminus S$ such that the set $S \cup \{p_S\}$ is a \distIS{$r^{(\delta-1)}$} (i.e., \ref{inv_2_fully} is satisfied). Moreover, the value of the radius $r^{(\delta)}$ is at most $10R^*$.
\end{lemma}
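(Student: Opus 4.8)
Both statements follow, by induction on the number of updates (with the preprocessing phase as the base case, already handled above), from Invariant~1, Invariant~2, and the behaviour of the Decreasing Operation. The bound $r^{(\delta)} \le 10R^*$ is an immediate consequence of the other two invariants: by Lemma~\ref{lem:size_S_worst_recourse_fully} we have $|S| = k$, so $S \cup \{p_S\}$ is a \distIS{$r^{(\delta-1)}$} of size $k+1$, whence $r^{(\delta-1)} < 2R^*$ by Corollary~\ref{cor:r_less2R}, and $r^{(\delta)} = 5r^{(\delta-1)} < 10R^*$. It thus remains to establish Invariant~2, for which it suffices to prove, for the state right after the update: (i) there is a point $p_S \in P \setminus S$ with $\dist(p_S, S) > r^{(\delta-1)}$; and (ii) $\min_{x \ne y \in S}\dist(x,y) > r^{(\delta-1)}$. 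Indeed, (i) and (ii) together say exactly that $S \cup \{p_S\}$ is a \distIS{$r^{(\delta-1)}$}.

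Part~(i) is handled uniformly over all updates by the call to the Decreasing Operation at the very end of the update. Consider its first task (Lines~\ref{algline:decr_radius_1_fully}--\ref{algline:decr_radius_2_fully}). If $S$ is not a \distDS{$r^{(\delta)}$} at the start of that call, then $\delta$ is not changed (neither is it changed by the other two tasks), and some point of $P$ is at distance $> r^{(\delta)} > r^{(\delta-1)}$ from $S$. Otherwise $\delta$ is reset to the \emph{smallest} $\delta'$ for which $S$ is a \distDS{$r^{(\delta')}$} --- such a minimum exists because $|P| > k = |S|$ forces $S$ to fail to dominate $P$ at small enough levels --- and then $S$ is not a \distDS{$r^{(\delta-1)}$}, so again some point of $P$ is at distance $> r^{(\delta-1)}$ from $S$. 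In both cases that point lies in $P \setminus S$ since $r^{(\delta-1)} > 0$.

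For part~(ii), by Observation~\ref{obs:decrOper} the final Decreasing Operation leaves $S$ unchanged and only decreases $\delta$, so it is enough to prove $\min_{x \ne y \in S}\dist(x,y) > r^{(\delta_1-1)}$, where $\delta_1$ is the extension level immediately before that call; monotonicity of $r^{(\cdot)}$ then finishes~(ii). We use two facts. (a) When the algorithm inserts a new point into $S$ in case C1, C2a, C3b, or the ``C3b after C4'' branch, that point is at distance $> r^{(\delta_1)}$ from the other $k-1$ centers (this is precisely the guard under which it is chosen, and no step between that choice and the final Decreasing Operation changes $\delta$); the relabelling of centers in case C2a is merely a permutation of the surviving centers, and in case C3a the point moved to position $i$ is an \emph{old} center of $S$, at distance $> r^{(\delta_\text{old}-1)}$ from all other surviving centers by the induction hypothesis, while the fresh center produced by the recursive \textsc{Replace} call again falls under~(a). (b) The Increasing Operation preserves the property $\min_{x \ne y \in S}\dist(x,y) > r^{(\delta-1)}$: if it performs at least one step, the while-guard satisfied just before the last increment gives $\min_{x \ne y \in S}\dist(x,y) > r^{(\delta_1-1)}$ at the new level $\delta_1$ (and $S$ is unchanged thereafter); if it performs no step the property is inherited, and note that whenever we reach the Increasing Operation, i.e.\ in case C4, the test of case C3 failed, so $\min_{x \ne y \in S}\dist(x,y) > r^{(\delta_\text{old})}$. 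Combining: in every branch that does not invoke the Increasing Operation we have $\delta_1 \le \delta_\text{old}$, so the pairwise distances among centers that survive the update are $> r^{(\delta_\text{old}-1)} \ge r^{(\delta_1-1)}$ by the induction hypothesis; after an Increasing Operation they are $> r^{(\delta_1-1)}$ by~(b); and the single new center, when one is inserted, is at distance $> r^{(\delta_1)} > r^{(\delta_1-1)}$ from all others by~(a). This covers every case except C2b.

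In case C2b --- and its recurrence inside the C3a call to \textsc{Replace} --- the new $i$-th center $c'$ is chosen only as a point of \emph{maximum} distance from the remaining $k-1$ centers, which is not a priori larger than $r^{(\delta_1)}$; this is the one genuine obstacle. We resolve it using the subsequent final Decreasing Operation: since $c'$ realizes $\max_{q \in P}\dist(q, S \setminus \{c'\})$, if $\dist(c', S \setminus \{c'\}) \le r^{(\delta-1)}$ held then $S$ would be a \distDS{$r^{(\delta-1)}$}, contradicting the minimality of $\delta$ enforced in Lines~\ref{algline:decr_radius_1_fully}--\ref{algline:decr_radius_2_fully}; hence $\dist(c', S \setminus \{c'\}) > r^{(\delta-1)}$, and the surviving centers are handled exactly as before. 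The only work left is routine bookkeeping from Algorithm~\ref{alg:fully_consist}: that the final Decreasing Operation is executed in every branch, that $\delta_1 = \delta_\text{old}$ in all branches avoiding the Increasing Operation (in particular throughout a deletion and throughout case~C3), and that the recursive \textsc{Replace} call in case C3a is itself followed by the final Decreasing Operation.
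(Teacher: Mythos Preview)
Your proof is correct and follows essentially the same approach as the paper—induction on updates, leveraging the induction hypothesis together with the behaviour of the Decreasing and Increasing Operations (the paper's Observations~\ref{obs:decrOper} and~\ref{obs:incrOper}). Your decomposition into (i) existence of a witness $p_S$ and (ii) pairwise separation within $S$ is a somewhat cleaner organization than the paper's explicit case-by-case walk-through of C1, C2a, C2b, C3a, C3b, C4, but the underlying arguments coincide: the paper also uses the final Decreasing Operation to guarantee the witness in each case, and the fact that every freshly inserted center is at distance $> r^{(\delta)}$ from the others (with the C2b case handled via the maximality of $c'$ combined with the minimality of $\delta$ after the Decreasing Operation).

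One minor slip worth flagging: in the closing ``bookkeeping'' paragraph you assert $\delta_1 = \delta_\text{old}$ in all branches avoiding the Increasing Operation, but in a deletion the intermediate call to \textsc{DecreasingOp} (the one made \emph{before} \textsc{Replace}) may already lower the extension level, so only $\delta_1 \le \delta_\text{old}$ is guaranteed. Fortunately your main argument in the previous paragraph already uses only the inequality, so this does not affect correctness.
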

\begin{proof}
    We first prove by induction on the number of point updates that \ref{inv_2_fully} is satisfied. For the base case,
    the statement holds by the preprocessing phase. For the induction step, we analyze separately the two possible types of point updates.

    \paragraph{Point deletion.} 
    Let $p$ be the deleted point from the point set $P$, and let $P$ be the updated point set without the deleted point $p$. If $p$ is not a center point (i.e., $p \notin S_\text{old}$), then the algorithm
    only calls the Decreasing Operation. In turn, we have that $S = S_\text{old}$ and
    by Observation~\ref{obs:decrOper} it holds that $\delta \leq \delta_\text{old}$ and $r^{(\delta-1)} \leq r^{(\delta_\text{old}-1)}$.
    Since by induction hypothesis the set $S_\text{old}$ is a \distIS{$r^{(\delta_\text{old}-1)}$}, we can conclude that the
    set $S$ is a \distIS{$r^{(\delta-1)}$}. Moreover, by the way the Decreasing Operation works in Lines~\ref{algline:decr_radius_1_fully}-\ref{algline:decr_radius_2_fully}, there must exist a point $p_S \in P \setminus S$ such that the updated 
    set $S \cup \{p_S\}$ is a \distIS{$r^{(\delta-1)}$}, as needed. Otherwise
    assuming that the deleted point $p$ is the $i$-th center point (i.e., $p \in S_\text{old}$ and $p = c_i$), we analyze the two possible cases of
    the algorithm.

    \begin{enumerate}
        \item Assume that the algorithm proceeds with case C1, and let $c'$ be the corresponding new zombie center of the zombie cluster $C_i$. Then the distance between the point $c'$ and the set $S_\text{old} \setminus \{p\}$ is greater than $r^{(\delta_\text{old})}$, and by induction hypothesis 
        the pairwise distances between points in $S_\text{old}$ are strictly greater than $r^{(\delta_\text{old} - 1)}$. 
        By construction we have that $S = (S_\text{old} \setminus \{p\}) \cup \{c'\}$, and since the algorithm
        calls the Decreasing Operation in the end, it holds that $r^{(\delta - 1)} \leq r^{(\delta_\text{old} - 1)}$
        by Observation~\ref{obs:decrOper}. Hence, the pairwise distances between points in $S$ are strictly greater than $r^{(\delta - 1)}$. Moreover by the way the Decreasing Operation works in Lines~\ref{algline:decr_radius_1_fully}-\ref{algline:decr_radius_2_fully}, 
        there must exist a point $p_S \in P \setminus S$ at a distance greater than 
        $r^{(\delta - 1)}$ from the updated set $S$, and so the claim follows.
            
        \item If the algorithm proceeds with case C2, then we analyze the two subcases separately:
        \begin{enumerate}
            \item Assume that the algorithm proceeds with subcase C2a.
            Since there is a sequence satisfying~(\ref{eq:seq_for_replac_fully}) it holds that $\dist(p_{t_l}, S_\text{old} \setminus \{p\}) > r^{(\delta_\text{old})}$ (see also Line~\ref{algline:ptl_S}), and by induction hypothesis the 
            pairwise distances between points in $S_\text{old}$ are strictly greater than $r^{(\delta_\text{old} - 1)}$. By construction
            we have that $S = (S_\text{old} \setminus \{p\}) \cup \{p_{t_l}\}$, and by Observation~\ref{obs:decrOper} 
            it holds that $r^{(\delta - 1)} \leq r^{(\delta_\text{old} - 1)}$. In turn,
            it follows that the pairwise distances between points in $S$ are strictly greater than $r^{(\delta - 1)}$. Eventually, by
            the way the Decreasing Operation works in Lines~\ref{algline:decr_radius_1_fully}-\ref{algline:decr_radius_2_fully}, there must exist a
            point $p_S \in P \setminus S$ at a distance greater than $r^{(\delta - 1)}$ from the updated set $S$,
            as needed.
            
            \item Assume that the algorithm proceeds with case C2b, and let $c'$ be the corresponding new center of maximum
            distance from the set $S_\text{old} \setminus \{p\}$. By construction we have that 
            $S = (S_\text{old} \setminus \{p\}) \cup \{c'\}$. Also due to the Decreasing Operation
            we have that $r^{(\delta-1)} < \max_{q \in P} \dist(q, S)$, and since $S_\text{old} \setminus \{p\} \subseteq S$ 
            it holds that $\max_{q \in P} \dist(q, S) \leq \max_{q \in P} \dist(q, S_\text{old} \setminus \{p\})$. Therefore
            it follows that: 
            \begin{equation*} 
                r^{(\delta-1)} \;<\; \max_{q \in P} \dist(q, S_\text{old} \setminus \{p\}) \;=\; \dist(c', S_\text{old} \setminus \{p\}).
            \end{equation*}
            Together with the induction hypothesis and Observation~\ref{obs:decrOper}, we can infer
            that the set $S$ is a \distIS{$r^{(\delta-1)}$}. Finally, once again by the way the Decreasing Operation works in Lines~\ref{algline:decr_radius_1_fully}-\ref{algline:decr_radius_2_fully}, there must exist a
            point $p_S \in P \setminus S$ such that the updated set $S \cup \{p_S\}$ is a \distIS{$r^{(\delta-1)}$},
            and so \ref{inv_2_fully} is satisfied.
            \end{enumerate}
    \end{enumerate}
    
    \paragraph{Point insertion.} 
    In the rest of the proof, we use the following observation.
    
    \begin{observation} \label{obs:incrOper}
        During the Increasing Operation the algorithm increases the extension level to some $\delta'$ where $\delta_\text{old} < \delta' \leq \delta$, only when the set $S_\text{old}$ is a \distIS{$r^{(\delta'-1)}$} and the set $S_\text{old}$ is not a \distDS{$r^{(\delta'-1)}$}.
    \end{observation}
    
    Let $p^+$ be the inserted point into the point set $P$, and let $P$ be the updated point set including the inserted point~$p^+$. If $p^+$ is within distance
    $r^{(\delta_\text{old})}$ from a non-zombie center point of $S_\text{old}$, then 
    the algorithm only calls the Decreasing Operation. In turn, we have that $S = S_\text{old}$ and by Observation~\ref{obs:decrOper} it holds that
    $\delta \leq \delta_\text{old}$ and $r^{(\delta-1)} \leq r^{(\delta_\text{old}-1)}$.
    By the induction hypothesis there exists a point $p_S \in P \setminus S_\text{old}$ such that the set $S_\text{old} \cup \{p_S\}$ is a \distIS{$r^{(\delta_\text{old}-1)}$}. Since the point $p_S$ is still part of the point set $P$, 
    we can conclude that $p_S \in P \setminus S$ and that the set $S \cup \{p_S\}$
    is a \distIS{$r^{(\delta-1)}$}, as needed.
    Otherwise for the remaining cases, we analyze the algorithm based on whether the set of centers is modified or not.

    \begin{enumerate}
    \item If the set of centers is not modified (i.e., $S = S_\text{old}$), by construction the algorithm proceeds with case C4.
    If all the points are within distance $r^{(\delta_\text{old})}$ from the set $S$, then the algorithm just calls in the end the Decreasing Operation,
    and the claim follows as before using the induction hypothesis and Observation~\ref{obs:decrOper}. Otherwise if there exists a point at a distance greater than $r^{(\delta_\text{old})}$ from the set $S$, then the algorithm calls the Increasing Operation. Based on Observation~\ref{obs:incrOper}, there must exist a point $p_S \in P \setminus S$ at a distance greater than $r^{(\delta-1)}$ from the set $S$ and the set $S$ must be a \distIS{$r^{(\delta-1)}$}.
    In the end the algorithm calls the Decreasing Operation, and using Observation~\ref{obs:decrOper} we can conclude that \ref{inv_2_fully} is satisfied.
    
    \item Otherwise assuming that the set of centers is modified, we analyze case C3 and case C4 of the algorithm separately:
    \begin{enumerate} 
    \item Assume that the algorithm proceeds with case C3 in Line~\ref{algline:ci_cj_leq_rdelta_fully}, where
    an older center point $c_i$ is exempted from being a center. To avoid confusion with the reset of $c_i$,
    let $c_i'$ be the old point $c_i$ determined in Line~\ref{algline:determ_i_fully}. By construction the algorithm
    removes $c_i'$ from the set of centers and adds a point $c'$ to the set of centers, which means that $S = (S_\text{old} \setminus \{c_i'\}) \cup \{c'\}$. Also by construction of the algorithm and Observation~\ref{obs:decrOper}, we have that $\delta \leq \delta_\text{old}$ and $r^{(\delta-1)} \leq r^{(\delta_\text{old}-1)}$. We split the analysis in two subcases:
    \begin{itemize}
        \item If the algorithm proceeds with case C3a, the updated center point $c_i$ becomes the old $z$-th center point determined
        in Line~\ref{algline:find_cz}, the corresponding zombie cluster $C_z$ is updated, and then the algorithm calls the function Replace() in
        Line~\ref{algline:call_replac}. In turn, the algorithm proceeds accordingly either with case C1 or with case C2, aiming to locate the new $z$-th center point. To that end, the claim follows by using the same arguments
        as before in the analysis of cases C1 and C2 in the point deletion, by replacing in the proof the (zombie cluster) $C_i$ with $C_z$ and the (deleted) point $p$ with $c_i'$.

        \item If the algorithm proceeds with subcase C3b, then by construction the point $c'$ is actually the new inserted point~$p^+$
        and it holds that $\dist(c', S_\text{old} \setminus \{c_i'\}) > r^{(\delta_\text{old})}$.
        By the induction hypothesis the set $S_\text{old}$ is a \distIS{$r^{(\delta_\text{old}-1)}$}, and as $S = (S_\text{old} \setminus \{c_i'\}) \cup \{c'\}$
        it follows that the set $S$ is a  \distIS{$r^{(\delta-1)}$}. Moreover the algorithm calls the Decreasing Operation in the end, and by the way it works in Lines~\ref{algline:decr_radius_1_fully}-\ref{algline:decr_radius_2_fully} there must exist a point $p_S \in P \setminus S$ such that the updated set $S \cup \{p_S\}$ is a \distIS{$r^{(\delta-1)}$},
        and so \ref{inv_2_fully} is satisfied.
    \end{itemize}
    
    \item Alternatively assume that the algorithm proceeds with case C4 and after with case C3b. In this situation, the Increasing Operation is called and Line~\ref{algline:update_S_1} is executed. Based on Observation~\ref{obs:incrOper}, the set $S_\text{old}$ is a \distIS{$r^{(\delta-1)}$}.
    Also by construction we have that $S = (S_\text{old} \setminus \{c_i'\}) \cup \{p\}$,
    where $c_i'$ is the old point $c_i$ determined in Line~\ref{algline:determ_i_fully_1}, and $p$ is a point in $P$ determined in Line~\ref{algline:p_max_C4} such that
    $\dist(p, S_\text{old}) > r^{(\delta)}$. Therefore by setting $p_S = c_i'$ and by Observation~\ref{obs:decrOper}, we
    can conclude that \ref{inv_2_fully} is satisfied.
    \end{enumerate}
    \end{enumerate}

    \paragraph{Upper bound on $r^{(\delta)}.$}
    Since \ref{inv_2_fully} is satisfied, there exists a point $p_S \in P \setminus S$ such that the set $S \cup \{p_S\}$ is a \distIS{$r^{(\delta-1)}$}.
    Based on Lemma~\ref{lem:size_S_worst_recourse_fully} the size of the set $S \cup \{p_S\}$ is $k + 1$, and using
    Corollary~\ref{cor:r_less2R} we get that $r^{(\delta-1)} < 2R^*$.
    Therefore as we have that $r^{(\delta)} = 5r^{(\delta-1)}$ by construction, the claim follows. 
\end{proof}

\subsubsection{Auxiliary Results}
We continue with the proofs of some useful results which are utilized for the proof of \ref{inv_3_fully}.
Essentially, we show that whenever a cluster $C$ becomes an extended cluster then it cannot be already a zombie cluster, and that $C$ can become an extended cluster only because of another regular cluster. In turn, these two factors along with the fact that no additional points
enter a zombie cluster while it remains a zombie cluster\footnote{The only way a new point can enter a zombie cluster is if it is itself a center point. This case can be safely ignored, as the distance from a center to itself is zero. \label{footnote:zmb_new_points}}, allow us to upper bound the approximation ratio. 
Note though that an extended cluster can potentially become a zombie cluster.

\antonis{This Lemma is new.}
\begin{lemma} \label{lem:zomb_no_new_point}
    After a point update, consider a zombie cluster $C_z$ and let $C_z^\old$ be $C_z$ before the update. Then it holds that $C_z \setminus \{c_z\} \subseteq C_z^\old$, where $c_z \in S_\text{zmb}$ is the zombie center of $C_z$. In other words, no additional points enter a zombie cluster besides its zombie center.
\end{lemma}
\begin{proof}
    Notice that a zombie cluster receives new points only due to case C2a \antonis{right?}, where each new point added must be its updated zombie center. Indeed, during the Decreasing Operation a zombie cluster cannot receive new points. In fact, it may lose some points due to~\cref{algline:switch_cluster_1_fully}, which are then switched to a non-zombie cluster. Also during the Reassigning Operation, any cluster that receives new points in~\cref{algline:switch_cluster_2_fully} becomes a regular cluster. Finally note that during case C3a, a subset of a zombie cluster may become a regular cluster. Afterwards, since the algorithm proceeds with either case C1 or C2, the zombie cluster can only receive a new point if it is its zombie center.
\end{proof}

\begin{lemma} \label{lem:all_zomb_more_r}
    After a point update, all the zombie centers are at a distance greater than $r^{(\delta)}$ from the rest of the centers.
    In other words, for every zombie center point $c_z \in S_\text{zmb}$ it holds that $\dist(c_z, S \setminus \{c_z\}) > r^{(\delta)}$.
\end{lemma}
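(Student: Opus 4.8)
The plan is to prove the statement by induction on the number of point updates. The base case is the end of the preprocessing phase, where every center is a regular center and there is nothing to check. For the inductive step we fix an update, write $S_\text{old},\delta_\text{old}$ for the center set and extension level before the update and $S,\delta$ for those after it (as in the rest of the analysis), and follow the branches of Algorithm~\ref{alg:fully_consist}, assuming the statement for all earlier updates.

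Two preliminary facts drive the argument. First, \emph{if some zombie center exists at the start of an update then $\max_{q\in P}\dist(q,S_\text{old})>r^{(\delta_\text{old})}$}, where $P$ is the point set at that moment: the previous update ended with a Decreasing Operation, and had its first step found $S_\text{old}$ to be a \distDS{$r^{(\delta_\text{old})}$} it would have turned every cluster---in particular every zombie cluster---regular, contradicting the existence of a zombie center. Second, throughout any point deletion and throughout case C3 of a point insertion, only Decreasing Operations (and never the Increasing Operation) are invoked, so by Observation~\ref{obs:decrOper} the extension level can only decrease; and whenever the Increasing Operation is invoked---possible only in case C4---it first makes all clusters regular, so no zombie center survives it and none of the remaining steps of case C4 creates one. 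Combining these, in every case where the Replace function is \emph{not} executed the claim is either vacuous (no zombie center remains) or follows immediately from the induction hypothesis together with $S=S_\text{old}$ and $r^{(\delta)}\le r^{(\delta_\text{old})}$. It therefore suffices to treat the executions of Replace, which happen exactly when a center point is deleted and, in subcase C3a of an insertion, when $c_z$ is handled as if it had been deleted.

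So consider a call to Replace; let $\delta_R$ be the extension level at that moment (so $\delta_R\le\delta_\text{old}$), and note the final level obeys $\delta\le\delta_R$. In case C1 the newly installed zombie center $c'$ satisfies $\dist(c',S\setminus\{c'\})>r^{(\delta_R)}$ directly by the case condition. In subcase C2a the centers along the sequence~(\ref{eq:seq_for_replac_fully}) are shifted and the single new point $p_{t_l}$ is added; since $\dist(p_{t_l},S)>r^{(\delta_R)}$ is one of the sequence conditions, and every old zombie center was at distance more than $r^{(\delta_\text{old})}$ from all old centers by the induction hypothesis and is itself an old center (hence at distance more than $r^{(\delta_R)}$ from the only new point $p_{t_l}$), one checks that $p_{t_l}$, the shifted centers, and the untouched zombie centers are all at distance more than $r^{(\delta_R)}$ from the remaining centers. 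In subcase C2b the Reassigning Operation only converts the clusters it touches into regular clusters and leaves the others untouched, while $c'$ is installed as a \emph{regular} center chosen at maximum distance from the current center set $S'$; if some zombie cluster still remains at this point then the level has not changed so far, i.e. $\delta_R=\delta_\text{old}$ (a Decreasing Operation whose first step triggered would have left no zombie cluster), and the first preliminary fact---together with the observations that removing a center only increases the covering radius and that the Reassigning Operation leaves it unchanged---gives $\dist(c',S')>r^{(\delta_R)}$, so each surviving zombie center, an untouched old center, is at distance more than $r^{(\delta_R)}$ from every other old center by induction and from $c'$ by its choice. A concluding Decreasing Operation only lowers the level, so every ``$>r^{(\delta_R)}$'' becomes ``$>r^{(\delta)}$''. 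When Replace is reached through subcase C3a the argument is identical after renaming the deleted point to $c_i'$ and the zombie cluster to $C_z$; the only extra observation needed is that the exempted center $c_i'$ and the center $c_s$ whose cluster becomes extended are non-zombie, which follows from the induction hypothesis because $\dist(c_s,c_i')\le r^{(\delta_\text{old})}$ would contradict a zombie center being at distance more than $r^{(\delta_\text{old})}$ from every other center.

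The step I expect to be the main obstacle is subcase C2b: one must rule out that the freshly chosen farthest point $c'$ happens to land within distance $r^{(\delta)}$ of a surviving zombie center. This is precisely where the first preliminary fact is indispensable---it certifies that a point at distance strictly more than $r^{(\delta_R)}$ from the center set still exists, so the farthest point is genuinely far from every old center. A secondary source of care is the bookkeeping of the extension level across the nested Decreasing Operation calls within a single update, which we track via the chain $\delta\le\delta_R\le\delta_\text{old}$.
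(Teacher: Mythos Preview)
Your overall strategy---induction on updates, with a case split that isolates the Replace calls---matches the paper's, and your treatment of C1, C2a, and C2b is correct. For C2b your argument via the ``first preliminary fact'' is essentially the contrapositive of the paper's: the paper instead splits on whether $\dist(c',S_\text{old}\setminus\{p\})>r^{(\delta_\text{old})}$ and observes that in the ``close'' case $\max_{q}\dist(q,S)\le r^{(\delta_\text{old})}$, so the final Decreasing Operation turns every center regular and the claim becomes vacuous.

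There is one genuine gap. You claim that whenever Replace is \emph{not} executed the lemma is either vacuous or follows from $S=S_\text{old}$, but subcase C3b of a point insertion (the direct branch, not the one reached after C4) is a counterexample: Replace is not called, yet $S=(S_\text{old}\setminus\{c_i'\})\cup\{p^+\}\neq S_\text{old}$, and zombie centers from $S_\text{old}$ may well survive. The repair is immediate and uses only ingredients you already supply for C3a: the exempted point $c_i'$ and the center $c_s$ are non-zombie (since $\dist(c_s,c_i')\le r^{(\delta_\text{old})}$ would contradict the induction hypothesis for a zombie center), the new center $p^+$ is installed as regular with $\dist(p^+,S_\text{old}\setminus\{c_i'\})>r^{(\delta_\text{old})}$, and hence every surviving zombie center $c_z\in S_\text{old}\setminus\{c_i'\}$ satisfies $\dist(c_z,S\setminus\{c_z\})>r^{(\delta_\text{old})}\ge r^{(\delta)}$. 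The paper treats C3b explicitly along exactly these lines.
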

\begin{proof}
    The proof is by induction on the number of point updates. For the base case, the statement holds trivially since after the preprocessing phase all the center points are regular centers. For the induction step, we analyze separately the two possible types of point updates.
    
    \paragraph{Point deletion.}
    Let $p$ be the deleted point from the point set $P$, and let $P$ be the updated point set without the deleted point $p$. If $p$ is not a center point (i.e., $p \notin S_\text{old}$), then the algorithm
    just calls the Decreasing Operation. In turn, we have that $S = S_\text{old}$ and
    by Observation~\ref{obs:decrOper} it holds that $\delta \leq \delta_\text{old}$ and $r^{(\delta)} \leq r^{(\delta_\text{old})}$.
    Together with the induction hypothesis, we can infer that the center points which were zombie centers before the update, satisfy the statement after the point update as well.
    Also based on Observation~\ref{obs:decrOper}, the Decreasing Operation does not convert any center point to a zombie center, and so the claim holds for this case.
    Otherwise assuming that the deleted point $p$ is the $i$-th center point (i.e., $p \in S_\text{old}$ and $p = c_i$), 
    we analyze the following cases of the algorithm:

    \begin{enumerate}
        \item Assume that the algorithm proceeds either with case C1 or with case C2a. If the algorithm proceeds with case C1, let $\overline{c}$ be the corresponding new zombie center $c'$ of the zombie cluster $C_i$. Otherwise if the algorithm proceeds with case C2a, let $\overline{c}$ be the corresponding
        point $p_{t_l}$ from the sequence that satisfies~(\ref{eq:seq_for_replac_fully}).
        By construction, it holds that $\dist(\overline{c}, S_\text{old} \setminus \{p\}) > r^{(\delta_\text{old})}$ 
        (see also Line~\ref{algline:caseC1_fully} for C1, and Line~\ref{algline:ptl_S} for C2a), and also we have that $S = (S_\text{old} \setminus \{p\}) \cup \{\overline{c}\}$.
        The algorithm calls the Decreasing Operation and by Observation~\ref{obs:decrOper} it holds that $r^{(\delta)} \leq 
        r^{(\delta_\text{old})}$. By the induction hypothesis, every center point $c_z$ which was a zombie center in $S_\text{old}$ before the update is at a distance greater than $r^{(\delta_\text{old})}$ from the set $S_\text{old} \setminus \{c_z\}$. Hence every center point $c_z$ which was a zombie center in $S_\text{old}$ 
        before the update is at a distance greater than $r^{(\delta)}$ from the set $S \setminus \{c_z\}$ as well.
        Moreover, notice that $\overline{c}$ is the only point in the metric space which can be converted from a non-zombie center to a zombie center after the update. Finally observe that during case C2a, the shifting of indices in Line~\ref{algline:shift_centers} is along zombie centers (see the first condition of~(\ref{eq:seq_for_replac_fully})), and thus the claim follows.
            
        \item Assume that the algorithm proceeds with case C2b, and let $c'$ be the corresponding new center of maximum distance from the set $S_\text{old} \setminus \{p\}$. By construction we have that 
        $S = (S_\text{old} \setminus \{p\}) \cup \{c'\}$, and observe that
        the Reassigning Operation does not convert any center point to a zombie center.
        If the distance between the point $c'$ and the set $S_\text{old} \setminus \{p\}$
        is greater than $r^{(\delta_\text{old})}$, then the claim follows by using the same arguments as before.
        Otherwise assume that $\dist(c', S_\text{old} \setminus \{p\}) \leq r^{(\delta_\text{old})}$. By construction we have that
        $S_\text{old} \setminus \{p\} \subseteq S$, which implies that:
        \begin{equation*} 
           \max_{q \in P} \dist(q, S) \;\leq\; \max_{q \in P} \dist(q, S_\text{old} \setminus \{p\}) \;=\; \dist(c', S_\text{old} \setminus \{p\}) \;\leq\; r^{(\delta_\text{old})}.
        \end{equation*}
        Therefore due to the Decreasing Operation in Lines~\ref{algline:decr_radius_1_fully}-\ref{algline:cluster_reg_1_fully}, all the center points are converted to regular centers, and thus the claim holds trivially.
        \end{enumerate}
        
    \paragraph{Point insertion.}
    Let $p^+$ be the inserted point into the point set $P$, and let $P$ be the updated point set including the inserted point $p^+$.
    If $p^+$ is within distance $r^{(\delta_\text{old})}$ from a non-zombie center point of $S_\text{old}$, 
    then the algorithm just calls the Decreasing Operation. In turn, we have that $S = S_\text{old}$ and by Observation~\ref{obs:decrOper}
    it holds that $\delta \leq \delta_\text{old}$ and $r^{(\delta)} \leq r^{(\delta_\text{old})}$. Together with the induction hypothesis, we can infer that the center points which were zombie centers before the update, satisfy the statement after the point update as well.
    Also observe that the Decreasing Operation does not convert any center point to a zombie center, and so the claim holds for this case.
    
    Otherwise assume that the new inserted point $p^+$ is at a distance greater than $r^{(\delta_\text{old})}$ from all the non-zombie center points of $S_\text{old}$. In this situation, the algorithm proceeds either with case C3 or with case C4, and we analyze the two cases separately:

    \begin{enumerate}
    \item Assume that the algorithm proceeds with case C4. If the algorithm does not call the Increasing Operation, then all the points must be within distance $r^{(\delta_\text{old})}$ from the set $S_\text{old}$, and
    the set of centers is not modified (i.e., $S = S_\text{old}$). In the end, the algorithm calls the Decreasing Operation which in Lines~\ref{algline:decr_radius_1_fully}-\ref{algline:cluster_reg_1_fully} converts all the center points to regular centers, and so the claim holds trivially.
    Otherwise if the algorithm calls the Increasing Operation, observe that all the center points are converted to regular centers in Line~\ref{algline:cluster_reg_3_fully}. If the algorithm proceeds further
    with case C3b after case C4 in Line~\ref{algline:C3b_after_C4}, notice that no center point is converted
    to a zombie center. In the end, the algorithm calls the Decreasing Operation which
    does not convert any center point to a zombie center based on Observation~\ref{obs:decrOper}, and so the claim holds again trivially.
    
    \item Assume that the algorithm proceeds with case C3 in Line~\ref{algline:ci_cj_leq_rdelta_fully}, where an older center point $c_i$ is exempted from being a center, and a center point $c_s$
    is converted to extended center in Line~\ref{algline:cs_extend_1}. 
    To avoid confusion with the reset of $c_i$,
    let $c_i'$ be the old point $c_i$ determined in Line~\ref{algline:determ_i_fully}. By construction the algorithm
    removes the point $c_i'$ from the set of centers and adds a point $c'$ to the set of centers, which means that $S = (S_\text{old} \setminus \{c_i'\}) \cup \{c'\}$. Also by construction of the algorithm and Observation~\ref{obs:decrOper}, we have that $\delta \leq \delta_\text{old}$ and $r^{(\delta)} \leq r^{(\delta_\text{old})}$. We split the analysis in two subcases:
    \begin{itemize}
        \item If the algorithm proceeds with case C3a, the updated center point $c_i$ becomes a regular center in Line~\ref{algline:C3a_reg},
        the corresponding zombie cluster $C_z$ is updated, and then the algorithm calls the function Replace() in Line~\ref{algline:call_replac}. 
        In turn, the algorithm proceeds accordingly either with case C1 or with case C2, with the goal to locate the new $z$-th center point. To that end, the claim follows by using the same arguments
        as before in the analysis of the point deletion, by replacing in the proof the (zombie cluster) $C_i$ with $C_z$ and the (deleted) point $p$ with $c_i'$.
        
        \item If the algorithm proceeds with subcase C3b, then by construction the 
        point $c'$ is actually the new inserted point $p^+$ and it holds that
        $\dist(c', S_\text{old} \setminus \{c_i'\}) > r^{(\delta_\text{old})}$.
        By the induction hypothesis, every center point $c_z$ which was a zombie center in $S_\text{old}$ 
        before the update is at a distance greater than $r^{(\delta_\text{old})}$ from the set $S_\text{old} \setminus \{c_z\}$. Hence every center point $c_z$ which was a zombie center in $S_\text{old}$ 
        before the update is at a distance greater than $r^{(\delta)}$ from the set $S \setminus \{c_z\}$ as well. 
        Note that the new point $c'$ becomes a regular center after the update. Therefore by construction and  Observation~\ref{obs:decrOper}, there is no center point which is converted from a non-zombie to a zombie center,
        and this concludes the claim.
    \end{itemize}
    \end{enumerate}
\end{proof}

\antonis{This is new. Check for correctness.}
\begin{observation} \label{obs:rad_change_reg}
    Whenever the value of the extension level changes (and in turn the value of the associated radius),  all centers become regular.
\end{observation}
\begin{proof}
    The extension level can change only due to~\cref{algline:incr_ext_lvl} during the Increasing Operation or due to~\cref{algline:decr_radius_2_fully} during the Decreasing Operation. In both cases, all center points become regular in~\cref{algline:cluster_reg_3_fully} and~\cref{algline:cluster_reg_1_fully} respectively.
\end{proof}

\antonis{Updated proof. Check for correctness.}
\begin{lemma} \label{lem:rem_ci_reg} 
    After a point insertion, assume that the fully dynamic algorithm removes a center point $c_i$ from the set of centers.
    Then the point $c_i$ must have been a regular center before the algorithm processes the point insertion.
\end{lemma}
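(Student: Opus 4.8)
Going through \textsc{InsertPoint}, the set $S$ shrinks only in case C3 (Line~\ref{algline:rem_from_S_1}) or in the branch of case C4 that continues into case C3b (Line~\ref{algline:update_S_1}); every call to \textsc{Replace} only \emph{adds} a point to $S$ (Lines~\ref{algline:add_to_S_2},~\ref{algline:add_to_S_3},~\ref{algline:add_to_S_4}) and otherwise merely permutes indices and cluster memberships. So the removed center is exactly the point denoted $c_i$ in one of these two lines, and in both cases it is selected together with a center $c_s$ with $\dist(c_s,c_i)\le r^{(\delta)}$, where $s$ is the smallest index of a center that has a larger-indexed center within distance $r^{(\delta)}$, and $s<i$ (the conditions~(\ref{choose_cs_ci_fully}), fixed in Line~\ref{algline:determ_i_fully} resp.\ Line~\ref{algline:determ_i_fully_1}). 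In the case-C4 branch the Increasing Operation must have run beforehand --- otherwise $\max_{p\in P}\dist(p,S)\le r^{(\delta)}$ already holds and the test in Line~\ref{algline:C3b_after_C4} fails --- and it turns every cluster regular (Line~\ref{algline:cluster_reg_3_fully}); since the extension level only increases afterwards, $c_i$ is still regular at the moment it is merged into $C_s$, which is what the approximation bound actually uses. So the substantive case is case C3, where the pair $(c_s,c_i)$ is picked at the \emph{unchanged} extension level $\delta=\delta_\text{old}$, i.e.\ on the configuration present right after the previous update, which is precisely the configuration ``before the algorithm processes the point insertion''.

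\textbf{Excluding zombie and extended.} In case C3, $c_i$ cannot be a zombie center: by Lemma~\ref{lem:all_zomb_more_r} applied to the previous update a zombie center is at distance $>r^{(\delta_\text{old})}$ from every other center, contradicting $\dist(c_s,c_i)\le r^{(\delta_\text{old})}$. To exclude that $c_i$ is an extended center I would set up and prove, by induction on the number of updates, the following \emph{ordering invariant}: at the start of every update, for every extended center $c_e$ no center of smaller index lies within distance $r^{(\delta)}$ of $c_e$ (equivalently, $c_e$ has the smallest index among the centers in its $r^{(\delta)}$-ball). Granting this, an extended $c_i$ would be contradicted by $c_s$, which has index $s<i$ and satisfies $\dist(c_s,c_i)\le r^{(\delta_\text{old})}$; hence $c_i$ is a regular center, which proves the lemma.

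\textbf{Main obstacle: maintaining the ordering invariant.} The delicate part is the inductive step. The invariant is created correctly: whenever a cluster turns extended (Line~\ref{algline:cs_extend_1} or Line~\ref{algline:cs_extend_2}) its center is, by construction, the smallest index possessing a larger-indexed close partner, so any smaller-indexed center within $r^{(\delta)}$ of it would itself have been an eligible, strictly smaller choice --- contradiction. It is preserved because (i) the extension level never grows while some cluster remains extended, as any invocation of the Increasing Operation re-regularizes everything (Line~\ref{algline:cluster_reg_3_fully}), so $r^{(\delta)}$-balls never widen; (ii) every center newly inserted into $S$ (in cases C1, C2a, C2b, C3a, C3b, C4$\to$C3b, and also on a point deletion) is at distance $>r^{(\delta)}$ from all current centers, or else the trailing Decreasing Operation re-regularizes all clusters, so such a center never joins an extended center's $r^{(\delta)}$-ball; and (iii) an extended center never occupies a position overwritten by the index reshuffling --- the positions touched are those of the removed center (not extended, by the inductive hypothesis itself) and of zombie centers (which by Lemma~\ref{lem:all_zomb_more_r} are never in an extended center's $r^{(\delta)}$-ball), together with the ones permuted inside \textsc{Replace}, so neither the extended center nor any member of its $r^{(\delta)}$-ball has its index changed. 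Verifying that items (i)--(iii) indeed cover every branch of \textsc{InsertPoint} and \textsc{DeletePoint}, in particular the nested case C3a that re-enters \textsc{Replace}, is the most laborious step, but is routine once the invariant is on the table; the other ingredient, Lemma~\ref{lem:all_zomb_more_r}, is already available.
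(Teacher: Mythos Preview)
Your approach is correct and tracks the paper's argument, but you make explicit what the paper compresses into a single sentence. The paper's entire proof is five lines: it uses Lemma~\ref{lem:all_zomb_more_r} to rule out zombie (as you do), and then rules out extended with the bare assertion ``the only way for a center to become an extended center is either in Line~\ref{algline:cs_extend_1} or in Line~\ref{algline:cs_extend_2}. Therefore, since there is an index $s$ smaller than $i$ satisfying~(\ref{choose_cs_ci_fully}), the point $c_i$ cannot be an extended center.'' That ``Therefore'' is exactly your ordering invariant, and your items (i)--(iii) --- level never grows while an extended cluster survives, every newly added center is $>r^{(\delta)}$ from all others (or everything regularizes), and index reshuffling touches only zombies --- are precisely the facts needed to justify it. The paper leaves all of this to the reader.

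Your remark on case C4 is also well taken. The paper's proof writes $\dist(c_i, S_\text{old}\setminus\{c_i\}) \le r^{(\delta_\text{old})}$ as if it covered both Line~\ref{algline:determ_i_fully} and Line~\ref{algline:determ_i_fully_1}, but in the latter the bound is only against the \emph{increased} radius, so the argument as written does not apply. As you note, the Increasing Operation has already regularized everything by the time $c_i$ is selected there, and the only consumer of this lemma (Lemma~\ref{lem:C_covers_fully}) handles case C4 by a separate argument anyway, so the slip is harmless --- but you are right to flag it.
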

\begin{proof}
    To avoid confusion with the reset of $c_i$, let $c_i^\old$ be the old point $c_i$ before the algorithm processes the point insertion. With this notation, we have to prove that the point $c_i^\old$ must have been a regular center before the point insertion. Notice that the point $c_i^\old$ must satisfy (\ref{choose_cs_ci_fully}) and is determined either in Line~\ref{algline:determ_i_fully} or in Line~\ref{algline:determ_i_fully_1}. Hence, it holds
    that $\dist(c_i^\old, S_\text{old} \setminus \{c_i^\old\}) \leq r^{(\delta_\text{old})}$. 
    Let $c_s$ be the corresponding center point which is determined either in Line~\ref{algline:determ_i_fully} or in Line~\ref{algline:determ_i_fully_1} and it is responsible for removing $c_i^\old$ from the set of centers. Since $c_s, c_i^\old$ satisfy~(\ref{choose_cs_ci_fully}), we have $s < i$ and $\dist(c_s, c_i^\old) \leq r^{(\delta_\text{old})}$. Based on Lemma~\ref{lem:all_zomb_more_r}, the point $c_i^\old$ could not have been a zombie center before the point insertion, and so it must have been either a regular or an extended center.

    Suppose to the contrary that the point $c_i^\old$ is an extended center when it is removed from the set of centers, and let $\tau$ be the most recent update at which $c_i^\old$ becomes an extended center. Observe that the only way for a center to become an extended center is either in Line~\ref{algline:cs_extend_1} or in Line~\ref{algline:cs_extend_2}, and let $\delta_\tau$ be the value of the extension level at the $\tau$-th update.
    
    \begin{claim} \label{clm:ext_lvl_same}\antonis{This is new, please check!}
        Based on the assumptions made so far, it holds that $\delta_\tau = \delta_\text{old}$.
    \end{claim}
    \begin{proof}
        By the definition of $\tau$, the point $c_i^\old$ remains an extended center from the $\tau$-update until just before the current point insertion.
        Suppose to the contrary that $\delta_\tau \neq \delta_\text{old}$, due to an earlier update. 
        Based on~\cref{obs:rad_change_reg}, the old center point $c_i^\old$ must have become regular at the first update at which the extension level differs from $\delta_\tau$. However, this contradicts the fact that $c_i^\old$ remains extended from the $\tau$-update until just before the current point insertion, implying that the extension level must have remained the same throughout this interval.
    \end{proof}

    Therefore by~\cref{clm:ext_lvl_same}, it follows that $r^{(\delta_\tau)} = r^{(\delta_\text{old})}$.
    Note that the point $c_s$ must have become a center before the $\tau$-th update, as otherwise due to~\cref{algline:ifnonzombie}, the point $c_s$ could not have been a center at the current update.
    Finally, we can conclude the following:
    \begin{itemize}
        \item Both points $c_s$ and $c_i^\old$ belong to the set of centers at the $\tau$-th update.
        \item $\dist(c_s, c_i^\old) \leq r^{(\delta_\tau)}$.
        \item $s < i$.
    \end{itemize}
    Nevertheless, this contradicts the condition that $i$ is the minimum index satisfying~(\ref{choose_cs_ci_fully}) at the $\tau$-th update. As a result, the point $c_i^\old$ cannot be an extended center before the current update, and thus $c_i^\old$ must have been a regular center.
\end{proof}

\subsubsection{Proof of \ref{inv_3_fully}}
Consider a point update sent by the adaptive adversary.
Let $S^{(\text{nz})}$ denote the ordered set of the most recent non-zombie centers after the fully dynamic algorithm has processed the point update. Similarly,
let $S^{(\text{nz})}_\text{old}$ denote the ordered set of the most recent non-zombie centers
before the point update (see also Figure~\ref{fig:example_nz}).
In other words, the $i$-th point in $S^{(\text{nz})}$ is
the last non-zombie center of cluster $C_i$ in the metric space.

Note that the set $S^{(\text{nz})}$ may be different than the set of non-zombie centers in $S$ (i.e., $S^{(\text{nz})} \neq S_\text{nz}$).
Similarly, the set $S^{(\text{nz})}_\text{old}$ may be different than the set of non-zombie center in $S_\text{old}$.
Furthermore, the sets $S^{(\text{nz})}_\text{old}, S^{(\text{nz})}$ may even contain points in the metric space that are no longer part of the point set $P$ due to prior point deletions. We remark that the sets $S^{(\text{nz})}_\text{old}, S^{(\text{nz})}$ are auxiliary sets which are used in the rest of the analysis. \antonis{Should we prove in general that a point belongs to exactly one cluster?}

\begin{figure}
    \centering
    \begin{subfigure}{.33\textwidth}
      \centering
      \includegraphics[width=0.4\linewidth]{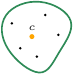}
    \end{subfigure}%
    \begin{subfigure}{.33\textwidth}
      \centering
      \includegraphics[width=0.4\linewidth]{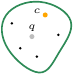}
    \end{subfigure}
    \begin{subfigure}{.33\textwidth}
      \centering
      \includegraphics[width=0.4\linewidth]{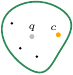}
    \end{subfigure}
     \caption{The cluster is initially regular or extended, and after the point deletion becomes a zombie cluster. \\
     \underline{Left figure}: The center $c$ (which is initially either regular or extended) is also the most recent non-zombie center. The adversary deletes $c$. \\
     \underline{Middle figure}: The zombie center $c$ is a new point, and now $q$ is the most recent non-zombie center. Note that $q$ is not part of the point set $P$ anymore. The adversary deletes $c$. \\
     \underline{Right figure}: The new zombie center $c$ is a new point, and $q$ is still the most recent non-zombie center.
    \label{fig:example_nz}}
\end{figure}

\begin{observation} \label{obs:DecrOperProp}
    Whenever the fully dynamic algorithm decreases the value of the extension level (i.e., $\delta < \delta_\text{old}$) during 
    the Decreasing Operation, the current set of centers $S$ is a \distDS{$r^{(\delta)}$} for 
    the updated value $\delta$ of the extension level, and also it holds that $S^{(\text{nz})} = S_\text{reg} = S$. In particular, every point $q \in P$ is within distance $r^{(\delta)}$ from the center point $c_j$,
    where $c_j \in S$ is the regular center of the regular cluster $C_j$ that contains the point $q$.
\end{observation}

The next lemma proves that \ref{inv_3_fully} is satisfied over the course of the algorithm. The main statement of Lemma~\ref{lem:C_covers_fully} is property $P_1$, while property $P_2$ serves as an auxiliary result for the validity of $P_1$. We point out that
for the validity of both $P_1$ and $P_2$, it is important that our fully dynamic algorithm does not add new points to a zombie cluster, as stated in~\cref{lem:zomb_no_new_point}.\footref{footnote:zmb_new_points}

\begin{lemma} \label{lem:C_covers_fully}
    After a point update, consider a point $q \in P$ and let $C_j$ be the cluster that contains it.
    Recall that $c_j \in S$ is the center of the cluster $C_j$, and let $c^{(\text{nz})}_j \in S^{(\text{nz})}$ be the most recent non-zombie center of $C_j$. Then the following two properties $P_1$ and $P_2$ are satisfied:
    
    \begin{enumerate}[label=\subscript{P}{\arabic*}.]
        \item \begin{enumerate}
        \item If $c_j$ is a regular center (i.e., $c_j \in S_\text{reg}$),
        then $\dist(q, c_j) \leq r^{(\delta)}$.

        \item If $c_j$ is an extended center (i.e., $c_j \in S_\text{ext}$),
        then  $\dist(q, c_j) \leq 2r^{(\delta)}$.

        \item If $c_j$ is a zombie center (i.e., $c_j \in S_\text{zmb}$),
        then $\dist(q, c_j) \leq 5r^{(\delta)}$.
        \end{enumerate}  

        \item If $q$ is not a center (i.e., $q \in P \setminus S$), then 
        $\dist(q, c^{(\text{nz})}_j) \leq 2r^{(\delta)}$.
    \end{enumerate}
\end{lemma}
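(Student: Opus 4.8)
The plan is to prove $P_1$ and $P_2$ simultaneously by induction on the number of point updates, with the induction hypothesis strengthened by the auxiliary fact that no non-center point ever enters a cluster while it is a zombie cluster (the only additions to a zombie cluster are the center points installed by \textsc{Replace}). The base case is the preprocessing phase: every cluster is regular with all its points within $r^{(0)}$ of its center, so $P_1$(a) holds, $P_1$(b)--(c) are vacuous, and $P_2$ follows from $P_1$(a) since $c^{(\text{nz})}_j = c_j$ everywhere. For the inductive step, fix an update, write $S_\text{old}, \delta_\text{old}$ for the state just before it, and trace the algorithm through each of its branches --- in a point deletion, cases C1, C2a, C2b; in a point insertion, cases C3a, C3b, and C4 followed by C3b --- always finishing with the concluding Decreasing Operation.

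Two structural observations keep the case analysis under control. First, the extension level changes only inside the Increasing Operation and, only by decreasing, inside the Decreasing Operation, and in both situations every cluster is first reset to a regular cluster (Line~\ref{algline:cluster_reg_3_fully} and Observation~\ref{obs:DecrOperProp}). Hence $r^{(\delta)}$ stays constant over the entire lifetime of any fixed zombie cluster, so for a zombie cluster $C_j$ that existed already before the update, neither its most recent non-zombie center $c^{(\text{nz})}_j$ nor its radius changes, and $P_1$(c) and $P_2$ for $C_j$ are inherited verbatim from the hypothesis. Second, a cluster $C_s$ becomes extended only in case C3/C4, by absorbing the points of a cluster $C_i$ with $\dist(c_s, c_i) \le r^{(\delta_\text{old})}$, and $C_i$ is a regular cluster by Lemma~\ref{lem:rem_ci_reg}; thus each absorbed point $q$ satisfies $\dist(q, c_s) \le \dist(q, c_i) + \dist(c_i, c_s) \le 2r^{(\delta_\text{old})}$, while the points already in $C_s$ satisfy $\dist(q, c_s) \le 2r^{(\delta_\text{old})}$ by the hypothesis ($P_1$(a) or $P_1$(b)); note $c_s$ is non-zombie by Lemma~\ref{lem:all_zomb_more_r}. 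Since $r^{(\delta)} \le r^{(\delta_\text{old})}$ by Observation~\ref{obs:decrOper} (and if $\delta$ strictly decreases then everything is regular and $P_1$(a) holds trivially), this yields $P_1$(b) for $C_s$, and $P_2$ because $c_s$ is itself the most recent non-zombie center of $C_s$ after the update.

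It remains to handle the replacement of a deleted or exempted center inside \textsc{Replace}, and the concluding Decreasing Operation. In case C2b the Reassigning Operation moves every point of the orphaned cluster into a cluster whose center is within $r^{(\delta)}$, turns those clusters regular, and recursively drains any point that ends up farther than $r^{(\delta)}$; on termination every affected cluster is regular with all points within $r^{(\delta)}$ of its center, re-establishing $P_1$(a) and hence $P_2$, while the new $i$-th cluster is the singleton $\{c_i\}$. In cases C1 and C2a the new center of a cluster $C$ comes from the sequence, and the decisive fact is that it lies within $r^{(\delta)}$ of some non-center point $p \in C$: for C1 the new center is itself such a point; for C2a the new center of a link $C_{t_j}$ with $j < l$ is the old zombie center of $C_{t_{j+1}}$, which by the sequence condition is within $r^{(\delta)}$ of $p_{t_j} \in C_{t_j}$, while for the last link the new center is $p_{t_l} \in C_{t_l}$ itself. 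By the hypothesis $P_2$, both $p$ and any non-center point $q$ of $C$ lie within $2r^{(\delta)}$ of $c^{(\text{nz})}_C$, so three applications of the triangle inequality give $\dist(q, \text{new center}) \le 2r^{(\delta)} + 2r^{(\delta)} + r^{(\delta)} = 5r^{(\delta)}$, which is $P_1$(c); $P_2$ is unaffected since no non-center point entered $C$. For the concluding Decreasing Operation: a decrease of $\delta$ is covered by Observation~\ref{obs:DecrOperProp}; its step~2 turns a cluster regular only when all its points are already within $r^{(\delta)}$ of the center; and its step~3 only moves a point $q$ into a non-zombie cluster $C_{j'}$ with $\dist(q, c_{j'}) \le r^{(\delta)}$, which is compatible with any state of $C_{j'}$ and gives $\dist(q, c^{(\text{nz})}_{j'}) = \dist(q, c_{j'}) \le r^{(\delta)}$ for $P_2$, while the clusters losing points only improve their bounds. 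The main obstacle I anticipate is the bookkeeping of case C2a: one must verify that every link of the shifted chain really is a zombie cluster to which the hypothesis $P_2$ applies, that a chain of length at least $2$ cannot coexist with a decrease of $\delta$ during the same update (so $r^{(\delta)}$ is genuinely constant), and that the triangle inequality is applied link by link so the constant never exceeds $5$ --- this is exactly where the $P_2$ half of the hypothesis is indispensable and where the constant $5$ in $P_1$(c) is tight.
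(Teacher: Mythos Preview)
Your plan follows the same inductive skeleton as the paper and correctly identifies the decisive role of $P_2$ in bounding zombie clusters: routing through $c^{(\text{nz})}_C$ and applying the triangle inequality to get $2+2+1=5$ is exactly the paper's argument for case C2a. The handling of C1, C2b, the extension step in C3, and the concluding Decreasing Operation are all sound and match the paper.

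There is, however, a genuine gap in your treatment of case C4. You assert that ``$r^{(\delta)} \le r^{(\delta_\text{old})}$ by Observation~\ref{obs:decrOper}'' and that whenever $\delta$ changes ``every cluster is first reset to a regular cluster''. But in C4 the Increasing Operation makes $r^{(\delta)} > r^{(\delta_\text{old})}$, and merely \emph{declaring} the clusters regular in Line~\ref{algline:cluster_reg_3_fully} does not by itself establish $P_1$(a). The missing step is the factor-$5$ relation $r^{(\delta)} = 5\,r^{(\delta-1)}$, which you never use: since $\delta \ge \delta_\text{old}+1$ after the Increasing Operation, the induction hypothesis $P_1$ gives $\dist(q,c_j) \le 5\,r^{(\delta_\text{old})} \le 5\,r^{(\delta-1)} = r^{(\delta)}$ for every old point, so the regular bound is now legitimately met. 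Consequently your extended-cluster estimate in C4 must read $\dist(q,c_s) \le r^{(\delta)} + r^{(\delta)} = 2r^{(\delta)}$ (with $\dist(c_s,c_i) \le r^{(\delta)}$ for the \emph{new} $\delta$), not $2r^{(\delta_\text{old})}$; and your appeal to Lemma~\ref{lem:rem_ci_reg} for the state of $C_i$ \emph{before} the update becomes unnecessary once everything has been flattened to regular inside the update.

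Two smaller omissions: you do not explicitly treat untouched regular and extended clusters (their bounds are simply inherited when $\delta$ is unchanged), nor the new regular cluster $C_i$ created in case C3a (its points lie within $r^{(\delta)}$ of $c_i$ by construction). These are routine but should be stated. With the factor-$5$ step supplied for C4 and these loose ends tied, your proof coincides with the paper's.
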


\begin{proof}
    The proof of the validity of $P_1$ and $P_2$ is by induction on the number of point updates. The base case is when the preprocessing phase has finished. Initially, every point is within distance $r^{(0)}$ from its center, all centers are regular (i.e., $S = S_\text{reg} = S^{(\text{nz})}$), and the extension level $\delta$ is zero. Hence at the beginning, both properties $P_1$ and $P_2$ are satisfied. For the induction step, we analyze the two possible types of point updates
    separately. Notice that in the following we can assume that $\delta \geq \delta_\text{old}$ and $r^{(\delta)} \geq r^{(\delta_\text{old})}$, as any decrease to the extension level implies the validity of both properties $P_1$ and $P_2$ based on Observation~\ref{obs:DecrOperProp}.

    \paragraph{Point deletion.} Let $p$ be the deleted point from the point set $P$, and let $P$ be the updated point set without the deleted point $p$. 
    During the Reassigning Operation, we assume in Line~\ref{algline:find_c_fully} that there exists such a point $c_{j'}$. 
    The next claim proves the correctness of the Reassigning Operation by arguing that this instruction is executed properly.

    \begin{claim} \label{clm:reassOper_q_atmost_r_c}
        During the Reassigning Operation, for every point $q \in Q$ there exists a center $c_{j'} \in S_\text{old} \setminus \{p\}$ such that
        $\dist(q, c_{j'}) \leq r^{(\delta_\text{old})}$ in Line~\ref{algline:find_c_fully}.
    \end{claim}
    \begin{proof}
        Notice that $p$ is the deleted point with index $i$ in $S_\text{old}$, and let $C_i'$
        be the $i$-th cluster $C_i$ in the beginning of the Reassigning Operation. Before the Reassigning Operation, the algorithm removes $p$ from its cluster and calls the Decreasing Operation. As a result, the set $C_i'$ neither contains the old center $p$ nor
        any point within distance $r^{(\delta_\text{old})}$ from a non-zombie center (see Lines~\ref{algline:switch_cluster_1_fully_if}-\ref{algline:switch_cluster_1_fully}).\footnote{Since the center
        $c_i$ is deleted at this moment, we assume that $\dist(q, c_i)$ in Line\ref{algline:switch_cluster_1_fully_if}
        returns $\infty$.} Initially the queue $Q$ is initialized to $C_i'$, and observe that the algorithm must have proceeded with case C2b in order to call the Reassigning Operation. Since the algorithm did not proceed with case C1, every point in $C_i'$ must be within distance $r^{(\delta_\text{old})}$ from some center in $S_\text{old} \setminus \{p\}$. Consequently, Line~\ref{algline:find_c_fully} is executed properly for these points in $C_i'$.
        
        Regarding the rest of the points in the queue $Q$, note that the algorithm proceeds with case C2b when there is no sequence satisfying~(\ref{eq:seq_for_replac_fully}). 
        During the Reassigning Operation, due to Lines~\ref{algline:switch_cluster_1_fully_if}-\ref{algline:switch_cluster_1_fully} of the Decreasing
        Operation, notice that whenever a point in $Q$ is within distance $r^{(\delta_\text{old})}$ from a center point $c$, then $c$ must a zombie center.
        Suppose to the contrary that at some moment during the Reassigning Operation, a point $q \in Q$ is at a distance greater than $r^{(\delta_\text{old})}$ from any other center in $S_\text{old} \setminus \{p\}$. 
        Then such a point $q$ could serve as the point $p_{t_l}$ in (\ref{eq:seq_for_replac_fully}), because all the center points along the sequence are zombie centers and also $\dist(q, S_\text{old} \setminus \{p\}) > r^{(\delta_\text{old})}$. However, this
        contradicts the fact that there is no sequence satisfying~(\ref{eq:seq_for_replac_fully}), and thus
        Line~\ref{algline:find_c_fully} is executed correctly throughout the algorithm.
    \end{proof}

    \noindent
    We analyze the validity of the two properties $P_1$ and $P_2$ under a point deletion separately. 
    
    \subparagraph{Validity of $P_2$.}
    Since after a point deletion all remaining points in $P$ belong to the point set $P_\text{old}$ before the update, the induction hypothesis can be applied to every point in $P \setminus S_\text{old}$. Notice that under point deletions, the algorithm exempts an older center point in $S_\text{old}$ from being a center only if the adversary removes it from the point set.\footnote{Recall that by ``exempt'', we mean only the removal of the center point from the set of centers, and not the shifting of indices.} Therefore, the induction hypothesis can actually be applied for all the points in $P \setminus S$ for the updated set $S$, since $P \setminus S \subseteq P_\text{old} \setminus S_\text{old}$.
    
    Consider an arbitrary point $q \in P \setminus S$, and let $C_j$ be the unique cluster that contains it before the point deletion. 
    If the point $q$ switches clusters during the Decreasing Operation in 
    Line~\ref{algline:switch_cluster_1_fully} or the Reassigning Operation in Line~\ref{algline:switch_cluster_2_fully}, then the validity of $P_2$ for the point $q$ follows by construction. Observe that a point could switch clusters also during case C2a in Line~\ref{algline:shift_centers}, but note that by construction
    these are center points and thus the claim for $P_2$ is not affected.
    Otherwise assume that the point $q$ remains part of the cluster $C_j$ after the point deletion, and let
    $c^{(\text{old})}_j \in S^{(\text{nz})}_\text{old}$ be the most recent non-zombie center of $C_j$ before the point deletion.
    We continue the analysis based on whether the point $c^{(\text{old})}_j$ is still the most recent non-zombie center of $C_j$
    after the point deletion:

    \begin{itemize}
        \item If $c^{(\text{old})}_j \in S^{(\text{nz})}$, 
        then by the induction hypothesis and as $q \in C_j$, it follows that $\dist(q, c_j^{(\text{nz})}) \leq 2r^{(\delta_\text{old})}$.
        Recall that we can assume that $r^{(\delta)} \geq r^{(\delta_\text{old})}$ based
        on Observation~\ref{obs:DecrOperProp}, and thus the point $q$ satisfies the statement of $P_2$.
        
        \item Otherwise, let $c^{(\text{new})}_j$ be the new non-zombie center of the cluster $C_j$ (i.e., $c^{(\text{new})}_j \in S^{(\text{nz})}$).
        Since the point $q$ does not switch clusters, either Line~\ref{algline:cluster_reg_1_fully} or Line~\ref{algline:cluster_reg_2_fully} is executed during the Decreasing Operation, or Line~\ref{algline:cluster_reg_reas_oper} is executed during the Reassigning Operation. Also observe that the point $q$ does not enter the queue $Q$ in~\cref{algline:add_farpnt_to_Q}, as that would cause $q$ to switch clusters.\footnote{Notice that during the Reassigning Operation, some points may switch clusters while others may not.} In turn, all the points in $C_j$ must be within distance $r^{(\delta)}$ from the point $c^{(\text{new})}_j \in S$ (which is also a regular center in the updated set $S$). Therefore as the point $q$ belongs to $C_j$, the validity of $P_2$ follows. 
    \end{itemize}

    \subparagraph{Validity of $P_1$.}
    If the deleted point $p$ is not a center point (i.e., $p \notin S_\text{old}$), then 
    the point $p$ is removed from the cluster it belongs to, and the algorithm calls the Decreasing Operation. 
    Consider an arbitrary point $q \in P$. If the point $q$ ends up in a non-zombie cluster due to either 
    Line~\ref{algline:cluster_reg_1_fully}, Line~\ref{algline:cluster_reg_2_fully}, or
    Line~\ref{algline:switch_cluster_1_fully} during the Decreasing Operation, then the point $q$ satisfies the statement of $P_1$.
    Otherwise the point $q$ remains in the same cluster whose state and center do not change, and the validity of $P_1$
    follows by the induction hypothesis and the fact that $r^{(\delta_\text{old})} \leq r^{(\delta)}$. Otherwise assuming that the deleted point $p$ is the $i$-th center point (i.e., $p \in S_\text{old}$ and $p = c_i$),
    we analyze the two possible cases of the algorithm:

    \begin{enumerate}
        \item Assume that the algorithm proceeds with case C1, and let $c'$ be the corresponding new zombie center. Also let $C_i^{(\text{old})}$ be the corresponding old cluster before the point deletion.
        By the validity of $P_2$ before the update, there is a point $c^{(\text{old})}_i \in S^{(\text{nz})}_\text{old}$ such that $\dist(q, c^{(\text{old})}_i) \leq 2r^{(\delta_\text{old})}$ for every point $q \in C_i^{(\text{old})} \setminus S_\text{old}$. Recall that the point $c^{(\text{old})}_i$ is the most recent non-zombie center of $C_i^{(\text{old})}$ before the update. Based on the way the algorithm selects the point $c'$, it holds that $c' \in C_i^{(\text{old})} \setminus S_\text{old}$, which in turn means that $\dist(c', c^{(\text{old})}_i) \leq 2r^{(\delta_\text{old})}$.
        Thus by the triangle inequality, it follows that $\dist(q, c') \leq \dist(q, c^{(\text{old})}_i) + \dist(c', c^{(\text{old})}_i) \leq 4r^{(\delta_\text{old})}$ 
        for every point $q \in C_i^{(\text{old})} \setminus S_{\text{old}}$. 
        Since the updated zombie cluster $C_i$ does not receive new points based on~\cref{lem:zomb_no_new_point}, we have $C_i \setminus S \subseteq C_i^{(\text{old})} \setminus S_\text{old}$. In turn together with the fact that $r^{(\delta_\text{old})} \leq r^{(\delta)}$, we get that
        $\dist(q, c') \leq \dist(q, c^{(\text{old})}_i) + \dist(c', c^{(\text{old})}_i) \leq 4r^{(\delta)}$ for every point $q \in C_i \setminus S$. 
        Moreover, notice that every point $q \in P$ that switches clusters during the Decreasing Operation in Line~\ref{algline:switch_cluster_1_fully}
        satisfies the statement of $P_1$ by construction. 
        
        Finally, consider an arbitrary point $q \in P \setminus C_i$ which belongs to the same cluster $C_j$ before and after the point deletion.
        By construction we have $S = (S_\text{old} \setminus \{p\}) \cup \{c'\}$, which means that the center point $c_j \in S_\text{old} \setminus \{p\}$ remains the center of the cluster $C_j$ before and after the point deletion. 
        If the center $c_j$ becomes regular during the Decreasing Operation either in Line~\ref{algline:cluster_reg_1_fully} or in Line~\ref{algline:cluster_reg_2_fully}, the point $q$ satisfies the statement of $P_1$. Otherwise by the induction hypothesis, 
        it holds that $\dist(q, c_j) \leq \alpha r^{(\delta_\text{old})}$ where $\alpha$ is either $1$, $2$, or $5$ based
        on whether $c_j$ was a regular, extended, or zombie center, respectively. 
        By construction and using Observation~\ref{obs:decrOper}, we can deduce that the state of $c_j \in S$ does not change (e.g., if $c_j$ was extended then it remains extended), and together with the fact that $r^{(\delta_\text{old})} \leq r^{(\delta)}$, the validity of $P_1$ follows.

        \item Assuming that the algorithm proceeds with case C2, we analyze the two subcases separately:
        \begin{enumerate} 
            \item Assume that the set of centers is modified due to case C2a, and let 
            $p_{t_1}, c_{t_2}, p_{t_2}, \;\ldots\;, c_{t_l}, p_{t_l}$ be the corresponding sequence satisfying~(\ref{eq:seq_for_replac_fully}). 
            If any point $q \in P$ switches clusters during the Decreasing Operation in Line~\ref{algline:switch_cluster_1_fully}, then $q$ satisfies the statement of $P_1$. As explained before in the proof about the validity of $P_2$, a point could switch clusters also in Line~\ref{algline:shift_centers}, but this happens only for center points and thus the claim for $P_1$ is not affected.
            
            Hence it remains to examine points which do not switch clusters.
            Consider an arbitrary point $q \in P$, and let $C_{j'}$ be the cluster containing it before and after the point deletion.
            If the most recent non-zombie center of $C_{j'}$ is updated during the Decreasing Operation either in Line~\ref{algline:cluster_reg_1_fully} or in Line~\ref{algline:cluster_reg_2_fully}, then the point $q$ is within distance $r^{(\delta)}$ from the regular center $c_{j'}$ of the regular cluster $C_{j'}$, as needed for $q$. Now regarding the point $q \in C_{j'}$, assume that the most recent non-zombie center $c^{(\text{nz})}_{j'}$ of the cluster $C_{j'}$ does not change after the point deletion, namely $c^{(\text{nz})}_{j'} \in S^{(\text{nz})}_\text{old} \cap S^{(\text{nz})}$. Recall that we have $S = (S_\text{old} \setminus \{p\}) \cup \{p_{t_l}\}$,
            and consider the following two cases:
            \begin{itemize}
            \item If the index $j'$ is equal to an index $t_j$ of the sequence that satisfies~(\ref{eq:seq_for_replac_fully}) where $1 \leq j \leq l$,
            then by the validity of $P_2$ it holds that $\dist(q, c^{(\text{nz})}_{t_j}) \leq 2r^{(\delta)}$. 
            Based on the way the algorithm sets each $c_{t_j}$ and as the sequence satisfies the conditions of (\ref{eq:seq_for_replac_fully}), 
            there must have been a point $p_{t_j}$ in $C_{t_j}$ such that $\dist(c_{t_j}, p_{t_j}) \leq r^{(\delta_\text{old})}$.\footnote{Note that for $j = l$, 
            the center point $c_{t_l}$ is actually the point $p_{t_l}$, and so $\dist(c_{t_l}, p_{t_l}) = 0$.}
            Since before the point deletion such a point $p_{t_j}$ was in $C_{t_j}$ and the point $c^{(\text{nz})}_{t_j} = c^{(\text{nz})}_{j'}$ 
            was the most recent non-zombie
            center of the cluster $C_{t_j}$, by the validity  of $P_2$ (before the update) we conclude that $\dist(p_{t_j}, c^{(\text{nz})}_{t_j}) \leq 2r^{(\delta_\text{old})}$.
            In turn, by the triangle inequality we have  $\dist(c_{t_j}, c^{(\text{nz})}_{t_j}) \leq \dist(c_{t_j}, p_{t_j}) + \dist(p_{t_j}, c^{(\text{nz})}_{t_j}) \leq 3r^{(\delta)}$ (recall that $r^{(\delta_\text{old})} \leq r^{(\delta)}$). Thus by using the triangle inequality again, it holds that: 
            \[
                \dist(q, c_{t_j}) \;\leq\; \dist(q, c^{(\text{nz})}_{t_j}) \;+\; \dist(c^{(\text{nz})}_{t_j}, c_{t_j}) \;\leq\; 5r^{(\delta)}.
            \] 
            Finally, notice that in this case the cluster $C_{t_j}$ is a zombie cluster and since its most recent non-zombie center
            is not updated, it means that the center point $c_{t_j}$ remains a zombie center. Therefore
            for any point $q \in P$ which belongs to a cluster whose index is equal to an index $t_j$ as defined before in~(\ref{eq:seq_for_replac_fully}), the statement of $P_1$ holds.
            
            \item If the index $j'$ is not equal to any of the previous $t_j$ as defined before in~(\ref{eq:seq_for_replac_fully}), we have by construction that the center $c_{j'}$ belongs to both sets $S_\text{old}$ and $S$. Moreover the position of $c_{j'}$ in the metric space does not change, and similarly the state of $c_{j'}$ remains the same. 
            Therefore by the induction hypothesis and using that $r^{(\delta_\text{old})} \leq r^{(\delta)}$,
            we can infer that the point $q$ satisfies the statement of $P_1$, and this concludes the validity of $P_1$.
            \end{itemize}
            
            \item Assume that the set of centers is modified due to case C2b, and recall that $p$ is the deleted point with index $i$ in $S_\text{old}$. To avoid confusion with the reset of the center $c_i$ and its cluster $C_i$ in Lines~\ref{algline:reset_ci_C2b_fully} and \ref{algline:reset_Cli_C2b_fully},
            let $C_i'$ be the cluster $C_i$ before it is updated. 
            The algorithm calls the Reassigning Operation, and note that Line~\ref{algline:find_c_fully} is executed properly based on Claim~\ref{clm:reassOper_q_atmost_r_c}.
            Since the queue $Q$ in the Reassigning Operation is initialized to $C_i' \setminus \{p\}$, all points in $C_i'$ are added to a different regular cluster $C_{j'}$ (see Line~\ref{algline:switch_cluster_2_fully}). In turn, all points at a distance greater than $r^{(\delta)}$ from the center $c_{j'}$ of the affected cluster $C_{j'}$ are added to $Q$ in~\cref{algline:add_farpnt_to_Q}. As a result, by construction and by~\cref{clm:reassOper_q_atmost_r_c} using that $r^{(\delta_\text{old})} \leq r^{(\delta)}$, each point $q$ in any cluster affected during the Reassigning Operation ends up in a regular cluster $C_{j'}$ such that $\dist(q, c_{j'}) \leq r^{(\delta)}$, as required.
            Whenever a point $q \in P$ switches clusters, it ends up in a non-zombie cluster due to either Line~\ref{algline:cluster_reg_1_fully}, Line~\ref{algline:cluster_reg_2_fully}, Line~\ref{algline:switch_cluster_1_fully} or Line~\ref{algline:switch_cluster_2_fully}, and so $q$ satisfies the statement of $P_1$ by construction.
      
            Hence, it remains to consider an arbitrary point $q \in P \setminus C_i'$ that remains in the same cluster after the point deletion. If the cluster containing $q$ changes state in this subcase, it becomes a regular cluster (see Observation~\ref{obs:decrOper}), and by construction the point $q$ is within distance $r^{(\delta)}$ from its regular center. To that end, assume that the state of the cluster containing $q$ remains unchanged.
            By construction we have  $S = (S_\text{old} \setminus \{p\}) \cup \{c'\}$, and the centers with index different than $i$ in $S_\text{old}$ do not
            change positions in the metric space. Therefore by the induction hypothesis and using that $r^{(\delta_\text{old})} \leq r^{(\delta)}$,
            we can infer that such a point $q$ satisfies the statement of $P_1$, and this concludes the validity of $P_1$.
        \end{enumerate}
    \end{enumerate}
    
    \paragraph{Point insertion.} 
    Recall that based on Observation~\ref{obs:DecrOperProp}, we can assume that $\delta \geq \delta_\text{old}$ and $r^{(\delta)} \geq r^{(\delta_\text{old})}$. 
    Let $p^+$ be the inserted point into the point set $P$, and let $P$ be the updated point set including the inserted point $p^+$. If $p^+$ is within distance
    $r^{(\delta_\text{old})}$ from a non-zombie center $c_j$ of $S_\text{old}$, then the algorithm 
    adds the point $p^+$ to the cluster $C_j$, and calls the Decreasing Operation. 
    Since $c_j$ is a non-zombie center we have  $c^{(\text{nz})}_j = c_j$. 
    Thus the point $p^+$ is within distance $r^{(\delta)}$ from the most recent non-zombie center $c^{(\text{nz})}_j$, and so both $P_1$ and $P_2$ are satisfied for the new point $p^+$. Consider an arbitrary point $q \in P \setminus \{p^+\}$. If the point $q$ ends up in a non-zombie cluster due to either Line~\ref{algline:cluster_reg_1_fully},
    Line~\ref{algline:cluster_reg_2_fully}, or Line~\ref{algline:switch_cluster_1_fully} during the Decreasing Operation, then $q$ satisfy the statements of $P_1$ and $P_2$ by construction.
    Otherwise, the point $q$ remains in the same cluster whose state does not change, and the validity of both $P_1$ and $P_2$ follows by
    the induction hypothesis and the fact that $r^{(\delta_\text{old})} \leq r^{(\delta)}$.

    Otherwise assume that the new inserted point $p^+$ is at a distance greater than $r^{(\delta_\text{old})}$ from all the non-zombie centers of $S_\text{old}$. In this situation, the algorithm proceeds either with case C3 or with case C4, and we analyze the two cases separately:

    \begin{enumerate}
    \item Assume that the algorithm proceeds with case C3 in Line~\ref{algline:ci_cj_leq_rdelta_fully}, where an older center point $c_i$ is exempted from being a center. 
    To avoid confusion with the reset of $c_i$, let $c_i'$ be the old point $c_i$ determined in Line~\ref{algline:determ_i_fully} and let $C_i'$ be the corresponding cluster of $c_i'$ before the update. By construction the algorithm
    removes $c_i'$ from the set of centers and adds a point $c'$ to the set of centers, which means that $S = (S_\text{old} \setminus \{c_i'\}) \cup \{c'\}$. 
    
    At first, we show that all points in $C_i'$ satisfy the statements of both $P_1$ and $P_2$. Consider an arbitrary point $q \in C_i'$ that belonged to the old version of the cluster $C_i$.
    By Lemma~\ref{lem:rem_ci_reg} the point $c_i'$ was a regular center before the update, so by the induction hypothesis of $P_1$ the point $q$ is within distance $r^{(\delta_\text{old})}$ from  $c_i'$.
    By construction the point $q$ is added to the cluster $C_s$ of the extended center $c_s$ determined in Line~\ref{algline:determ_i_fully},
    and as (\ref{choose_cs_ci_fully}) is satisfied we have $\dist(c_s, c_i') \leq r^{(\delta_\text{old})}$.
    Therefore by the triangle inequality, all points in $C_i'$ end up in the cluster $C_s$ and are 
    within distance $2r^{(\delta_\text{old})} \leq 2r^{(\delta)}$ from the extended center $c_s$. Since the center $c_s$ is the most recent non-zombie 
    center of the cluster $C_s$, the statements of both $P_1$ and $P_2$ are satisfied for all these points 
    inside $C_s$, and in turn for all the points in $C_i'$. Furthermore, note that the new point 
    $p^+$ joins a regular cluster $C_j$ and its distance from the corresponding regular center 
    $c_j$ is at most $r^{(\delta_\text{old})} \leq r^{(\delta)}$, as needed for $p^+$. 
    
    For the analysis of the remaining points in $P \setminus (C_i' \cup \{p^+\})$, observe
    that the algorithm replaces the exempted point $c_i'$ with another point $c'$ by proceeding either with case C3a or case C3b.
    We split the analysis in two subcases: 
    \begin{enumerate}
        \item If the algorithm proceeds with case C3a, then the updated center $c_i$ becomes regular and is placed in the position of the old $z$-th center determined in Line~\ref{algline:find_cz}. By construction the points that join the updated
        cluster $C_i$ satisfy the statement of both $P_1$ and $P_2$. Moreover, the corresponding zombie cluster $C_z$ is updated in Line~\ref{algline:Clz_updated},
        and the algorithm calls the function Replace() in Line~\ref{algline:call_replac}. In turn, the algorithm proceeds accordingly either with case C1 or with case C2, aiming to locate the new $z$-th center point. To that end, for the remaining points 
        the claims follow by using the same arguments as before in the analysis of cases C1 and C2 in the point deletion, by replacing in the proof the (zombie cluster) $C_i$ with $C_z$ and the (deleted) point $p$ with $c_i'$.
        
        \item If the algorithm proceeds with case C3b, then by construction the point $c'$ is actually the new inserted point $p^+$, and it becomes the $i$-th center of the updated $S$. Together with the induction hypothesis and the construction of the Decreasing Operation, the claims for both $P_1$ and $P_2$ hold.
    \end{enumerate}
    
    \item Alternatively assume that the algorithm proceeds with case C4. If the algorithm does not call the Increasing Operation, then
    all points must be within distance $r^{(\delta_\text{old})}$ from the set $S_\text{old}$, and
    the set of centers is not modified (i.e., $S = S_\text{old}$). In the end the algorithm calls the Decreasing Operation
    which in Lines~\ref{algline:decr_radius_1_fully}-\ref{algline:cluster_reg_1_fully} converts all centers to regular, and so the claims of both $P_1$ and $P_2$ hold.
    Otherwise if the algorithm calls the Increasing Operation, all centers are converted to regular in Line~\ref{algline:cluster_reg_3_fully}, and in this case the algorithm increases the extension level by at least one. Namely, we have 
    $\delta-1 \geq \delta_\text{old}$ and in turn $r^{(\delta-1)} \geq r^{(\delta_\text{old})}$. By the induction hypothesis of $P_1$, every point in $P \setminus \{p^+\}$ is within distance $5r^{(\delta_\text{old})} \leq 5r^{(\delta-1)}$ from the center of the cluster it belongs to. 
    Since $r^{(\delta)} = 5r^{(\delta-1)}$, after the execution of the Increasing Operation
    every point in $P \setminus \{p^+\}$ is within distance $r^{(\delta)}$ from the center of the regular cluster it belongs to.    
    
    Let $p$ be a point of maximum distance from the set $S_\text{old}$ determined in Line~\ref{algline:p_max_C4}.
    If the point $p$ is not the new inserted point $p^+$, then it must be true that $\dist(p^+, S_\text{old}) \leq r^{(\delta)}$, and by construction the set of centers is not modified (i.e., $S = S_\text{old}$). Hence, as the point $p^+$ is also within distance $r^{(\delta)}$ from its regular center, the validity of both $P_1$ and $P_2$ follows.
    Otherwise if the algorithm proceeds with case C3b after case C4 in Line~\ref{algline:C3b_after_C4},
    then the algorithm assigns $p^+$ to be the $i$-th center of the updated set $S$.
    Note that by the way the Increasing Operation works, there exist such a pair of points $c_s, c_i$ in Line~\ref{algline:determ_i_fully_1}
    such that $\dist(c_s, c_i) \leq r^{(\delta)}$. Finally, by using the same arguments as before regarding the extended cluster in Line~\ref{algline:cs_extend_2}, we can conclude that
    both statements $P_1$ and $P_2$ hold.
    \end{enumerate}
\end{proof}

Consequently, we can deduce that the set of centers $S$ maintained by the algorithm is a \distDS{$5r^{(\delta)}$}, and thus \ref{inv_3_fully} is satisfied throughout the algorithm.

\subsubsection{Analysis of the Update Time and Finishing the Proof of~\cref{th:fully_consist}}
The last piece to complete the proof of our main result is to show that the update time of the fully dynamic algorithm is a polynomial in the size of the input. Note that the only ambiguous step of the algorithm is in the beginning of case C2. The next lemma demonstrates that this step can be computed in polynomial time as well.

\begin{lemma} \label{lem:runtime_seq}
    After a point update, the fully dynamic algorithm can either find a sequence satisfying~(\ref{eq:seq_for_replac_fully}) or
    determine that there is no such sequence in $O(|P| \cdot k)$ time.
\end{lemma}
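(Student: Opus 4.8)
The plan is to recast the search for a sequence satisfying~(\ref{eq:seq_for_replac_fully}) as a reachability question in an auxiliary directed graph on the cluster indices, and then argue that this question can be resolved by a single breadth-first search whose total cost is $O(|P|\cdot k)$.

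First I would set up the graph. Recall that by the time this step is reached, the algorithm has already removed $c_i = p$ from $S$ and from $C_i$, marked $C_i$ as a zombie cluster, and run the Decreasing Operation. I would take the vertex set to be $\{i\}$ together with all indices $b$ for which $c_b$ is currently a zombie center, and put a directed edge from $a$ to $b$ (with $b$ a zombie index and $a \neq b$) whenever there is a point $p \in C_a$ with $\dist(p, c_b) \le r^{(\delta)}$ and, in case $a \neq i$, also $\dist(p, c_a) > r^{(\delta)}$ (there is no condition on $p$ when $a = i$, since $C_i$ has no center at this point). I would call an index $a \neq i$ \emph{terminal} if $C_a$ contains a point $p$ with $\dist(p, c_a) > r^{(\delta)}$ and $\dist(p, S) > r^{(\delta)}$. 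Reading off witnessing points, a sequence satisfying~(\ref{eq:seq_for_replac_fully}) is then exactly a simple directed path $i = t_1 \to t_2 \to \cdots \to t_l$ in this graph ending at a terminal vertex $t_l$; such a path has length at least one, because in case~C2 every point of $C_i$ lies within distance $r^{(\delta)}$ of $S \setminus \{p\}$.

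The key step — and the one I expect to be the only real subtlety — is the observation that the distinctness requirement on the indices $t_1,\dots,t_l$ in~(\ref{eq:seq_for_replac_fully}) does not make this harder than plain reachability: a sequence exists if and only if some terminal vertex is reachable from $i$ in the graph. One direction is immediate; for the other, any directed walk from $i$ to a terminal vertex can be turned into a simple directed path by repeatedly deleting cycles, and a simple path is precisely what a valid sequence requires. Consequently it suffices to run a breadth-first search from $i$, stopping as soon as a terminal vertex is discovered (or when no undiscovered vertex can be reached, in which case no sequence exists). To recover the actual sequence I would store, with each discovered vertex $b$, a parent pointer recording both the vertex $a$ it was discovered from and the point $p \in C_a$ witnessing the edge $a \to b$; following these pointers back from the terminal vertex reconstructs the indices $t_l, t_{l-1}, \dots, t_1$ together with the points $p_{t_1}, \dots, p_{t_{l-1}}$, and $p_{t_l}$ is the witness of terminality.

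Finally I would bound the running time. Each of the at most $k$ clusters is processed at most once, and processing $C_a$ consists of, for every point $q \in C_a$, computing $\dist(q, c_a)$ and — when $a = i$ or $\dist(q, c_a) > r^{(\delta)}$ — the distances from $q$ to the at most $k$ centers of $S$, which is enough both to test terminality and to enumerate the out-neighbours of $a$. This costs $O(|C_a| \cdot k)$ time, and since the clusters partition the current point set, summing over all clusters gives $O(|P| \cdot k)$; the final path reconstruction adds only $O(k)$. This establishes the claimed bound.
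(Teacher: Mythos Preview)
Your proof is correct and follows essentially the same approach as the paper: both reduce the existence of a sequence satisfying~(\ref{eq:seq_for_replac_fully}) to a reachability problem in an auxiliary directed graph and solve it by a single graph search in $O(|P|\cdot k)$ time. The only cosmetic differences are that the paper builds the graph on the point set $P$ (with edges alternating between non-centers and centers) and runs DFS, whereas you contract to the cluster indices and run BFS; your version also makes explicit the cycle-removal step showing that the ``distinct indices'' requirement is no obstacle to plain reachability, a point the paper leaves implicit.
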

\begin{proof}
    The problem of either detecting a sequence satisfying~(\ref{eq:seq_for_replac_fully}) or
    determining that there is no such sequence can be reduced to a simple graph search problem, as follows. 
    We build a directed graph $G = (P, E)$, such that the edge set is: 
    \begin{align*}
        E \coloneqq\; &\{(p, c) \in P \times P \mid p \notin S \land c \in S \land \dist(p, c) \leq r^{(\delta)}\} \; \cup \\
        &\{(c, p) \in P \times P \mid c \in S \land p \notin S \land p \text{ is in the cluster of } c \land \dist(c, p) > r^{(\delta)}\}.
    \end{align*}
    Namely, the edges either correspond to a point and a center whose distances are at most $r^{(\delta)}$,
    or to a center and a point of its cluster whose distances are greater than $r^{(\delta)}$.
    Let $V_n$ be the set of vertices that correspond to the points of the first cluster $C_n$ in Line~\ref{algline:seq_init}.
    Then the problem of whether there is a sequence satisfying~(\ref{eq:seq_for_replac_fully}) is
    reduced to the problem of whether there exists a path from a vertex in $V_n$
    to another vertex with no outgoing edges. This graph search problem can be solved with DFS. Finally, observe that the size of $E$ and the time to build the graph and run DFS is $O(|P| \cdot k)$.
\end{proof}

\antonis{Should we mention that our runtime is independent of the aspect ratio?}
\begin{lemma} \label{lem:runtime_fully}
    The fully dynamic Algorithm~\ref{alg:fully_consist} performs $O(|P| \cdot k)$ operations per point update.
\end{lemma}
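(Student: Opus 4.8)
The plan is to bound the running time of each subroutine of Algorithm~\ref{alg:fully_consist} separately, and then observe that a single call to \textsc{DeletePoint} or \textsc{InsertPoint} triggers only a constant number of subroutine calls. Throughout I use that $|S| = k$, that the clusters partition $P$, and that we may assume $k \le |P|$ (otherwise $|P| < k$, the algorithm simply maintains $S = P$, and every update is handled in $O(|P|)$ time). The only primitives needed are: evaluating $\dist(q, S)$ for a single point $q$ in $O(k)$ time; hence evaluating $\max_{q \in P}\dist(q, S)$ in $O(|P|\cdot k)$ time, and $\min_{x, y \in S}\dist(x, y)$ in $O(k^2) = O(|P|\cdot k)$ time. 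Membership of a point in its current cluster is stored by a pointer, so it is looked up in $O(1)$.

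I would first handle \textsc{DecreasingOp}. Task~(1) tests in $O(|P|\cdot k)$ time whether $S$ is a distance-$r^{(\delta)}$ dominating set, and if so computes the smallest $\delta'$ for which $S$ is a distance-$r^{(\delta')}$ dominating set; since $r^{(\delta')} = 5^{\delta'}r$, this $\delta'$ is read off from $\max_{q \in P}\dist(q, S)$ in $O(1)$ arithmetic operations (no iteration over intermediate levels, so no dependence on the aspect ratio). Task~(2) iterates over the $k$ clusters, scanning each cluster's point set once, for a total of $O(|P| + k)$. Task~(3) iterates over all $q \in P$ and checks $O(k)$ centers for each, so $O(|P|\cdot k)$. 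The \textsc{IncreasingOp} is analogous: marking all clusters regular costs $O(|P| + k)$, and instead of incrementing $\delta$ step by step we directly set $\delta$ to the smallest value at which $S$ stops being a distance-$r^{(\delta)}$ independent set or becomes a distance-$r^{(\delta)}$ dominating set, which is again read off in $O(1)$ from $\min_{x, y \in S}\dist(x, y)$ and $\max_{q \in P}\dist(q, S)$ (these quantities do not change during the loop, since only $\delta$ changes). Hence both operations run in $O(|P|\cdot k)$ time.

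Next comes \textsc{ReassigningOp}, which I expect to be the step requiring the most care, since we must rule out that points are reprocessed many times. The key claim is that each point of $P$ is added to the working set $Q$ at most twice over the whole execution: at most once as a member of the initial set passed to the operation, and at most once as a ``far'' point harvested from a cluster. Indeed, whenever a point $q$ is removed from $Q$ it is placed into a cluster $C_{j'}$ with $\dist(q, c_{j'}) \le r^{(\delta)}$, and since centers never move during the operation and $q$ leaves a cluster only by being removed from $Q$, after this placement $q$ is never again at distance more than $r^{(\delta)}$ from its current center; so $q$ can be harvested as a far point only from its \emph{original} cluster, and only before it is first removed from $Q$. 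Consequently the while loop runs $O(|P|)$ iterations, each doing $O(k)$ work to locate an eligible center $c_{j'}$ (which exists by Claim~\ref{clm:reassOper_q_atmost_r_c}), and — by keeping a flag per cluster so that a cluster's far points are harvested only the first time it is made regular (later additions to it are close points and create no new far points) — the total harvesting cost is $O(|P|)$. Thus \textsc{ReassigningOp} runs in $O(|P|\cdot k)$ time. For \textsc{Replace}: case~C1 scans $C$ and tests each candidate against $S$, in $O(|P|\cdot k)$; case~C2 finds the sequence or reports its nonexistence in $O(|P|\cdot k)$ by Lemma~\ref{lem:runtime_seq}; subcase~C2a performs $O(l) = O(k)$ index shifts and cluster updates; and subcase~C2b calls \textsc{ReassigningOp} and then finds a farthest point, both $O(|P|\cdot k)$.

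Finally I would assemble the bound. \textsc{DeletePoint} does $O(1)$ bookkeeping and calls \textsc{DecreasingOp} at most twice and \textsc{Replace} at most once. \textsc{InsertPoint} spends $O(k)$ time testing $\dist(p^+, S_\text{nz})$, $O(k^2)$ time finding the pair $c_s, c_i$ of~(\ref{choose_cs_ci_fully}), $O(|P|)$ time merging $C_i$ into $C_s$, and then calls \textsc{IncreasingOp}, \textsc{Replace}, and \textsc{DecreasingOp} a constant number of times. Since every invoked subroutine costs $O(|P|\cdot k)$ and each is called $O(1)$ times, one point update costs $O(|P|\cdot k)$, proving the lemma. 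Beyond the \textsc{ReassigningOp} repetition argument, the only genuine subtlety is that updates to the extension level must be carried out by a direct closed-form computation rather than by repeated single increments, which is what keeps the running time free of any dependence on the aspect ratio of the metric.
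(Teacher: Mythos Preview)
Your proof is correct and considerably more detailed than the paper's, which is essentially a two-sentence sketch: ``every task takes $O(|P|\cdot k)$, and the Reassigning Operation can be implemented with a similar idea as in Lemma~\ref{lem:runtime_seq}.'' The one genuine methodological difference is your treatment of \textsc{ReassigningOp}: the paper defers to the graph-search reduction of Lemma~\ref{lem:runtime_seq} (build the auxiliary directed graph on $P$ and run DFS), whereas you argue directly that each point enters $Q$ at most a constant number of times by using a per-cluster ``already harvested'' flag. Both yield the same $O(|P|\cdot k)$ bound; your argument has the virtue of analyzing the pseudocode as written rather than replacing it by a different implementation. Your explicit observation that the extension level in \textsc{IncreasingOp} and \textsc{DecreasingOp} should be computed in closed form (via $\max_{q}\dist(q,S)$ and $\min_{x\neq y}\dist(x,y)$) rather than by single-step iteration is a point the paper leaves implicit but is indeed necessary to keep the bound free of the aspect ratio.
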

\begin{proof}
    After a point update, the algorithm can perform every task in $O(|P| \cdot k)$ operations.
    Note that the Reassigning Operation can be implemented with a similar idea as in Lemma~\ref{lem:runtime_seq}.
\end{proof}

By combining Lemma~\ref{lem:size_S_worst_recourse_fully}, Lemma~\ref{lem:invar_2_satisf_fully}, and Lemma~\ref{lem:C_covers_fully}, 
we can now finish the proof of Theorem~\ref{th:fully_consist} which we restate for convenience.

\fullyconsist*
\begin{proof}
    Consider Algorithm~\ref{alg:fully_consist} and its analysis.
    Based on Lemma~\ref{lem:C_covers_fully}, we can infer that the set $S$ is a \distDS{$5r^{(\delta)}$}, and by Lemma~\ref{lem:invar_2_satisf_fully} we have  $r^{(\delta)} \leq 10R^*$. Therefore using also Lemma~\ref{lem:size_S_worst_recourse_fully}, the claim follows.
\end{proof}
\section{Decremental Consistent $k$-Center Clustering} \label{sec:decr}
In the decremental setting, we are given as input a point set $P$ from an arbitrary metric space which undergoes point deletions. 
The goal is to decrementally maintain a feasible solution for the $k$-center clustering problem with 
small approximation ratio and low recourse. In this section, we develop a deterministic decremental algorithm that 
maintains a $6$-approximate solution for the $k$-center clustering problem
with worst-case recourse of $1$ per update, as demonstrated in the following theorem.
Recall that by $R^*$ we denote the current optimal radius of the given instance.

\begin{restatable}{theorem}{decrconsist} \label{th:decr_consist}
    There is a deterministic decremental algorithm that, given a point set $P$ from an arbitrary metric space 
    subject to point deletions and an integer $k \geq 1$,
    maintains a subset of points $S \subseteq P$ such that:  \vspace{-0.3em}
    \begin{itemize}
        \setlength\itemsep{-0.3em}
        \item The set $S$ is a \distDS{$6R^*$} in $P$ of size $k$.
        \item The set $S$ changes by at most one point per update.
    \end{itemize}
\end{restatable}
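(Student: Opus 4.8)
The plan is to obtain the claimed algorithm as the deletion‑only restriction of the fully dynamic algorithm of Theorem~\ref{th:fully_consist} (Algorithm~\ref{alg:fully_consist}), and then to rerun its analysis with sharper constants. Since there are no insertions, the Increasing (Doubling) Operation is never triggered and no cluster can ever become an \emph{extended} cluster -- only \emph{regular} and \emph{zombie} clusters occur. I would maintain, exactly as in Section~\ref{sec:fully}, an ordered set $S=\{c_1,\dots,c_k\}$ of centers, an extension level $\delta$ with associated radius $r^{(\delta)}=\beta r^{(\delta-1)}$ for a small fixed $\beta>1$, and an internal partition of $P$ into regular and zombie clusters, and I would keep three invariants: $|S|=k$; there is a point $p_S\in P\setminus S$ with $S\cup\{p_S\}$ a \distIS{$r^{(\delta-1)}$}; and $S$ is a \distDS{$(\alpha r^{(\delta)})$}, where the constant $\alpha$ (the analogue of the ``$5$'' in Section~\ref{sec:fully}) is smaller because a cluster's \emph{most recent non-zombie center} is now always a regular center rather than an extended one, and $\alpha,\beta$ are chosen so that $2\alpha\beta=6$. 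The preprocessing phase is the one from Section~\ref{sec:fully}: run the algorithm of Theorem~\ref{th:2appr_alg} for an initial $2$-approximate solution and then greedily pad $S$ to size $k$ while shrinking the working radius, so that all three invariants hold at the start.

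Handling a point deletion then mirrors the \textsc{DeletePoint} procedure of Algorithm~\ref{alg:fully_consist}. If the deleted point is not a center, remove it from its cluster and call the Decreasing (Regulating) Operation; $S$ is unchanged (recourse $0$), and since afterwards $S$ can no longer dominate at radius $r^{(\delta-1)}$ it produces a fresh witness $p_S$ for Invariant~2. If the deleted point is a center $c_i$, remove it from $S$ and $C_i$, mark $C_i$ a zombie cluster, call the Decreasing Operation, and then in order: (C1) if some $c'\in C_i$ has $\dist(c',S)>r^{(\delta)}$, make $c'$ the new $i$-th center; (C2a) otherwise search for a chain $p_{t_1},c_{t_2},p_{t_2},\dots,c_{t_l},p_{t_l}$ of the form~(\ref{eq:seq_for_replac_fully}) whose last point $p_{t_l}$ is far from all remaining centers, and if one exists add $p_{t_l}$ to $S$ and shift the centers along the chain; (C2b) otherwise run the Reassigning Operation -- every point scanned lies within $r^{(\delta)}$ of a remaining center, so all of $C_i$'s points are redistributed to regular clusters -- and take the new $c_i$ to be the point of maximum distance from the remaining $k-1$ centers. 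Exactly one point enters $S$ in each case, so the recourse is $1$; conclude with a final Decreasing Operation. The chain search reduces to a reachability test in the auxiliary digraph of Lemma~\ref{lem:runtime_seq}, so one update takes polynomial time.

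The correctness argument reuses the structure of Section~\ref{sec:fully}. Invariant~1 and worst-case recourse $1$ follow by inspecting the three cases. For Invariant~2, in every case the point newly added to $S$ (or the witness produced by the Decreasing Operation) is at distance $>r^{(\delta-1)}$ from the rest of $S$, so $S\cup\{p_S\}$ is a \distIS{$r^{(\delta-1)}$} of size $k+1$; by Corollary~\ref{cor:r_less2R} this gives $r^{(\delta-1)}<2R^*$ and hence a bound on $r^{(\delta)}$ in terms of $R^*$. For Invariant~3 I would prove the decremental analogue of Lemma~\ref{lem:C_covers_fully} by induction on the updates, tracking for each cluster its most recent non-zombie center $c^{(\mathrm{nz})}_j$ and establishing: (i) no point ever enters a zombie cluster while it stays a zombie, and a zombie center stays far from all other centers; (ii) every non-center point of $C_j$ is within a small constant times $r^{(\delta)}$ of $c^{(\mathrm{nz})}_j$; and (iii) consequently the current center of $C_j$ is within $r^{(\delta)}$ of every point of $C_j$ when $C_j$ is regular and within $\alpha r^{(\delta)}$ when $C_j$ is a zombie, with the chain shift of case~(C2a) as the extremal case. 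Combining (iii) with the radius bound from Invariant~2 yields that $S$ is a \distDS{$6R^*$}, which is Theorem~\ref{th:decr_consist}.

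The main obstacle is step~(iii): controlling how far the center of a zombie cluster can drift from the cluster's points after arbitrarily many re-deletions and chain shifts, since each shift moves a cluster's center to a point (an old zombie center of a neighbouring cluster) that is only known to be within $r^{(\delta)}$ of \emph{one} point of the cluster, so naive bookkeeping would let the radius grow without bound. The fix, as in Section~\ref{sec:fully}, is to charge distances through the fixed most recent non-zombie center and to exploit property~(i) that zombie clusters never gain points; the payoff of the deletion-only setting is that this center is always regular, which shortens the triangle-inequality chains by one hop relative to the fully dynamic case and is exactly what lets the final approximation factor drop from $50$ to $6$. A secondary point to verify is that interleaving the Decreasing Operation with cases~(C1)--(C2b) never destroys Invariants~2 and~3, which is handled by the same case split as in Section~\ref{sec:fully} together with the observation (as in Observation~\ref{obs:decrOper}) that the Decreasing Operation only lowers $\delta$ and only turns clusters regular.
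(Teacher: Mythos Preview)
Your high-level plan is sound: without insertions there are no extended clusters, so the most recent non-zombie center of every cluster is a regular center, the bound in property~$P_2$ improves from $2r^{(\delta)}$ to $r^{(\delta)}$, and consequently the zombie-cluster bound $\alpha$ in property~$P_1$ drops from $5$ to $3$ (the chain-shift case C2a gives exactly $1+1+1=3$ via the triangle inequality through $c^{(\mathrm{reg})}_{t_j}$ and $p_{t_j}$). The recourse and Invariant~1/2 arguments carry over essentially verbatim.

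The gap is in the arithmetic that gets you to the constant $6$. You keep the extension-level framework $r^{(\delta)}=\beta r^{(\delta-1)}$ with $\beta>1$ and Invariant~2 in the form ``$S\cup\{p_S\}$ is a \distIS{$r^{(\delta-1)}$}''. That yields, via Corollary~\ref{cor:r_less2R}, only $r^{(\delta-1)}<2R^*$, hence $r^{(\delta)}<2\beta R^*$, and the domination radius becomes $\alpha r^{(\delta)}<2\alpha\beta R^*$. With $\alpha=3$ (which is tight for the chain shift) you need $\beta\le 1$ to hit $6$, contradicting $\beta>1$. So your scheme can only reach an approximation factor strictly larger than $6$, not $6$ itself; the sentence ``$\alpha,\beta$ are chosen so that $2\alpha\beta=6$'' cannot be realised.

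The paper eliminates this slack by \emph{dropping the geometric extension level entirely} in the decremental algorithm. It maintains a single radius $\hat r$ that the Regulating Operation resets to the exact value $\max_{q\in P}\dist(q,S)$ whenever $S$ already dominates at the old $\hat r$, and it states Invariant~2 as ``there exists $p_S$ with the pairwise distances in $S\cup\{p_S\}$ \emph{at least} $\hat r$''. This weaker independence (``$\ge$'' rather than ``$>$'') is handled by Lemma~\ref{lem:k+1_atleastr_lesseq2R} instead of Corollary~\ref{cor:r_less2R} and gives $\hat r\le 2R^*$ with no multiplicative gap. Combined with $\alpha=3$ from the analogue of Lemma~\ref{lem:C_covers_fully}, this yields $3\hat r\le 6R^*$ on the nose. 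Everything else in your outline (cases C1, C2a, C2b, the Reassigning Operation, the auxiliary digraph for the chain search) matches the paper.
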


Our algorithm maintains a set $S = \{c_1, \ldots, c_k\}$ of $k$ \emph{center points} in some \emph{order} and a \emph{radius} $\hat{r}$. 
Each center point $c_i$ is the unique center of the \emph{cluster} $C_i$ which
contains exactly the points served by $c_i$. In our algorithm, every point belongs to exactly
one cluster (i.e., every point is served by exactly one center), and so
the idea is that the center $c_i$ is responsible to serve every point inside the cluster $C_i$.
A cluster $C_i$ is categorized as either a \emph{regular cluster} or a \emph{zombie cluster}.
The corresponding center is called \emph{regular center} (\emph{regular center point}) or \emph{zombie center} (\emph{zombie center point}), respectively. We refer to this as the \emph{state} of the cluster and its center.

A regular cluster $C_i$ contains only points that are
within distance $\hat{r}$ from the corresponding regular center $c_i$. However, a zombie cluster $C_i$ can also contain points that are at distances greater than $\hat{r}$ from their corresponding zombie center $c_i$. 
We remark that the implicit order of points in $S$ may change
after an update, but we guarantee that the set $S$ itself changes by at most one point per update (i.e., the worst-case recourse is $1$).
Throughout the algorithm we want to satisfy the following three invariants. 

\vspace{1em}
\noindent
\textbf{Invariant~1}: The size of the set $S$ is exactly $k$.

\noindent
\textbf{Invariant~2}: There exists a point $p_S \in P \setminus S$ such that the pairwise distances of points in $S \cup \{p_S\}$ are at least $\hat{r}$.

\noindent
\textbf{Invariant~3}: The set $S$ is a \distDS{$3\hat{r}$}.

\subsection{Decremental Algorithm}
At first we describe the preprocessing phase of the algorithm, and afterwards we describe
the way the algorithm handles point deletions. A pseudocode of the algorithm is provided in Algorithm~\ref{alg:dec_consist}.

\subsubsection{Preprocessing Phase}
Let $D$ be the ordered set of pairwise distances of points in $P$ as defined in Theorem~\ref{th:2appr_alg}.
In the preprocessing phase, we run the algorithm of Theorem~\ref{th:2appr_alg} to obtain a $2$-approximate solution $S$, and let  $r_1 \coloneqq \max_{p \in P} \dist(p, S)$.
Observe that $r_1$ must be included in the set $D$. Let $r_0$ be the value just before $r_1$ in the ordered set $D$.
Next, we repeat the following process with the goal to satisfy all the three invariants. While the set $S$ has size strictly less than
$k$, we add to $S$ an arbitrary point $c$ that is at a distance at least $r_1$ from the current centers (i.e., $\dist(c, S) \geq r_1$).
If there is no such point, then we set $r_1 \coloneqq r_0$ and $r_0$ to the value just before the updated $r_1$ in the ordered set $D$,
and the process continues until the size of $S$ reaches $k$. 
Once the size of $S$ becomes $k$, the value of $r_1$ is set to the smallest value in the ordered set $D$
such that the set $S$ is a \distDS{$r_1$}.

Initially, the value of the radius $\hat{r}$ is set to $r_1$,
all the center points of $S$ are regular centers, and all the clusters are regular clusters. 
Also, each cluster is set to contain points in $P$ that are within distance $\hat{r}$ from its center, such that the clusters form a partition of the point set $P$.

\paragraph{Analysis of the preprocessing phase.}
After the preprocessing phase has finished, the size of $S$ is $k$ and the set $S$ is a \distDS{$\hat{r}$}.
Additionally, by construction there exists a point $p_S \in P \setminus S$ such that the pairwise distances
of points in $S \cup \{p_S\}$ are at least $\hat{r}$. Hence, in the beginning all the three invariants are satisfied.
Combining Invariant~1 with Invariant~2, we can infer that there exist at least $k + 1$ points whose pairwise distances
are at least $r_1$. Thus it holds that $\hat{r} \leq 2R^*$ by Lemma~\ref{lem:k+1_atleastr_lesseq2R}, 
which means that the set $S$ is initially a $2$-approximate solution.

\subsubsection{Point Deletion} 
    Let $\hat{r}$ be the current radius of the algorithm.
    Under point deletions, the algorithm proceeds as follows.
    
    Consider the deletion of a point $p$ from the point set $P$, and let $P \coloneqq P \setminus \{p\}$. If $p$ is not a center point (i.e., $p \notin S$),
    then the point $p$ is removed from the cluster it belongs to, and the algorithm just calls the Regulating Operation which is described below. 
    Note that the set $S$ remains unmodified and the recourse is zero.
    Otherwise assume that the deleted point $p$ is the $i$-th center point of $S$ (i.e., $p \in S$ and specifically $p = c_i$).
    The algorithm removes the point $p$ from the set $S$ and from the cluster $C_i$ it belongs to.
    If the deleted point $c_i$ is a zombie center, then its corresponding cluster $C_i$ is already a zombie cluster. 
    Otherwise as the center $c_i$ of $C_i$ is deleted, the regular cluster $C_i$ becomes a zombie cluster. The algorithm continues based on the two following cases:

    \begin{enumerate}[label=C\arabic*.]
        \item If there is another point $c'$ in $C_i$ at a distance greater than $\hat{r}$ from the remaining centers of $S$
        (i.e., if $\exists c' \in C_i$ such that $\dist(c', S) > \hat{r}$),\footnote{Recall that at this moment $c_i \notin S$ and $c_i \notin C_i$. \label{footnote:c_i_del_decr_decSec}} then the point 
        $c'$ is assigned to be the center of the zombie cluster $C_i$. In particular the point $c'$ is added to $S$, and as the point $c'$ is now the $i$-th center point of the ordered set $S$, the algorithm sets $c_i \coloneqq c'$ and
        the point $c_i = c'$ becomes the zombie center of the zombie cluster $C_i$. 
        The recourse in this case is $1$.

        \item Otherwise, every point $c' \in C_i$ is within distance $\hat{r}$ from the remaining centers of $S$. In this case, the algorithm tries to detect a sequence of $2l-1$ points of the following form:
        \begin{equation} \label{eq:seq_for_replac}
            p_{t_1}, c_{t_2}, p_{t_2}, \;\ldots\;, c_{t_l}, p_{t_l}
        \end{equation}
        where $t_1, \dots, t_l$ are indices different from each other,
        where $t_1 = i, p_{t_1} \in C_i$, and for every $j: 2 \leq j \leq l$ the points $p_{t_j}, c_{t_j}$ satisfy the following conditions:
        \begin{itemize}
            \item $c_{t_j} \in S$ such that $\dist(p_{t_{j-1}}, c_{t_j}) \leq \hat{r}$.\footref{footnote:c_i_del_decr_decSec}

            \item $t_j$ is the index of $c_{t_j}$ in the ordered set $S$.

            \item $p_{t_j} \in C_{t_j}$ such that $\dist(p_{t_j}, c_{t_j}) > \hat{r}$.
        \end{itemize}
        and $\dist(p_{t_l}, S) > \hat{r}$.

        \begin{enumerate}[label=C2\alph*.]
            \item If the algorithm detects such a sequence, then the point $p_{t_l}$ is assigned to be the center of the zombie cluster $C_{t_l}$,
            and the indices in the ordered set $S$ are shifted to the right with respect to the sequence.
            In particular the point $p_{t_l}$ is added to $S$, and let $q_{t_j}$ be temporarily the center point $c_{t_j}$ for every $j: 2 \leq j \leq l$. 
            Next, the algorithm sets $c_{t_l}$ to $p_{t_l}$, and $c_{t_j}$ to $q_{t_{j+1}}$ for every $j: 1 \leq j \leq l - 1$.
            The clusters are updated respectively, namely the point $q_{t_j}$ is removed from $C_{t_j}$ and the center point $c_{t_{j-1}}$ is added to $C_{t_{j-1}}$ for every $j: 2 \leq j \leq l$,\footnote{Notice that actually the center point $c_{t_{j-1}}$ is now the point $q_{t_j}$.} 
            and note that the center point $c_{t_l}$ is already in $C_{t_l}$.
            After the change of the order, for all $j: 1 \leq j \leq l$ the center point $c_{t_j}$
            is the zombie center of the zombie cluster $C_{t_j}$. Notice that this operation affects only the order 
            of the centers in $S$ and not the set $S$ itself, which means that the recourse in this case is $1$.

            \item Otherwise, let $c'$ be a point of maximum distance from the set $S$ of the remaining centers.
            In this case, the algorithm first calls the Reassigning Operation which is described below, 
            and then builds a new regular cluster with the point $c'$ as its center. 
            In particular the point $c'$ is added to $S$, and as the point $c'$ is now the $i$-th center point of the ordered set $S$, the algorithm sets $c_i \coloneqq c'$
            and the cluster $C_i$ is updated to contain initially only the center point $c_i = c'$. 
            The cluster $C_i$ is a regular cluster and the center point $c_i$ is a regular center.
            The recourse in this subcase is $1$.
            
            \subparagraph{Reassigning Operation.}
            Let $Q$ be a set containing initially the points in $C_i$, where $C_i$ is the older version of the $i$-th cluster.\footref{footnote:c_i_del_decr_decSec}
            During the Reassigning Operation, the following steps are performed iteratively
            as long as the set $Q$ is not empty. First, the algorithm picks a point $q \in Q$ and finds a center point $c_{j'} \in S$ such that $\dist(q, c_{j'}) \leq \hat{r}$.\footref{footnote:c_i_del_decr_decSec} Such a center point $c_{j'}$ must exist, since there is no sequence satisfying~(\ref{eq:seq_for_replac}).
            \antonis{should we prove it?} Let $C_j$ be the cluster that contains the point $q$.
            Then, the algorithm removes the point $q$ from $C_j$ and from $Q$, and adds $q$ to the cluster $C_{j'}$. Also the cluster $C_{j'}$ becomes a regular cluster and its center $c_{j'}$ becomes a regular center. Next, all the points from $C_{j'}$ whose
            distances are greater than $\hat{r}$ from the corresponding center point $c_{j'}$ (i.e., points from the set 
            $\{q_{j'} \in C_{j'} \mid \dist(q_{j'}, c_{j'}) > \hat{r}\}$) are added to the set $Q$. Now, if the updated set $Q$ is empty then this procedure ends,
            otherwise it is repeated using the updated set $Q$.
        \end{enumerate}
    \end{enumerate}

    \vspace{0.2em}
    \noindent
    In the end of each point deletion, the algorithm calls the Regulating Operation. 

    \subparagraph{Regulating Operation.}
        During the Regulating Operation the algorithm performs three tasks.
        (1) At first, if the set $S$ is a \distDS{$\hat{r}$} then all the clusters become regular clusters, all the centers become regular centers, and
        the algorithm updates $\hat{r}$ to the smallest value $r'$ such that the set $S$ is a \distDS{$r'$}.
        Note that the value of the radius $\hat{r}$ may not decrease. 
        (2) Then, the algorithm iterates over all indices $j: 1 \leq j \leq k$,
        and if every point $q \in C_j$ of cluster $C_j$ is within distance $\hat{r}$ from the center $c_j$ (i.e., if for all $q \in C_j: \dist(q, c_j) \leq \hat{r}$), then the cluster $C_j$ becomes a regular cluster
        and its center $c_j$ becomes a regular center.
        (3) Finally for every point $q \in P$, if $q \in C_j$ is at distance greater than $\hat{r}$ from the center $c_j$
        of the cluster $C_j$ which contains $q$, and also $q$ is within distance $\hat{r}$ from another regular center $c_{j'}$ of the regular cluster $C_{j'}$, then the algorithm removes the point $q$ from $C_j$ and adds it to $C_{j'}$.\footnote{In case there is more than one regular center point within distance $\hat{r}$ from the point $q$, the algorithm adds $q$ to only one of the corresponding regular clusters arbitrarily.}
        
\begin{algorithm}[H]
    \DontPrintSemicolon
    \caption{\textsc{decremental consistent $k$-center}{}}
    \label{alg:dec_consist}

    \SetAlgoLined
    \SetArgSty{textrm}
    
    \tcp{Let $S = \{c_1, \dots, c_k\}$ be the ordered set of $k$ centers}

    \tcp{Let $\hat{r}$ be the current radius}

    \vspace{0.5em}

    \SetKwFunction{FRegulatingOp}{RegulatingOp}
    \Procedure{\FRegulatingOp{}} {
        \If {$\max_{q \in P} \dist(q, S) \leq \hat{r}$} { \label{algline:decr_radius_1}
            For all $c_j \in S: c_j$ becomes regular center and $C_j$ becomes regular cluster \label{algline:cluster_reg_2}
        
            \vspace{0.5em}
            
            $\hat{r} \gets \max_{q \in P} \dist(q, S)$ \tcp{The smallest value such that $S$ is a \distDS{$\hat{r}$}} \label{algline:decr_radius_2}
        }

        \vspace{0.5em}
        
        \For{$j \in \{1, \ldots, k\}$} {
            \If{$\forall q \in C_j: \dist(q, c_j) \leq \hat{r}$} {
                $C_j$ becomes regular cluster and $c_j$ becomes regular center \label{algline:cluster_reg_1}
            }
        }
        
        \vspace{0.5em}
        
        \For{$q \in P$} {
            Let $C_j$ be the cluster such that $q \in C_j$
            \vspace{0.5em}
            
            \If{$\dist(q, c_j) > \hat{r}$ \textbf{and} $\exists j': (C_{j'}$ is regular cluster \textbf{and} $\dist(q, c_{j'}) \leq \hat{r}$)} { 
                Remove $q$ from $C_j$ and add $q$ to $C_{j'}$ \label{algline:switch_cluster_1}
            }
        }
    }

    \vspace{1em}
    
    \SetKwFunction{FReassigningOp}{ReassigningOp}
    \Procedure{\FReassigningOp{}} {
        $Q \gets C_i$ \tcp{The deleted point $p$ is the $i$-th center point of $S$, and it has been removed from $S$ and $C_i$}
        \vspace{0.5em}

        \While{$Q$ is not empty} {
            Pick $q$ from $Q$, and let $C_j$ be the cluster such that $q \in C_j$

            Find $c_{j'} \in S$ such that $\dist(q, c_{j'}) \leq \hat{r}$ \label{algline:find_c_dec}

            Remove $q$ from $C_j$ and $Q$, and add $q$ to $C_{j'}$ \label{algline:switch_cluster_2}

            Cluster $C_{j'}$ becomes regular cluster and $c_{j'}$
            becomes regular center

            Add all points from $\{q_{j'} \in C_{j'}: \dist(q_{j'}, c_{j'}) > \hat{r}\}$ to $Q$
        }
    }
\end{algorithm}

\begin{algorithm}[H]
    \DontPrintSemicolon
    \SetArgSty{textrm}
    \tcp{Continuation of Algorithm~\ref{alg:dec_consist}}
    \vspace{1em}
    
    \SetKwFunction{FDeletePoint}{DeletePoint}
    \Procedure{\FDeletePoint{p}} {
        Remove $p$ from $P$
        \vspace{0.5em}
        
        \If{$p \notin S$} {
            Remove $p$ from the cluster it belongs to
            
            \FRegulatingOp{}{}

            \vspace{0.5em}
            \textbf{exit} from \FDeletePoint{}
        }
        \vspace{0.5em}
        \Else(\tcp*[h]{$p \in S$}) { 
            Let $i$ be the index of $p$ in $S = \{c_1, \dots, c_k\}$, namely $c_i = p$

            Remove $p$ from $S$ and $C_i$
            \vspace{0.5em}
            
            \If{$c_i$ is \textbf{not} zombie center} {
                Cluster $C_i$ becomes zombie cluster
            }
            \vspace{0.5em}
            
            \If(\tcp*[h]{case C1}){$\exists c' \in C_i: \dist(c', S) > \hat{r}$} {
                Add $c'$ to $S$
    
                $c_i \gets c'$
                
                $c_i$ becomes zombie center
            } 
            \vspace{0.5em}
            
            \Else(\tcp*[h]{case C2}){
                Find a sequence $p_{t_1}, c_{t_2}, p_{t_2}, \;\ldots\;, c_{t_l}, p_{t_l}$ 
                of $2l-1$ points such that: 
    
                $t_1, \dots, t_l$ are indices different from each other,
                
                $t_1 = i, p_{t_1} \in C_i$, and for every $j: 2 \leq j \leq l$: \label{algline:seq_init_dec}
                
                (1) $c_{t_j} \in S$ and $\dist(p_{t_{j-1}}, c_{t_j}) \leq \hat{r}$ 
    
                (2) $t_j$ is the index of $c_{t_j}$ in $S = \{c_1, \ldots, c_k\}$
                
                (3) $p_{t_j} \in C_{t_j}$ and $\dist(p_{t_j}, c_{t_j}) > \hat{r}$
                
                and $\dist(p_{t_l}, S) > \hat{r}$ \label{algline:ptl_S_dec}
                \vspace{0.5em}
                
                \If(\tcp*[h]{case C2a}){such a sequence is found} {
                    Add $p_{t_l}$ to $S$
    
                    \vspace{0.5em}
                    
                    \For{$j \in \{1, \ldots, l-1\}$} {
                        $c_{t_j} \gets c_{t_{j+1}}$
    
                        Remove $c_{t_j}$ from $C_{t_{j+1}}$ and add $c_{t_j}$ to $C_{t_j}$ \label{algline:shift_centers_dec}
                    }
                    \vspace{0.5em}
                    $c_{t_l} \gets p_{t_l}$ \tcp{$c_{t_l}$ is already in $C_{t_l}$}
    
                    $c_i$ becomes zombie center
                }
    
                \vspace{0.5em}
                \Else(\tcp*[h]{case C2b}) {
                    \FReassigningOp{}
                    
                    Let $c'$ be a point of maximum distance from the set $S$
    
                    Add $c'$ to $S$
    
                    $c_i \gets c'$, and $c_i$ becomes regular center \label{algline:reset_ci_C2b_dec}
                
                    $C_i \gets \{c_i\}$, and $C_i$ becomes regular cluster \label{algline:reset_Cli_C2b_dec}
                }
            }
    
            \vspace{0.5em}
        }
    
        \vspace{0.5em}
        \FRegulatingOp{}
    }
\end{algorithm}

\subsection{Analysis of the Decremental Algorithm}
    Our goal in this section is to prove Theorem~\ref{th:decr_consist} by analyzing Algorithm~\ref{alg:dec_consist}.
    Throughout the analysis, we use the following notation during a point deletion.
    Let $S_\text{old}$ be the set of centers before the algorithm has processed the update,
    and $S$ be the set of centers after the algorithm has processed the update.
    Similarly, let $\hat{r}_\text{old}$ be the value of the radius before the algorithm has processed the update,
    and $\hat{r}$ be the value of the radius after the algorithm has processed the update.

    The proof of Theorem~\ref{th:decr_consist} almost immediately follows from the three invariants. Therefore in the following we prove that after a point deletion, 
    all the three invariants are satisfied in Algorithm~\ref{alg:dec_consist}.

    \begin{lemma} \label{lem:size_S_worst_recourse_dec}
        After a point deletion, the size of $S$ is $k$ (i.e., Invariant~1 is satisfied). Moreover,
        the worst-case recourse per update is $1$.
    \end{lemma}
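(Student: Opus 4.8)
The plan is to prove both assertions simultaneously by induction on the number of point deletions, following the same template as Lemma~\ref{lem:size_S_worst_recourse_fully}. The base case is the preprocessing phase, which by construction outputs a set $S$ of exactly $k$ centers. For the inductive step, I would split on whether the deleted point $p$ is a center point or not. If $p \notin S_\text{old}$, the algorithm removes $p$ only from the cluster containing it and then calls the Regulating Operation; since the Regulating Operation never adds or removes a point from $S$ — it only relabels the states of clusters and centers (Lines~\ref{algline:cluster_reg_2}, \ref{algline:cluster_reg_1}), rescales $\hat{r}$ (Line~\ref{algline:decr_radius_2}), and moves \emph{non-center} points between clusters (Line~\ref{algline:switch_cluster_1}) — we get $S = S_\text{old}$. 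Hence $|S| = k$ by the induction hypothesis and the recourse of this update is $0$.

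If $p = c_i \in S_\text{old}$, the algorithm first removes $p$ from $S$, leaving $k-1$ centers, and then reinserts exactly one point via one of three mutually exclusive and exhaustive cases: the new zombie center $c'$ in case C1, the sequence endpoint $p_{t_l}$ in subcase C2a, or the maximum-distance point $c'$ in subcase C2b. In every one of these three cases precisely one point is added to $S$, so $|S|$ returns to $k$. The index shift in subcase C2a (Line~\ref{algline:shift_centers_dec}) and the work done by the Reassigning Operation (Lines~\ref{algline:find_c_dec}, \ref{algline:switch_cluster_2}) only permute the implicit order of centers and move non-center points between clusters; they do not change the set $S$ itself. The concluding call to the Regulating Operation again leaves $S$ unchanged, as argued above. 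Therefore $|S| = k$ after the deletion, establishing Invariant~1. Moreover, since $p$ has just been deleted from $P$ while the reinserted point is a surviving point, the two are distinct, so the symmetric difference of $S_\text{old}$ and $S$ has size exactly $2$ in this case and $0$ otherwise, i.e., the worst-case recourse per update is~$1$.

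The only real point requiring care — and the natural candidate for the ``main obstacle'' — is verifying that each auxiliary subroutine (the Regulating Operation, the Reassigning Operation, and the index shift in subcase C2a) genuinely preserves the underlying set $S$ and touches only bookkeeping data (cluster membership, cluster/center states, the radius $\hat{r}$, and the order of centers). Once this invariant of the subroutines is in hand, the proof reduces to counting one removal against exactly one addition whenever $p \in S_\text{old}$, and zero changes whenever $p \notin S_\text{old}$. I would also state explicitly that cases C1, C2a, and C2b partition the possibilities when $p \in S_\text{old}$, so that exactly one reinsertion occurs, matching the single removal; this is what pins the recourse down to~$1$ rather than merely bounding it.
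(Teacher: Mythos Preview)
Your proposal is correct and follows essentially the same inductive approach as the paper's own proof: the preprocessing phase establishes $|S|=k$, and each deletion either leaves $S$ unchanged (when $p\notin S_\text{old}$) or removes the deleted center and reinserts exactly one point via one of the three cases C1, C2a, C2b. Your version is considerably more detailed than the paper's---the paper dispatches the lemma in four sentences without explicitly checking that the Regulating and Reassigning Operations preserve $S$---but the underlying argument is the same.
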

    \begin{proof}
        The proof follows an induction argument. In the preprocessing phase, the algorithm constructs a set of centers $S$ of size $k$.
        Under a point deletion, a center point can be removed from the set of centers only if it is also removed from the point set $P$.
        In this case, the old center point is replaced by another point.
        Thus after a point deletion, the size of the set of centers remains the same, and the worst-case recourse in an update is at most $1$.
    \end{proof}
    
    \begin{observation} \label{obs:regOper}
        Let $\hat{r}_1$ and $\hat{r}_2$ be the value of the radius before and after the execution of the Regulating Operation respectively.
        Then it holds that $\hat{r}_2 \leq \hat{r}_1$.
    \end{observation}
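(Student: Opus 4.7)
The plan is to verify the claim by a direct inspection of the three tasks performed inside the Regulating Operation, and argue that only the first task can ever modify $\hat{r}$, and that this modification can only decrease (or leave unchanged) its value.

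First, I would note that tasks (2) and (3) of the Regulating Operation do not touch the value of $\hat{r}$ at all: task (2) merely reclassifies some clusters as regular, and task (3) merely moves points between clusters. Hence, neither of these can change $\hat{r}$, and it suffices to analyze task (1).

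For task (1), the algorithm first tests the condition $\max_{q \in P} \dist(q, S) \leq \hat{r}_1$ in Line~\ref{algline:decr_radius_1}. If this condition is false, then the branch is not entered, so $\hat{r}$ is not modified and we have $\hat{r}_2 = \hat{r}_1$. If the condition is true, then in Line~\ref{algline:decr_radius_2} the algorithm sets $\hat{r}_2 \gets \max_{q \in P} \dist(q, S)$, and by the test condition itself we have $\max_{q \in P} \dist(q, S) \leq \hat{r}_1$, so $\hat{r}_2 \leq \hat{r}_1$ as required.

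Combining the two cases yields $\hat{r}_2 \leq \hat{r}_1$ unconditionally. There is no real obstacle here; the observation is essentially a syntactic check of the pseudocode, and the only mild subtlety is remembering that tasks (2) and (3) are inert with respect to $\hat{r}$, so that the inequality follows entirely from the guarded assignment in task (1).
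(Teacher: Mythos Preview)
Your proposal is correct and takes essentially the same approach as the paper: the paper states Observation~\ref{obs:regOper} without proof, treating it as immediate from the pseudocode, and the analogous Observation~\ref{obs:decrOper} for the fully dynamic algorithm is proved in exactly the way you describe---by noting that the radius is only reassigned inside the guarded branch of task~(1), where the guard itself guarantees the new value does not exceed the old one.
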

    
    \begin{lemma} \label{lem:inv_2}
        After a point deletion, there exists a point $p_S \in P \setminus S$ such that the pairwise distances of points in $S \cup \{p_S\}$ are at least $\hat{r}$ (i.e., Invariant~2 is satisfied). Moreover,
        the value of the radius $\hat{r}$ is at most $2R^*$.
    \end{lemma}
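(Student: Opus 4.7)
The plan is to prove the lemma by induction on the number of point deletions, mirroring the structure used in Lemma~\ref{lem:invar_2_satisf_fully} for the fully dynamic algorithm, but with the simplification that here there is only a single radius $\hat{r}$ instead of an extension level and associated radii. The base case follows directly from the analysis of the preprocessing phase. For the inductive step, I would split the argument according to whether the deleted point $p$ is a center, and in the former case according to which branch of the algorithm (C1, C2a, or C2b) is executed.

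Before the case analysis, I would record two reusable facts about the Regulating Operation. By Observation~\ref{obs:regOper}, after its execution $\hat{r}\le\hat{r}_\text{old}$. Moreover, once the Regulating Operation returns, there is always a witness point $p_S\in P\setminus S$ with $\dist(p_S,S)\ge\hat{r}$: either the condition in Line~\ref{algline:decr_radius_1} held, in which case $\hat{r}$ was reset to $\max_{q\in P}\dist(q,S)$ in Line~\ref{algline:decr_radius_2} and any point realizing this maximum serves as $p_S$; or the condition failed, and then by definition some point of $P\setminus S$ lies at distance strictly greater than $\hat{r}$ from $S$.

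With these two facts in hand, each case reduces to verifying that the pairwise distances inside $S$ are at least $\hat{r}$. If $p\notin S_\text{old}$, then $S=S_\text{old}$ and the induction hypothesis combined with $\hat{r}\le\hat{r}_\text{old}$ suffices. In case C1, $S=(S_\text{old}\setminus\{p\})\cup\{c'\}$ where by the explicit check $\dist(c',S_\text{old}\setminus\{p\})>\hat{r}_\text{old}\ge\hat{r}$, while the other pairwise distances are inherited from $S_\text{old}$. Case C2a is identical, using $\dist(p_{t_l},S_\text{old}\setminus\{p\})>\hat{r}_\text{old}$ guaranteed by Line~\ref{algline:ptl_S_dec}. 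The potentially subtle case is C2b, where $c'$ is merely the farthest point from $S_\text{old}\setminus\{p\}$: here I would argue that $S_\text{old}\setminus\{p\}\subseteq S$ gives $\max_{q\in P}\dist(q,S)\le\dist(c',S_\text{old}\setminus\{p\})$, and since the Regulating Operation either sets $\hat{r}$ equal to $\max_{q\in P}\dist(q,S)$ or leaves $\hat{r}$ below this maximum, one concludes $\hat{r}\le\dist(c',S_\text{old}\setminus\{p\})$, so the new pairwise distances involving $c'$ are at least $\hat{r}$.

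Finally, the upper bound $\hat{r}\le 2R^*$ follows immediately: by Lemma~\ref{lem:size_S_worst_recourse_dec} we have $|S|=k$, so together with the witness $p_S$ produced above, $S\cup\{p_S\}$ is a set of $k+1$ points with pairwise distances at least $\hat{r}$, and Lemma~\ref{lem:k+1_atleastr_lesseq2R} yields $\hat{r}\le 2R^*$. I expect the main obstacle to be case C2b, specifically the bookkeeping argument showing that the Regulating Operation cannot raise $\hat{r}$ above $\dist(c',S_\text{old}\setminus\{p\})$; the remaining cases are essentially direct applications of the induction hypothesis combined with the explicit distance checks built into the algorithm.
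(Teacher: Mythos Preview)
Your proposal is correct and follows essentially the same approach as the paper's proof: the same induction structure, the same case split (non-center deletion, C1, C2a, C2b), the same use of Observation~\ref{obs:regOper}, and the same chain of inequalities in the delicate case C2b showing $\hat{r}\le\max_{q\in P}\dist(q,S)\le\dist(c',S_\text{old}\setminus\{p\})$. Your treatment of the witness point $p_S$ via the Regulating Operation and the final application of Lemma~\ref{lem:k+1_atleastr_lesseq2R} also match the paper exactly.
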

    \begin{proof}
        We first prove by induction on the number of deleted points that Invariant~2 is satisfied.
        For the base case, the statement holds by the preprocessing phase.
        For the induction step, let $p$ be the deleted point from the point set $P$, and let $P$ be the updated point set without the deleted point $p$. If $p$ is not a center point (i.e., $p \notin S_\text{old}$), then the algorithm only calls the Regulating Operation.
        In turn, we have that $S = S_\text{old}$ and by Observation~\ref{obs:regOper} it holds that $\hat{r} \leq \hat{r}_\text{old}$.
        Since by induction hypothesis the pairwise distances of points in $S_\text{old}$ are at least $\hat{r}_\text{old}$,
        we can conclude the pairwise distances of points in $S$ are at least $\hat{r}$.    
        Moreover, by the way the Regulating Operation works in Lines~\ref{algline:decr_radius_1}-\ref{algline:decr_radius_2}, there must exist a
        point $p_S \in P \setminus S$ of distance at least $\hat{r}$ from the updated set $S$, as needed. Otherwise assuming that the deleted point $p$ is the $i$-th center point (i.e., $p \in S_\text{old}$ and $p = c_i$), we analyze the two possible cases of the algorithm.

        \begin{enumerate}
            \item Assume that the algorithm proceeds with case C1, and let $c'$ be the corresponding new zombie center of the zombie cluster $C_i$.
            Then the distance between the point $c'$ and the set $S_\text{old} \setminus \{p\}$ is greater than $\hat{r}_\text{old}$, and by induction hypothesis 
            the pairwise distances of points in $S_\text{old}$ are at least $\hat{r}_\text{old}$. By construction we have that $S = (S_\text{old} \setminus \{p\}) \cup \{c'\}$, and since the algorithm calls the Regulating Operation in the end, it holds that $\hat{r} \leq \hat{r}_\text{old}$ by Observation~\ref{obs:regOper}. 
            Hence, the pairwise distances of points in $S$ are at least $\hat{r}$.
            Moreover by the way the Regulating Operation works in Lines~\ref{algline:decr_radius_1}-\ref{algline:decr_radius_2}, 
            there must exist a point $p_S \in P \setminus S$ of distance at least $\hat{r}$ from the updated set $S$, and so the claim follows.

            \item If the algorithm proceeds with case C2, then we analyze the two subcases separately:
            \begin{enumerate}
            \item Assume that the algorithm proceeds with subcase C2a.
            Since there is a sequence satisfying~(\ref{eq:seq_for_replac}) it holds that $\dist(p_{t_l}, S_\text{old} \setminus \{p\}) > \hat{r}_\text{old}$ (see also Line~\ref{algline:ptl_S_dec}), and by induction hypothesis the pairwise distances of points in $S_\text{old}$ are at least $\hat{r}_\text{old}$. By construction we have that $S = (S_\text{old} \setminus \{p\}) \cup \{p_{t_l}\}$, and since the algorithm calls the Regulating Operation in the end, it holds that $\hat{r} \leq \hat{r}_\text{old}$ by Observation~\ref{obs:regOper}. In turn, it follows that the pairwise distances of points in $S$ are at least $\hat{r}$. Eventually, by the way the Regulating Operation works in Lines~\ref{algline:decr_radius_1}-\ref{algline:decr_radius_2}, there must exist a
            point $p_S \in P \setminus S$ of distance at least $\hat{r}$ from the updated set $S$,
            as needed.
            
            \item Assume that the algorithm proceeds with subcase C2b, and let $c'$ be the corresponding new center
            of maximum distance from the set $S_\text{old} \setminus \{p\}$. 
            By construction we have that $S = (S_\text{old} \setminus \{p\}) \cup \{c'\}$. 
            Also due to the Regulating Operation we have that $\hat{r} \leq \max_{q \in P} \dist(q, S)$, and since $S_\text{old} \setminus \{p\} \subseteq S$ it holds that $\max_{q \in P} \dist(q, S) \leq \max_{q \in P} \dist(q, S_\text{old} \setminus \{p\})$. Therefore            
            it follows that: 
            \begin{equation} \label{ineq:c_more_r_dec}
                \hat{r} \;\leq\; \max_{q \in P} \dist(q, S_\text{old} \setminus \{p\}) \;=\; \dist(c', S_\text{old} \setminus \{p\}).
            \end{equation}

            Together with the induction hypothesis and Observation~\ref{obs:regOper}, we can infer that the pairwise distances of points in $S$ are at least $\hat{r}$. Finally, once again by
            the way the Regulating Operation works in Lines~\ref{algline:decr_radius_1}-\ref{algline:decr_radius_2}, there must exist a point $p_S \in P \setminus S$ of distance at least $\hat{r}$ from the updated set $S$,
            and so Invariant~2 is satisfied.
            \end{enumerate}
        \end{enumerate}

        Since Invariant~2 is satisfied, there exists a point $p_S \in P \setminus S$ such that the pairwise distances of points in $S \cup \{p_S\}$ are at least $\hat{r}$. By Lemma~\ref{lem:size_S_worst_recourse_dec}, the size of the set $S \cup \{p_S\}$ is $k + 1$.
        Therefore based on Lemma~\ref{lem:k+1_atleastr_lesseq2R}, it holds that $\hat{r} \leq 2R^*$.
    \end{proof}
    
    Let $S^{(\text{reg})}_\text{old}$ and $S^{(\text{reg})}$ be the ordered set of the most recent regular center points,
    before and after the algorithm has processed the update respectively.
    In other words, the $i$-th point in $S^{(\text{reg})}_\text{old}$ corresponds to the
    last regular center point of the cluster $C_i$ in the metric space, just before the update. 
    Note that the set $S^{(\text{reg})}_\text{old}$ may be different than the set of regular
    center points of $S_\text{old}$, and that the sets $S^{(\text{reg})}_\text{old}, S^{(\text{reg})}$ may even contain points from the metric space that are not part of the point set $P$ anymore because of point deletions. 
    We remark that the sets $S^{(\text{reg})}_\text{old}, S^{(\text{reg})}$ are auxiliary sets which are used in the following lemma. 
    
    \begin{lemma} \label{lem:C_covers_dec}
        After a point deletion, consider a point $q \in P$ and let $C_j$ be the cluster that it.
        Recall that $c_j \in S$ is the center of the cluster $C_j$, and let $c^{(\text{reg})}_j \in S^{(\text{reg})}$
        be the most recent regular center of  $C_j$. Then the two properties $P_1$ and $P_2$ hold:
        \begin{enumerate}[label=\subscript{P}{\arabic*}.]
            \item The distance between the point $q$ and the point $c_j$ is at most $3\hat{r}$ (i.e., $\dist(q, c_j) \leq 3\hat{r}$).

            \item If the point $q$ is not a center point (i.e., $q \in P \setminus S$), then it holds that $\dist(q, c^{(\text{reg})}_j) \leq \hat{r}$.
        \end{enumerate}
    \end{lemma}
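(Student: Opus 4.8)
The plan is to establish $P_1$ and $P_2$ simultaneously by induction on the number of point deletions, following the structure of the proof of Lemma~\ref{lem:C_covers_fully} but simplified to the two cluster states. For the base case, after the preprocessing phase every cluster is regular and every point lies within $\hat{r}$ of its center, so $c^{(\text{reg})}_j = c_j$ and both properties hold. For the induction step I would first dispose of an ``easy'' situation: if the branch in Lines~\ref{algline:decr_radius_1}--\ref{algline:decr_radius_2} of the Regulating Operation fires, then $S$ is a \distDS{$\hat{r}_\text{old}$}, all clusters are made regular, $\hat{r}$ is reset to the exact covering radius, and task~(3) of the operation relocates every remaining point to a cluster whose (regular) center is within $\hat{r}$ of it; hence $\dist(q, c_j) \leq \hat{r} \leq 3\hat{r}$ and $c^{(\text{reg})}_j = c_j$, so $P_1$ and $P_2$ hold trivially. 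Thus I may assume in the remainder that this branch does not fire, and in particular that $\hat{r} = \hat{r}_\text{old}$.

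The recurring tool throughout the case analysis is the induction hypothesis of $P_2$: for any non-center point $x$ lying in a cluster $C_j$ before the update we have $\dist(x, c^{(\text{reg})}_j) \leq \hat{r}_\text{old}$ (and when $C_j$ is regular this just says $\dist(x, c_j) \leq \hat{r}_\text{old}$). If the deleted point $p$ is not a center, only the Regulating Operation runs: any point that switches clusters in Line~\ref{algline:switch_cluster_1}, or whose cluster is turned regular in Line~\ref{algline:cluster_reg_1} or Line~\ref{algline:cluster_reg_2}, satisfies both properties by construction, while for an untouched point the induction hypothesis applies verbatim since the center, the state, and the most recent regular center of its cluster are unchanged. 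If $p = c_i \in S_\text{old}$, so that $C_i$ becomes (or already is) a zombie cluster, I would treat the three subcases of the algorithm. In case~C1 the new zombie center $c'$ lies in $C_i$ and is a non-center before the update, so $\dist(c', c^{(\text{reg})}_i) \leq \hat{r}_\text{old}$; hence for every non-center $q \in C_i$ we get $\dist(q, c') \leq \dist(q, c^{(\text{reg})}_i) + \dist(c^{(\text{reg})}_i, c') \leq 2\hat{r}_\text{old} = 2\hat{r} \leq 3\hat{r}$, which gives $P_1$, and since $c'$ is a zombie center the most recent regular center of $C_i$ is still $c^{(\text{reg})}_i$, which gives $P_2$; every other cluster keeps its center and state, so apart from points moved by the final Regulating Operation (handled by construction) the induction hypothesis suffices.

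In case~C2a the shift moves, along zombie centers only, the old center of each $C_{t_j}$ to become the center of $C_{t_{j-1}}$, and the point $p_{t_l}$ to become the center of $C_{t_l}$, with every cluster keeping its identity. Using the sequence condition $\dist(p_{t_{j-1}}, c^{\text{old}}_{t_j}) \leq \hat{r}_\text{old}$ together with the fact that $p_{t_{j-1}}$ is a non-center point of $C_{t_{j-1}}$, the induction hypothesis of $P_2$ gives $\dist(c^{\text{old}}_{t_j}, c^{(\text{reg})}_{t_{j-1}}) \leq 2\hat{r}_\text{old}$ for the new center $c^{\text{old}}_{t_j}$ of $C_{t_{j-1}}$, whence every non-center $q \in C_{t_{j-1}}$ satisfies $\dist(q, c_{t_{j-1}}) \leq \dist(q, c^{(\text{reg})}_{t_{j-1}}) + 2\hat{r}_\text{old} \leq 3\hat{r}_\text{old} = 3\hat{r}$; the cluster $C_{t_l}$ is even easier because $p_{t_l} \in C_{t_l}$ is itself a non-center within $\hat{r}_\text{old}$ of $c^{(\text{reg})}_{t_l}$, and $P_2$ follows for all these clusters because the reassigned centers are zombie centers. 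In case~C2b the Reassigning Operation --- whose inner step in Line~\ref{algline:find_c_dec} is well defined by an argument analogous to Claim~\ref{clm:reassOper_q_atmost_r_c}, using that no sequence of the form~(\ref{eq:seq_for_replac}) exists --- places every point of the old $C_i$, along with every point relocated by the cascade, within $\hat{r}$ of a regular center, so both properties hold by construction; the fresh singleton cluster $C_i = \{c'\}$ is trivial, and untouched points again use the induction hypothesis. I expect the main obstacle to be the bookkeeping in case~C2a, namely pinning down exactly which cluster each shifted center ends up in and chaining two applications of $P_2$ so that the constant lands at precisely $3$, together with verifying the well-definedness of the Reassigning Operation in case~C2b.
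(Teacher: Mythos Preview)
Your proposal is correct and follows essentially the same approach as the paper: induction on the number of deletions, disposal of the case where task~(1) of the Regulating Operation fires (the paper packages this as Observation~\ref{obs:RegOperProp}), and then a case analysis over C1, C2a, C2b using exactly the same triangle-inequality chains through the most recent regular center. The only organisational difference is that the paper proves $P_2$ in full first and then uses it to derive $P_1$, whereas you interleave the two properties within each case; also note that in the decremental algorithm condition~(1) of~(\ref{eq:seq_for_replac}) does not \emph{explicitly} require $c_{t_j}$ to be a zombie center, so your assertion ``along zombie centers only'' relies on condition~(3) together with the algorithm's invariant that a regular cluster contains only points within $\hat r$ of its center --- which is fine, but worth making explicit.
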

    \begin{proof} 
        The proof of the validity of $P_1$ and $P_2$ is by induction on the number of deleted points.
        For the base case after the preprocessing phase has finished, we have that every point is within distance $\hat{r}$
        from its center and that all the center points are regular centers (i.e., $S =
        S^{(\text{reg})}$), and thus both $P_1$ and $P_2$ are satisfied. 
        For the induction step, let $p$ be the deleted point from the point set $P$, and let $P$ be the updated point set without the deleted point $p$. In the analysis we use the following observation.

        \begin{observation} \label{obs:RegOperProp}
            Whenever the algorithm decreases the value of the radius (i.e., $\hat{r} < \hat{r}_\text{old}$) during 
            the Regulating Operation in the end of the update, the updated set of centers $S$ is a \distDS{$\hat{r}$} for 
            the updated value $\hat{r}$ of the radius, and also we have that $S^{(\text{reg})} = S$.
            More specifically every point $q \in P$ is within distance $\hat{r}$ from the center point $c_j$,
            where $c_j \in S$ is the regular center of the regular cluster $C_j$ that contains the point $q$.
        \end{observation}

        \noindent
        Notice that in the rest of the proof we can assume that $\hat{r} = \hat{r}_\text{old}$, since by construction
        the radius cannot increase and also any decrease to the radius implies the validity of both $P_1$ and $P_2$
        based on Observation~\ref{obs:RegOperProp}. During the Reassigning Operation, we assume in Line~\ref{algline:find_c_dec} that there exists such a point $c_{j'}$. 
        The next claim shows the correctness of the Reassigning Operation by arguing that this instruction is executed properly.

        \begin{claim} \label{clm:reassOper_q_atmost_r_c_dec}
            During the Reassigning Operation, for every point $q \in Q$ there exists a center point $c_{j'} \in S_\text{old} \setminus \{p\}$ such that
            $\dist(q, c_{j'}) \leq \hat{r}_\text{old}$ in Line~\ref{algline:find_c_dec}.
        \end{claim}
        \begin{proof}
            Recall that $p$ is the deleted point with index $i$ in $S_\text{old}$, and let $C_i'$
            be the cluster $C_i$ before it is updated due to the point deletion. 
            Initially the set $Q$ is initialized to $C_i' \setminus \{p\}$, and observe that the algorithm
            must have proceeded with case C2b in order to call the Reassigning Operation. Thus by construction, every point in $C_i' \setminus \{p\}$ must be within distance $\hat{r}$ from another center point of $S_\text{old} \setminus \{p\}$, and for these points Line~\ref{algline:find_c_dec} is executed properly. Observe that the algorithm proceeds with case C2b
            when there is no sequence satisfying~(\ref{eq:seq_for_replac}). 
            Suppose to the contrary that at some moment a point $q \in Q$ during the Reassigning Operation is at a distance greater than $\hat{r}_\text{old}$ from any other center point of $S_\text{old} \setminus \{p\}$. Then such a point could serve as the point $p_{t_l}$ in (\ref{eq:seq_for_replac}), contradicting the fact that there is no sequence satisfying~(\ref{eq:seq_for_replac}).
            As a result, Line~\ref{algline:find_c_dec} is executed properly all the time.
        \end{proof}

        \noindent
        We analyze the validity of the two properties $P_1$ and $P_2$ under a point deletion separately. 
    
        \paragraph{Validity of $P_2$.}
        The induction hypothesis can be applied for every point in $P \setminus S_\text{old}$. Notice that under point deletions, 
        an older center point in $S_\text{old}$ is exempted from being a center only if the adversary removes it from the point set $P$.
        Therefore, the induction hypothesis can actually be applied for all the points in $P \setminus S$ for the updated set $S$.
    
        Consider an arbitrary point $q \in P \setminus S$, and let $C_j$ be the unique cluster that contains it before the point deletion. 
        If the point $q$ switches clusters during the Regulating Operation in 
        Line~\ref{algline:switch_cluster_1} or during the Reassigning Operation in Line~\ref{algline:switch_cluster_2}, then the validity of $P_2$ for the point $q$ follows by construction. Notice that a point could switch clusters also during case C2a in Line~\ref{algline:shift_centers_dec}, but observe that by construction these are center points and thus the claim for $P_2$ is not affected.
        Otherwise assume that the point $q$ remains part of the cluster $C_j$ after the point deletion, and let $c^{(\text{old})}_j \in S^{(\text{reg})}_\text{old}$ be the most recent regular center of $C_j$ before the point deletion. We continue the analysis based on whether
        the point $c^{(\text{old})}_j$ is still the most recent regular center of the cluster $C_j$ after the point deletion or not.
        \begin{itemize}
            \item If $c^{(\text{old})}_j \in S^{(\text{reg})}$, then since we have that $q \in C_j$ 
            it holds that $\dist(q, c_j^{(\text{reg})}) \leq \hat{r}_\text{old}$
            by induction hypothesis. Recall that we can assume that $\hat{r} = \hat{r}_\text{old}$
            based on Observation~\ref{obs:RegOperProp}, and thus the point $q$ satisfies the statement of $P_2$.
            
            \item Otherwise, let $c^{(\text{new})}_j$ be the new regular center of the cluster $C_j$ (i.e., $c^{(\text{new})}_j \in S^{(\text{reg})}$).
            Since the point $q$ does not switch clusters, by construction during the Regulating Operation either Line~\ref{algline:cluster_reg_2} or Line~\ref{algline:cluster_reg_1} is executed. In turn, all the points in $C_j$ must be within distance $\hat{r}$ from the point $c^{(\text{new})}_j \in S$
            (which is also a regular center point in the updated set $S$). Therefore as the point $q$ belongs to $C_j$, the validity of $P_2$ follows. 
        \end{itemize}
        
        \paragraph{Validity of $P_1$.}
        Recall that based on Observation~\ref{obs:RegOperProp}, we can assume that $\hat{r} = \hat{r}_\text{old}$. 
        If the deleted point $p$ is not a center point (i.e., $p \notin S_\text{old}$), then 
        the point $p$ is removed from the cluster it belongs to, and the algorithm calls the Regulating Operation. 
        Consider an arbitrary point $q \in P$. If the point $q$ ends up in a regular cluster due to either 
        Line~\ref{algline:cluster_reg_2}, Line~\ref{algline:cluster_reg_1} or
        Line~\ref{algline:switch_cluster_1}, then the point $q$ satisfies the statement of $P_1$ by construction.
        Otherwise the point $q$ remains in the same cluster whose state does not change, and the validity of $P_1$
        follows by induction hypothesis.
        
        Otherwise assuming that the deleted point $p$ is the $i$-th center point (i.e., $p \in S_\text{old}$ and $p = c_i$),
        we analyze the two possible cases of the algorithm.

        \begin{enumerate}
            \item Assume that the algorithm proceeds with case C1, and let $c'$ be the corresponding
            new zombie center. Also let $C_i$ be the corresponding updated zombie cluster after the point deletion.
            By the validity of $P_2$, there is a point $c^{(\text{reg})}_i \in S^{(\text{reg})}$ such that $\dist(q, c^{(\text{reg})}_i) \leq \hat{r}$ for every point $q \in C_i$. Recall that the point $c^{(\text{reg})}_i$ is 
            the most recent regular center of $C_i$ which was deleted at some 
            point in the past (including this update). Based on the way the algorithm selects the point $c'$, we have that $c' \in C_i$. 
            Thus by triangle inequality, it holds that $\dist(q, c') \leq \dist(q, c^{(\text{reg})}_i) + \dist(c', c^{(\text{reg})}_i) \leq 2\hat{r}$ for every point $q \in C_i$. Moreover, notice that every point $q \in P$ which switches clusters during the Regulating Operation in Line~\ref{algline:switch_cluster_1} satisfies the statement of $P_1$ by construction. Finally, consider an
            arbitrary point $q \in P \setminus C_i$ which remains in the same cluster $C_j$ after the point deletion. By construction
            we have that $S = (S_\text{old} \setminus \{p\}) \cup \{c'\}$, which means that the center point $c_j \in S_\text{old} \setminus \{p\}$ 
            of the cluster $C_j$ is its center before and after the point deletion. Since the point $q$ is in $C_j$ before and after the point update, using the induction hypothesis we get that $\dist(q, c_j) \leq 3\hat{r}_\text{old} = 3\hat{r}$, and so 
            the validity of $P_1$ follows.

            \item If the algorithm proceeds with case C2, then we analyze the two subcases separately:
            \begin{enumerate} 
                \item Assume that the set of centers is modified due to case C2a, and let 
                $p_{t_1}, c_{t_2}, p_{t_2}, \;\ldots\;, c_{t_l}, p_{t_l}$ be the corresponding sequence satisfying~(\ref{eq:seq_for_replac}). 
                If any point $q \in P$ switches clusters during the Regulating Operation in Line~\ref{algline:switch_cluster_1}, then by construction
                the point $q$ satisfies the statement of $P_1$. 
                As explained before in the proof about the validity of $P_2$, a point could switch clusters also in Line~\ref{algline:shift_centers_dec}, but this happens only for center points and thus the claim for $P_1$ is not affected.
                
                Hence it remains to examine points which do not switch clusters.
                Consider an arbitrary point $q \in P$, and let $C_{j'}$ be the cluster which contains it before and after the point deletion.
                If the most recent regular center of $C_{j'}$ is updated during the Regulating Operation either in Line~\ref{algline:cluster_reg_2} or in Line~\ref{algline:cluster_reg_1}, then by construction the point $q$ is within distance $\hat{r}$ from the center $c_{j'}$ of the cluster $C_{j'}$, as needed for $q$. Now regarding the point $q \in C_{j'}$, assume that the most recent regular center $c^{(\text{reg})}_{j'}$ of the cluster $C_{j'}$ does not change after the point deletion, namely
                $c^{(\text{reg})}_{j'} \in S^{(\text{reg})}_\text{old} \cap S^{(\text{reg})}$. Recall that we have $S = (S_\text{old} \setminus \{p\}) \cup \{p_{t_l}\}$,
                and consider the following two cases:
                \begin{itemize}
                \item If the index $j'$ is equal to an index $t_j$ of the sequence that satisfies~(\ref{eq:seq_for_replac}) where $1 \leq j \leq l$,
                then by the validity of $P_2$ it holds that $\dist(q, c^{(\text{reg})}_{t_j}) \leq \hat{r}$. 
                Based on the way the algorithm sets each $c_{t_j}$ and as the sequence satisfies the properties of (\ref{eq:seq_for_replac}), there must have been a point $p_{t_j}$ in $C_{t_j}$ such that $\dist(c_{t_j}, p_{t_j}) \leq \hat{r}_\text{old}$.\footnote{Note that for $j = l$, 
                the center point $c_{t_l}$ is actually the point $p_{t_l}$, and so $\dist(c_{t_l}, p_{t_l}) = 0$.} Since before the point deletion such a point $p_{t_j}$ was in $C_{t_j}$ and the point $c^{(\text{reg})}_{t_j} = c^{(\text{reg})}_{j'}$ was the most recent regular
                center of $C_{t_j}$, by induction hypothesis of $P_2$ we can conclude that $\dist(p_{t_j}, c^{(\text{reg})}_{t_j}) \leq \hat{r}_\text{old}$.
                In turn, by triangle inequality we have that $\dist(c_{t_j}, c^{(\text{reg})}_{t_j}) \leq \dist(c_{t_j}, p_{t_j}) + \dist(p_{t_j}, c^{(\text{reg})}_{t_j}) \leq 2\hat{r}$ (recall that $\hat{r} = \hat{r}_\text{old}$). Thus by using triangle inequality again, it holds that: 
                \[
                    \dist(q, c_{t_j}) \;\leq\; \dist(q, c^{(\text{reg})}_{t_j}) \;+\; \dist(c^{(\text{reg})}_{t_j}, c_{t_j}) \;\leq\; 3\hat{r}.
                \] 
                Therefore for any point $q \in P$ which belongs to a cluster whose index is equal to any of the previous $t_j$ as defined before in~(\ref{eq:seq_for_replac}), the statement of $P_1$ holds.
                
                \item If the index $j'$ is not equal to any of the previous $t_j$ as defined before in~(\ref{eq:seq_for_replac}), we have by construction that the center $c_{j'}$ belongs to both sets $S_\text{old}$ and $S$, and its position does not change in the metric space. 
                Thus by induction hypothesis, the point $q$ is within distance 
                $3\hat{r}$ from the center point $c_{j'} \in S$, and this concludes the validity of $P_1$.
                \end{itemize}
                \item Assume that the set of centers is modified due to case C2b, and recall that $p$ is the deleted point with index $i$ in $S_\text{old}$. To avoid confusion with the reset of the center $c_i$ and its cluster $C_i$ in Lines~\ref{algline:reset_ci_C2b_dec} and \ref{algline:reset_Cli_C2b_dec},
                let $C_i'$ be the cluster $C_i$ before it is updated. 
                The algorithm calls the Reassigning Operation, and note that Line~\ref{algline:find_c_dec} is executed properly based on Claim~\ref{clm:reassOper_q_atmost_r_c_dec}.
                Since the set $Q$ in the Reassigning Operation is initialized to $C_i' \setminus \{p\}$, 
                we can conclude that all the points in $C_i'$ are added to a different regular cluster in Line~\ref{algline:switch_cluster_2}. 
                If a point $q \in P$ ends up in a regular cluster due to either Line~\ref{algline:cluster_reg_2}, Line~\ref{algline:cluster_reg_1}, Line~\ref{algline:switch_cluster_1} or Line~\ref{algline:switch_cluster_2}, then $q$ satisfies the statement of $P_1$ by construction.
                Observe that every point that switches clusters is added to a regular cluster by construction.

                Hence, it remains to consider a point $q \in P \setminus C_i'$ that remains in the same cluster after the point deletion. By construction we have that $S = (S_\text{old} \setminus \{p\}) \cup \{c'\}$,
                and the center points with index different than $i$ in $S_\text{old}$ do not
                change position in the metric space. Therefore by induction hypothesis, such a point $q$ satisfies the statement of $P_1$, and this concludes the validity of $P_1$.
            \end{enumerate}
        \end{enumerate}
    \end{proof}
    
    By combining Lemma~\ref{lem:size_S_worst_recourse_dec}, Lemma~\ref{lem:inv_2}, and Lemma~\ref{lem:C_covers_dec}, 
    we can now finish the proof of Theorem~\ref{th:decr_consist} which we restate for convenience.

    \decrconsist*
    \begin{proof}
        Consider Algorithm~\ref{alg:dec_consist} and its analysis.
        Based on Lemma~\ref{lem:C_covers_dec}, we can infer that the set $S$ is a \distDS{$3\hat{r}$}, and by Lemma~\ref{lem:inv_2} 
        we have that $\hat{r} \leq 2R^*$. Therefore using also Lemma~\ref{lem:size_S_worst_recourse_dec}, the claim follows.
    \end{proof}

    Finally, we note that the worst-case update time of our decremental algorithm (i.e., Algorithm~\ref{alg:dec_consist}) is $O(|P| \cdot k)$, since it admits the same analysis as in Lemma~\ref{lem:runtime_fully}.
\section{Incremental Consistent $k$-Center Clustering} \label{sec:incr}
In the incremental setting, we are given as input a point set $P$ from an arbitrary metric space which undergoes point insertions. The goal is to incrementally maintain a feasible solution 
for the $k$-center clustering problem with small approximation ratio and low recourse. 
In this section, we develop a deterministic 
incremental algorithm that maintains a $6$-approximate solution for the $k$-center clustering problem
with worst-case recourse of $1$ per update, as demonstrated in the following theorem.
Recall that by $R^*$ we denote the current optimal radius of the given instance.

\begin{restatable}{theorem}{incrconsist} \label{th:incr_consist}
    There is a deterministic incremental algorithm that,
    given a point set $P$ from an arbitrary metric space subject to point insertions and an integer $k \geq 1$,
    maintains a subset of points $S \subseteq P$ such that: 
    \vspace{-0em}
    \begin{itemize}
        \setlength\itemsep{-0.3em}
        \item The set $S$ is a \distDS{$6R^*$} in $P$ of size $k$.
        \item The set $S$ changes by at most one point per update.
    \end{itemize}
\end{restatable}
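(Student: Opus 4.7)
The plan is to adapt the fully dynamic algorithm of Section~\ref{sec:fully} to the insertion-only setting, dropping the zombie-cluster machinery (since centers are never deleted) and replacing the implicit ordering of centers by the Gonzalez order, so that the approximation ratio can be tightened from $50$ to $6$. Concretely, I would maintain an ordered set $S=\{c_1,\dots,c_k\}$ of $k$ centers together with an extension level $\delta$ with associated radius $r^{(\delta)}=2^{\delta}\cdot r$ for some base radius $r$ fixed in the preprocessing phase (doubling instead of factor $5$ is what lets us reach the $6$-approximation). Each center $c_i$ owns a cluster $C_i$, classified as either regular (all points in $C_i$ are within $r^{(\delta)}$ of $c_i$) or extended. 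Throughout the algorithm I would maintain three invariants analogous to those of Section~\ref{sec:fully}: (I1) $|S|=k$; (I2) there exists $p_S\in P\setminus S$ such that $S\cup\{p_S\}$ is a distance-$r^{(\delta-1)}$ independent set; and (I3) $S$ is a distance-$3r^{(\delta)}$ dominating set of $P$. By Corollary~\ref{cor:r_less2R}, (I1)+(I2) force $r^{(\delta-1)}<2R^\ast$, hence $r^{(\delta)}<4R^\ast$, and combined with (I3) this gives the claimed $6$-approximation.

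Preprocessing: run the algorithm of Theorem~\ref{th:2appr_alg} to obtain a $2$-approximate set $S_0$, extend it to a set $S$ of size exactly $k$ by greedily adding points at large distance from the current centers (as in the fully dynamic preprocessing), then fix $\delta$ as the smallest integer such that $S$ is a distance-$r^{(\delta)}$ dominating set and fix an order on $S$ by running Gonzalez's farthest-point heuristic on $S$ itself, so that $c_i$ is the point that Gonzalez picks at step $i$ when seeded with $c_1$. Handling an insertion of $p^+$: if $p^+$ lies within $r^{(\delta)}$ of some center $c_j$, add it to $C_j$ and re-run a lightweight Decreasing Operation that lowers $\delta$ as far as (I2) permits; the set $S$ is unchanged and recourse is $0$. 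Otherwise the crucial step is to try the Increasing Operation (while the centers are pairwise at distance $>r^{(\delta)}$, double $\delta$); if this terminates with every point of $P$ inside distance $r^{(\delta)}$ from $S$, we are done with recourse $0$. If not, we must be in the situation where some pair of centers is within $r^{(\delta)}$ of each other, and here I deviate from Section~\ref{sec:fully}: pick the pair $(c_s,c_i)\in S\times S$ at minimum pairwise distance, and among minimizers pick the one where $c_i$ has maximal Gonzalez index. Then fold $C_i$ into $C_s$ (marking $C_s$ extended), remove $c_i$ from $S$, and add a point $p$ of maximum distance from the resulting $S$ to $S$ in the freed $i$-th slot, making $\{p\}$ a new regular cluster. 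Recourse is exactly $1$.

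The main work is showing that the invariants still hold after this step and that the approximation ratio actually improves. For (I2), the maximum-distance point $p$ is at distance $>r^{(\delta)}$ from the remaining centers (else we would already have returned), so $S\cup\{c_i^{\text{old}}\}$ is a distance-$r^{(\delta-1)}$ independent set after an appropriate Decreasing call. For (I3), the only danger is the merged extended cluster $C_s$: every former point of $C_i$ is within $r^{(\delta)}$ of $c_i$, and $\dist(c_s,c_i)\le r^{(\delta)}$, so by the triangle inequality every point of the merged cluster sits within $2r^{(\delta)}$ of $c_s$, hence within $3r^{(\delta)}$, as required. The hardest part, and the place where the Gonzalez order is essential, is to prove that $r^{(\delta)}\le 4R^\ast$. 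The argument I intend to use is this: at the moment the pair $(c_s,c_i)$ is chosen, the maximal-Gonzalez-index choice of $c_i$ together with the defining property of Gonzalez's order on $S$ implies that the $k+1$ points $\{c_1,\dots,c_{i-1},c_{i+1},\dots,c_k,p\}$ together with $c_i$ contain a subset of $k+1$ points whose pairwise distances are at least $\dist(c_s,c_i)$; by Lemma~\ref{lem:k+1_atleastr_lesseq2R} this forces $\dist(c_s,c_i)\le 2R^\ast$, and since $\dist(c_s,c_i)$ is the minimum pairwise distance in $S$, an inductive argument (maintaining the additional invariant that every exempted center lies within $2R^\ast$ of its closest remaining center) yields $r^{(\delta)}\le 2\cdot 2R^\ast=4R^\ast$, so that (I3) gives the distance-$6R^\ast$ dominating-set property.

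Finally, the Decreasing Operation is called at the end of every update to make the analysis inductive, exactly as in Section~\ref{sec:fully}; recourse is provably $\le 1$ by inspection of the three cases; and an implementation analogous to Lemma~\ref{lem:runtime_fully} runs in $\poly(k,|P|)$ time per insertion, and in fact in $\poly(k)$ time per insertion when distances to existing centers are precomputed incrementally, since no case inspects a point-to-point distance outside of the current $k$ centers beyond the newly inserted point. The main obstacle I expect is the careful bookkeeping needed to re-establish the Gonzalez order on $S$ after the swap, so that the maximal-index argument bounding $\dist(c_s,c_i)\le 2R^\ast$ remains valid at the next insertion; I plan to address this by showing that after the swap, re-running Gonzalez on the new $S$ starting from $c_1$ can only reorder suffixes, and that all invariants are stable under such reorderings.
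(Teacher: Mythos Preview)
Your high-level plan (Gonzalez order on $S$, swap out the center realizing the minimum pairwise distance with maximal index, double when the centers are still a distance-$r^{(\delta)}$ independent set) is exactly the paper's approach. But two concrete points keep your write-up from reaching the $6$-approximation.

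First, your Invariant~(I3) as stated cannot give $6R^\ast$. With $r^{(\delta)}<4R^\ast$ from (I2) and doubling, a distance-$3r^{(\delta)}$ dominating set is only a $12$-approximation. The paper's Invariant~3 is the sharper $\max_{q\in P}\dist(q,S)\le 2R^\ast+r^{(\delta)}$, which is precisely your ``additional invariant that every exempted center lies within $2R^\ast$ of its closest remaining center'' combined with the fact that every point is within $r^{(\delta)}$ of $S_{\mathrm{init}}\cup S$ (where $S_{\mathrm{init}}$ is the center set right after the last Gonzalez run). You should drop the cluster machinery and the $3r^{(\delta)}$ bound entirely; the paper's incremental algorithm maintains no clusters and no regular/extended distinction, and there is no Decreasing Operation (under insertions $\delta$ never goes down). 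In fact your cluster argument is not obviously bounded: nothing in your selection rule prevents the exempted $c_i$ from already being extended, so repeated merges could push the radius beyond $3r^{(\delta)}$.

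Second, your worry about ``re-establishing the Gonzalez order after the swap'' is misplaced, and your proposed fix (re-run Gonzalez after every swap) is not what the paper does. The Gonzalez Operation is run \emph{only once per extension level}, immediately after doubling; between doublings a newly inserted center simply takes the index of the exempted one. The crucial structural fact the paper proves (Claim~7.6) is that the Gonzalez prefix $S_{i-1}=\{c_1,\dots,c_{i-1}\}$ from the last run is always a subset of the current $S$: if some $c_j$ with $j<i$ had been removed earlier, then by the Gonzalez farthest-point property $\dist(c_i',S_{j-1})\le\dist(c_j,S_{j-1})$, so the minimum pairwise distance at that earlier step was also achieved at $c_i'$ with a larger index, contradicting the tie-breaking rule. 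From $S_{i-1}\subseteq S$ and the Gonzalez property one gets $\dist(c',S)\le\dist(c_i',S_{\mathrm{old}}\setminus\{c_i'\})\le 2R^\ast$ for every previously exempted center $c'$, which together with $P_2$ yields the $2R^\ast+r^{(\delta)}$ bound.
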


\vspace{0.5em}
Our algorithm maintains a set $S = \{c_1, \ldots, c_k\}$ of $k$ \emph{center points} 
in some \emph{order} and a value $\delta$ called \emph{extension level}. 
Further, the algorithm associates a \emph{radius} $r^{(\delta)}$ for each value of $\delta$. 
Throughout the algorithm we want to satisfy the following three invariants. 

\vspace{1em}
\noindent
\textbf{Invariant~1}: The size of the set $S$ is exactly $k$.

\noindent
\textbf{Invariant~2}: There exists a point $p_S \in P \setminus S$ such that the set $S \cup \{p_S\}$ is a \distIS{$r^{(\delta-1)}$}.

\noindent
\textbf{Invariant~3}: The set $S$ is a \distDS{$(2R^* + r^{(\delta)})$}.

\subsection{Incremental Algorithm} 
At first we describe the preprocessing phase of the algorithm, and afterwards we describe
the way the algorithm handles point insertions. A pseudocode of the algorithm is provided in Algorithm~\ref{alg:incr_consist}.

\subsubsection{Preprocessing Phase}
In the preprocessing phase, we run the algorithm of Theorem~\ref{th:2appr_alg} to obtain a $2$-approximate solution $S$,
and let $r \coloneqq \max_{p \in P} \dist(p, S)$. 
Next, we repeat the following process with the goal to satisfy all the three invariants. 
While the set $S$ has size strictly less than $k$, we add to $S$ an arbitrary point $c$
that is at a distance greater than $\frac{r}{2}$ from the current centers (i.e., $\dist(c, S) > \frac{r}{2}$). 
If there is no such point, then the value of $r$ is halved, and the process continues until the size of $S$ reaches $k$. 
Once the size of $S$ becomes $k$, the value of $r$ continues to be halved as long as the set $S$ is a \distDS{$\frac{r}{2}$}.

In the special case where the point set $P$ contains fewer than $k$ points, we assume that the preprocessing phase is ongoing and the
set $S$ consists of all points of $P$. Notice that in this scenario the three invariants may not be satisfied. However, this is not an issue
as every point serves as a center. Eventually, when the size of $P$ becomes at least $k + 1$, we execute once the previous steps of the preprocessing phase.

We initialize $r^{(-1)} \coloneqq \frac{r}{2},\; r^{(0)} \coloneqq r$ where $r$ is the final value after the preprocessing step has finished.
Also, the extension level $\delta$ is initially set to $0$. 
\vspace{1em}

\paragraph{Analysis of the preprocessing phase.}
After the preprocessing phase has finished, the set $S$ has size $k$ and the set $S$ is a \distDS{$r^{(\delta)}$},
where $\delta$ is the extension level which is initially zero.
Additionally, by construction there exists a point $p_S \in P \setminus S$ such that the set $S \cup \{p_S\}$
is a \distIS{$r^{(\delta-1)}$}. Hence, in the beginning all the three invariants are satisfied.
Combining Invariant~1 with Invariant~2, we can infer that there exists a \distIS{$r^{(\delta-1)}$} of size at least $k + 1$. Thus it follows that $r^{(\delta-1)} < 2R^*$ by Corollary~\ref{cor:r_less2R},
and as $r^{(\delta)} = 2r^{(\delta-1)}$ we have that the set $S$ is initially a $4$-approximate solution.\footnote{Actually, the set $S$ is initially a $2$-approximate solution, because we run at first the $2$-approximation algorithm of Theorem~\ref{th:2appr_alg} and the addition of more points to $S$ can only reduce its ratio. However, we use this simplistic analysis as a warm-up to demonstrate the utility
of the invariants.}

\subsubsection{Point Insertion}
Let $\delta$ be the current extension level of the algorithm.
Under point insertions, the incremental algorithm operates as described below.

Consider the insertion of a point $p$ into the point set $P$, and let $P \coloneqq P \cup \{p\}$. If $p$ is within distance $r^{(\delta)}$ from the 
set of centers $S$ (i.e., $\dist(p, S) \leq r^{(\delta)}$), 
then the set $S$ remains unmodified and the recourse is zero. Otherwise if the distance between the new point $p$ and the set of centers $S$
is greater than $r^{(\delta)}$ (i.e., $\dist(p, S) > r^{(\delta)}$), the incremental algorithm proceeds as follows. Let $c_i$
be the center point that satisfies the following conditions:
\begin{equation} \label{choose_ci_cj}
    \dist(c_i, S \setminus \{c_i\}) = \min_{x, y \in S,\,x \neq y} \dist(x, y)\text{ and }i\text{ is the largest possible index in $S = \{c_1, \ldots, c_k\}$.}
\end{equation}
The incremental algorithm then continues based on the two following cases:

\begin{enumerate}[label=C\arabic*.]
    \item If the distance between $c_i$ and the other centers is at most $r^{(\delta)}$ (i.e., $\dist(c_i, S \setminus \{c_i\}) \leq
    r^{(\delta)}$), then the algorithm builds a new cluster with the new point $p$ as its center and 
    exempts the center point $c_i$ from being a center. In particular, the point $c_i$ is removed from $S$ and
    the point $p$ is added to $S$. Observe that all points previously closest to $c_i$, now have a new closest center in the updated set $S$. Since the new point $p$ becomes the $i$-th center of the ordered set $S$, the algorithm sets $c_i \coloneqq p$. The recourse in this case is $1$.
     
    \item Otherwise, the pairwise distances between the centers are greater than $r^{(\delta)}$. In this case, the algorithm first calls the Doubling Operation and then the Gonzalez Operation,
    both of which are described below. Next for the updated value of $\delta$, if the new point $p$ 
    is within distance $r^{(\delta)}$ from a center,
    then the set $S$ remains unmodified and the recourse is zero. Otherwise it holds that $\dist(p, S) > r^{(\delta)}$, and the algorithm finds the center point $c_i$ as defined before in~(\ref{choose_ci_cj}).
    Notice that $\dist(c_i, S \setminus \{c_i\}) \leq r^{(\delta)}$ by the construction of the Doubling Operation. Subsequently, the incremental algorithm continues as in case C1.

    \subparagraph{Doubling Operation.}
    Recall that $p$ is the new inserted point.
    During the Doubling Operation, as long as the set $S$ is a \distIS{$r^{(\delta)}$} and $\dist(p, S) > r^{(\delta)}$, the incremental algorithm increases the extension level $\delta$ by one and sets $r^{(\delta)}$ to $2r^{(\delta-1)}$. 

    \subparagraph{Gonzalez Operation.}
    During the Gonzalez Operation, the incremental algorithm reorders the set of centers $S$ by running the Gonzalez’s algorithm on $S$.
    In particular, the algorithm sets $c_1$ to be an arbitrary point from $S$,
    and performs the following step for $k - 1$ iterations. In the $i$-th iteration for $i \geq 2$, it
    sets $c_i$ to be a point in $S \setminus \{c_1, \ldots, c_{i-1}\}$ that is farthest from $\{c_1, \ldots, c_{i-1}\}$.
\end{enumerate}

\begin{algorithm}[ht!]
    \DontPrintSemicolon
    \caption{\textsc{incremental consistent $k$-center}{}}
    \label{alg:incr_consist}

    \SetAlgoLined
    \SetArgSty{textrm}

    \tcp{Let $S = \{c_1, \dots, c_k\}$ be the ordered set of $k$ centers}
    
    \tcp{Let $\delta$ be the current extension level}
    
    \vspace{0.5em}

    \SetKwFunction{FDoublingOp}{DoublingOp}
    \Procedure{\FDoublingOp{p}} {
        \While{$\min_{x, y \in S,\,x \neq y} \dist(x, y) > r^{(\delta)} \textbf{ and } \dist(p, S) > r^{(\delta)}$} {
            $\delta \gets \delta + 1$
        
            $r^{(\delta)} \gets 2 r^{(\delta-1)}$
        }
    }

    \vspace{1em}
    
    \SetKwFunction{FGonzalezOp}{GonzalezOp}
    \Procedure{\FGonzalezOp{}} {
        Let $c_1$ be an arbitrary center point of $S$
        \vspace{0.5em}
        
        \For{\texttt{$i \in \{2, \ldots, k\}$}} {
            $c_i \gets \argmax_{c \in S \setminus \{c_1, \ldots, c_{i-1}\}} \dist(c, \{c_1, \ldots, c_{i-1}\})$
        }

        \vspace{0.5em}
        $S \gets \{c_1, \ldots, c_k\}$ \tcp{Recall that $S$ is an ordered set}
    }

    \vspace{1em}
    
    \SetKwFunction{FInsertPoint}{InsertPoint}
    \Procedure{\FInsertPoint{p}} {
        Add $p$ to $P$
        \vspace{0.5em}
        
        \If{$\dist(p, S) > r^{(\delta)}$} {
            Find $c_i \in S$ such that
            $\dist(c_i, S \setminus \{c_i\}) = \min_{x, y \in S,\,x \neq y} \dist(x, y)$,\label{algline:pick_ci}
            
            and $i$ is the largest possible index in $S = \{c_1, \dots, c_k\}$ \label{algline:determ_i}
            \vspace{0.5em}
        
            \If(\tcp*[h]{case C1}){$\dist(c_i, S \setminus \{c_i\}) \leq r^{(\delta)}$} { \label{algline:ci_cj_leq_rdelta}
                Remove $c_i$ from $S$ \label{algline:c1a}
                
                Add $p$ to $S$

                $c_i \gets p$ \label{algline:c1b}
            }
            \vspace{0.5em}
            
            \Else(\tcp*[h]{case C2}){  
                \FDoublingOp{p}

                \FGonzalezOp{}{}
                \vspace{0.5em}
                
                \If{$\dist(p, S) > r^{(\delta)}$} {
                    Find $c_i \in S$ such that
                    $\dist(c_i, S \setminus \{c_i\}) = \min_{x, y \in S,\,x \neq y} \dist(x, y)$,
                    
                    and $i$ is the largest possible index in $S = \{c_1, \dots, c_k\}$
                    \vspace{0.5em}
                    
                    \tcp{case C1 after case C2}
                    
                    Remove $c_i$ from $S$ \label{algline:c1_a}
                
                    Add $p$ to $S$
        
                    $c_i \gets p$ \label{algline:c1_b}
                }
            }
        }
    }
\end{algorithm}

\subsection{Analysis of the Incremental Algorithm}
Our goal in this section is to prove Theorem~\ref{th:incr_consist} by analyzing
the incremental Algorithm~\ref{alg:incr_consist}. During a point insertion, let $S_\text{old}$ be the set of centers before the algorithm has processed the update,
and $S$ be the set of centers after the algorithm has processed the update.
Similarly, let $\delta_\text{old}$ be the value of the extension level before the algorithm has processed the update,
and $\delta$ be the value of the extension level after the algorithm has processed the update.

The proof of Theorem~\ref{th:incr_consist} almost immediately follows from the three invariants. Therefore, 
we prove that all of them are satisfied in Algorithm~\ref{alg:incr_consist} after a point insertion.

\begin{lemma} \label{lem:size_S_worst_recourse}
    After a point insertion, the size of $S$ is $k$ (i.e., Invariant~1 is satisfied). Moreover,
    the worst-case recourse per update is $1$.
\end{lemma}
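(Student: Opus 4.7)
The plan is to prove both assertions simultaneously by induction on the number of point insertions processed so far. For the base case, I would invoke the analysis of the preprocessing phase, which explicitly constructs a set $S$ of exactly $k$ centers before any insertion is processed (recall that whenever $|P| < k$ the algorithm treats every point as a center and does not yet claim the invariants, and the final steps of preprocessing bring $|S|$ up to $k$ precisely when $|P| \geq k+1$).

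For the inductive step, I would inspect the pseudocode of \textsc{InsertPoint} in Algorithm~\ref{alg:incr_consist} and enumerate the mutually exclusive branches along which $S$ can be touched at all. First, if $\dist(p, S) \leq r^{(\delta)}$, no line modifies $S$, so both claims are immediate from the induction hypothesis (recourse $0$). Otherwise the algorithm enters the conditional and selects $c_i$ as in~(\ref{choose_ci_cj}). Inside case C1 (Lines~\ref{algline:c1a}--\ref{algline:c1b}), exactly one point $c_i$ is removed from $S$ and exactly one point $p$ is added to $S$; since by induction hypothesis $p \notin S$ (because $\dist(p,S) > r^{(\delta)} > 0$), the cardinality is preserved and the symmetric difference has size exactly $2$, which is a recourse of $1$. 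Inside case C2 the algorithm calls \textsc{DoublingOp} and \textsc{GonzalezOp}: the former only modifies the extension level $\delta$ and the radius $r^{(\delta)}$, while the latter merely reorders the elements of $S$ without adding or removing any point. Hence after both operations, $|S|$ is unchanged and the underlying set is exactly $S_\text{old}$. The subsequent branch on $\dist(p, S) > r^{(\delta)}$ either leaves $S$ untouched (recourse $0$) or executes Lines~\ref{algline:c1_a}--\ref{algline:c1_b}, which are syntactically identical to case C1 and therefore enjoy the same analysis.

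The main obstacle — which is really only a bookkeeping point — is to confirm that the \emph{implicit} reordering inside \textsc{GonzalezOp} does not count toward the recourse. I would address this explicitly by recalling the definition of recourse given in Section~\ref{sec:prel}, namely half the size of the symmetric difference between the set of centers before and after the update; since reordering does not affect the set itself, only its indexing, the recourse is unaffected by \textsc{GonzalezOp}. Similarly, I would note that in case C2 after doubling it may still hold that $\dist(p,S) \le r^{(\delta)}$, in which case no swap happens and the recourse is $0$; this is consistent with the bound ``at most $1$''. Combining the three branches (no modification, case C1, case C2 followed by either no modification or a C1-style swap) yields both $|S| = k$ and recourse $\le 1$, completing the induction.
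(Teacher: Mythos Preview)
Your proposal is correct and follows essentially the same approach as the paper's own proof: both argue by induction, observe that the only lines touching $S$ are the C1-style swap blocks (Lines~\ref{algline:c1a}--\ref{algline:c1b} and Lines~\ref{algline:c1_a}--\ref{algline:c1_b}), and conclude that exactly one point enters and one leaves in those cases. Your write-up is simply more explicit than the paper's terse version, in particular by spelling out that \textsc{DoublingOp} and \textsc{GonzalezOp} leave the underlying set unchanged and that reordering does not count toward recourse.
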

\begin{proof}
    The proof follows an induction argument. In the preprocessing phase, the algorithm constructs a set $S$ of $k$ centers.
    Under a point insertion, the set of centers can be modified only due to case C1 in Lines~\ref{algline:c1a}-\ref{algline:c1b} or Lines~\ref{algline:c1_a}-\ref{algline:c1_b}.
    In this case, only the new point is added to the set of centers and only one point is removed 
    from the set of centers. As a result, the size of the set of centers remains the same (i.e., $|S| = k$), and the worst-case recourse after any point insertion is at most $1$.
\end{proof}

\begin{lemma} \label{lem:r_4R} 
    After a point insertion, there exists a point $p_S \in P \setminus S$ such that the set $S \cup \{p_S\}$ is a 
    \distIS{$r^{(\delta-1)}$} (i.e., Invariant~2 is satisfied). Moreover,
    the value of the radius $r^{(\delta)}$ is at most $4R^*$.
\end{lemma}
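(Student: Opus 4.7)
\noindent\textbf{Proof plan for Lemma~\ref{lem:r_4R}.}
I would prove the lemma by induction on the number of point insertions, split according to which branch of the algorithm processes the new point~$p$. The base case is immediate from the preprocessing phase, since that phase was designed exactly to establish Invariants~1--3. For the inductive step, if $\dist(p, S_\text{old}) \leq r^{(\delta_\text{old})}$ then neither $S$ nor $\delta$ changes, and the witness $p_S$ provided by the induction hypothesis still lies in $P \setminus S$ and still certifies Invariant~2. The interesting cases are therefore C1 (executed directly) and C2 (Doubling~$\to$ Gonzalez $\to$ possibly C1).

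In the direct case~C1, we have $\delta = \delta_\text{old}$ and $S = (S_\text{old} \setminus \{c_i\}) \cup \{p\}$. The key observation is that the \emph{removed} point $c_i$ can be taken as the new witness: indeed, $S \cup \{c_i\} = S_\text{old} \cup \{p\}$, and by the induction hypothesis all pairwise distances inside $S_\text{old}$ exceed $r^{(\delta_\text{old}-1)} = r^{(\delta-1)}$, while the branch is only entered when $\dist(p, S_\text{old}) > r^{(\delta_\text{old})} > r^{(\delta-1)}$. Hence $S_\text{old} \cup \{p\}$ is a \distIS{$r^{(\delta-1)}$} of size $k+1$, as desired.

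For case~C2, the central point is to read off the state of the Doubling Operation at the iteration \emph{just before} it terminates. In case~C2 both while-loop conditions hold initially (the pairwise distances in $S_\text{old}$ strictly exceed $r^{(\delta_\text{old})}$ and $\dist(p, S_\text{old}) > r^{(\delta_\text{old})}$), so at least one increment occurs and the new $\delta$ satisfies $\delta \geq \delta_\text{old} + 1$. At the moment when $\delta-1$ was the current extension level, both $S$ was a \distIS{$r^{(\delta-1)}$} and $\dist(p, S) > r^{(\delta-1)}$. The Gonzalez Operation only reorders $S$, so this property persists. Now one splits on the exit condition: if the loop exits because $\dist(p, S) \leq r^{(\delta)}$, then no further modification is made and $p_S := p$ is a valid witness; if it exits because $S$ stops being a \distIS{$r^{(\delta)}$}, then the algorithm proceeds to a C1-style swap, and exactly as in the direct C1 analysis the removed center $c_i$ serves as the witness because $S \cup \{c_i\} = S_\text{old} \cup \{p\}$ is still a \distIS{$r^{(\delta-1)}$} by the pre-exit invariants.

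The bound $r^{(\delta)} \leq 4R^*$ then follows immediately: Invariant~1 (Lemma~\ref{lem:size_S_worst_recourse}) and Invariant~2 together give a \distIS{$r^{(\delta-1)}$} of size $k+1$, so Corollary~\ref{cor:r_less2R} yields $r^{(\delta-1)} < 2R^*$, and the construction $r^{(\delta)} = 2 r^{(\delta-1)}$ finishes the proof. I expect the main bookkeeping obstacle to be phrasing the ``pre-exit state'' of the Doubling loop precisely enough to reuse it in both sub-branches of case~C2; the rest of the argument is a clean reduction to the C1 analysis via the trick of taking the just-removed center $c_i$ (rather than some point unrelated to the update) as the new witness $p_S$.
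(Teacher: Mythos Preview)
Your proposal is correct and matches the paper's own proof almost line for line: both argue by induction on insertions, use the removed center $c_i$ as the witness $p_S$ in the swap cases (direct C1 and C2$\to$C1), use the inserted point $p$ itself as the witness when C2 performs no swap, and derive the $4R^*$ bound from Corollary~\ref{cor:r_less2R} together with $r^{(\delta)} = 2r^{(\delta-1)}$. The only cosmetic difference is that the paper packages your ``pre-exit state of the Doubling loop'' reasoning into a standalone observation (Observation~\ref{obs:doublOper}) and splits first on whether $S$ is modified rather than on C1 versus C2, but the underlying argument is identical.
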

\begin{proof}
    We first prove by induction on the number of inserted points that Invariant~2 is satisfied. For the base case, the statement holds by the preprocessing phase. For the induction step, let $p$ be the inserted point into the point set $P$,
    and let $P$ be the updated point set including the new point $p$. Observe that if $p$ is within distance $r^{(\delta_\text{old})}$ from the set
    $S_\text{old}$ then $S = S_\text{old}$ and $\delta = \delta_\text{old}$. By the induction hypothesis, there exists a point $p_S \in (P \setminus \{p\}) \setminus S_\text{old}$ such that $S_\text{old} \cup \{p_S\}$ is a \distIS{$r^{(\delta_\text{old} - 1)}$}. In turn, we can conclude that $p_S \in P \setminus S$ and that $S \cup \{p_S\}$ is a \distIS{$r^{(\delta-1)}$}, as needed.
    
    Otherwise, assume that the distance between the new point $p$ and the set $S_\text{old}$ is greater than $r^{(\delta_\text{old})}$ (i.e., $\dist(p, S_\text{old}) > r^{(\delta_\text{old})}$). We continue the analysis based on whether the set of centers is modified, using also the following observation.
    
    \begin{observation} \label{obs:doublOper}
        During the Doubling Operation, the incremental algorithm increases the extension level to some $\delta'$ where $\delta_\text{old} < \delta' \leq \delta$, only when the set $S_\text{old}$ is a \distIS{$r^{(\delta'-1)}$} and $\dist(p, S_\text{old}) > r^{(\delta' - 1)}$.
    \end{observation}

    \begin{enumerate}
    \item If the set of centers is not modified (i.e., $S = S_\text{old}$), then the algorithm proceeds with case C2 and calls the Doubling Operation. Based on Observation~\ref{obs:doublOper}, there exists a point $p_S \in P \setminus S$ (e.g., the inserted point $p$) 
    at distance greater than $r^{(\delta-1)}$ from $S$, and the set $S$ is a \distIS{$r^{(\delta-1)}$}. Consequently, the set $S \cup \{p_S\}$ is a \distIS{$r^{(\delta-1)}$}, and thus Invariant~2 is satisfied.
    
    \item Otherwise if the set of centers is modified, then notice that this occurs only when $\dist(p, S_\text{old}) > r^{(\delta)}$. There are two subcases to consider in this scenario.
    \begin{enumerate}
    \item If the algorithm proceeds with case C1 in Lines~\ref{algline:c1a}-\ref{algline:c1b}, 
    an older center point $c_i$ is exempted from being a center. To avoid confusion with the reset of $c_i$ in Line~\ref{algline:c1b},
    let $c_i'$ be the old point $c_i$ up to Line~\ref{algline:c1a}. By construction, we have $S = (S_\text{old} \setminus \{c_i'\}) \cup
    \{p\}$ and $\delta = \delta_\text{old}$. By the induction hypothesis, the set $S_\text{old}$ is a \distIS{$r^{(\delta-1)}$}.
    Thus by setting $p_S = c_i'$ and since $\dist(p, S_\text{old}) > r^{(\delta)}$, we can infer that there exists a point $p_S \in P \setminus S$ at distance greater than $r^{(\delta-1)}$ from the set $S$
    and that $S$ is a \distIS{$r^{(\delta-1)}$}, as needed.
    
    \item Alternatively if the algorithm proceeds with case C2 and then with case C1, the Doubling Operation is called and Lines~\ref{algline:c1_a}-\ref{algline:c1_b} are executed. Based on Observation~\ref{obs:doublOper}, the set $S_\text{old}$ is a \distIS{$r^{(\delta-1)}$}. Thus by setting $p_S = c_i'$ where $c_i'$ is the old point $c_i$ up to Line~\ref{algline:c1_a}, the same argument as before holds.
    \end{enumerate}
    \end{enumerate}
    
    Regarding the upper bound on $r^{(\delta)}$, since Invariant~2 is satisfied
    there exists a point $p_S \in P \setminus S$ such that the set $S \cup \{p_S\}$ is a \distIS{$r^{(\delta-1)}$}.
    Based on Lemma~\ref{lem:size_S_worst_recourse}, the size of the set $S \cup \{p_S\}$ is $k + 1$,
    and using Corollary~\ref{cor:r_less2R} we get that $r^{(\delta-1)} < 2R^*$. Therefore as $r^{(\delta)} = 2r^{(\delta-1)}$ by construction, the claim follows.
\end{proof}

We continue with the proof that Invariant~3 is satisfied throughout the algorithm.
Let $S_{\text{init}}$ be the ordered set of centers resulting from the last execution of the Gonzalez Operation, and
$S_l$ be the ordered subset of $S_{\text{init}}$ up to center point $c_l$ for any $1 \leq l \leq k$. 
Observe that the incremental algorithm calls the Gonzalez Operation once per extension level, 
and this happens right after the extension level is increased. Hence,
the set $S_\text{init}$ is the set of centers at the beginning of the extension level $\delta_\text{old}$, and it may be different
from the current set of centers $S_\text{old}$ just before the algorithm processes a point insertion. We note that the set $S_\text{init}$ is an auxiliary set which is used in the following lemma.

\begin{lemma} \label{lem:C_covers}
    After a point insertion, it holds that:
    \begin{enumerate}[label=\subscript{P}{\arabic*}.]
        \item The distance between any point $q \in P$ and the set $S$ is at most $\min(2R^* + r^{(\delta)},\; 2r^{(\delta)})$.
        In other words, it holds that $\max_{q \in P} \dist(q, S) \leq \min(2R^* + r^{(\delta)},\; 2r^{(\delta)})$ 
        (i.e., Invariant~3 is satisfied).

        \item The distance between any point $q \in P$ and the set $S_\text{init} \cup S$ is at most $r^{(\delta)}$. In other words, it holds that $\max_{q \in P} \dist(q, S_\text{init} \cup S) \leq r^{(\delta)}$.
    \end{enumerate}  
\end{lemma}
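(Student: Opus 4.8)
The plan is to prove $P_1$ and $P_2$ together by induction on the number of insertions, the base case being the preprocessing analysis ($\delta = 0$, $S = S_{\text{init}}$, and $\max_q \dist(q,S) \le r^{(0)}$, which lies below both $2r^{(0)}$ and $2R^* + r^{(0)}$). For the inductive step, the easy case is $\dist(p, S_\text{old}) \le r^{(\delta_\text{old})}$: then $S$, $\delta$, and $S_{\text{init}}$ are unchanged, the new point $p$ sits within $r^{(\delta)}$ of both $S$ and $S_{\text{init}}\cup S$, and every other point inherits its bound from the induction hypothesis (recalling that $R^*$ only grows under insertions). So assume $\dist(p, S_\text{old}) > r^{(\delta_\text{old})}$ and split on cases C1 and C2.

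The first ingredient is a structural description of the Gonzalez order. Write $S_{\text{init}} = (c_1,\dots,c_k)$ for the current reordered set and $d_j := \dist(c_j, \{c_1,\dots,c_{j-1}\})$; the greedy choice gives $d_2 \ge d_3 \ge \cdots \ge d_k$ and $\dist(c_i,c_j) \ge d_j$ for $i < j$. I would then observe that every point that is inserted during the current extension level and kept as a center is more than $r^{(\delta)}$ from all other centers (the algorithm adds a point to $S$ only when it is more than $r^{(\delta)}$ from the current $S$), and is never subsequently approached. Consequently, whenever case C1 fires inside the extension level, the exempted center is exactly the last surviving prefix element of $S_{\text{init}}$: if $\{c_1,\dots,c_m\} = S_\text{old} \cap S_{\text{init}}$, then the minimum pairwise distance in $S_\text{old}$ equals $d_m$, it is $\le r^{(\delta)}$, and the largest index attaining it is $m$. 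Hence $S \cap S_{\text{init}}$ is always a prefix of $S_{\text{init}}$, the C1 steps peel this prefix in reverse Gonzalez order, and each such exemption witnesses $\dist(c_m, \{c_1,\dots,c_{m-1}\}) = d_m \le r^{(\delta)}$.

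With this, $P_2$ follows: right after the extension level was last raised (DoublingOp, then GonzalezOp), the induction hypothesis $P_1$ gives $\max_q \dist(q, S_\text{old}) \le 2r^{(\delta_\text{old})}$, and since the first doubling step sets $r^{(\delta_\text{old}+1)} = 2r^{(\delta_\text{old})} \le r^{(\delta)}$, the reordered set $S_{\text{init}}$ is an $r^{(\delta)}$-dominating set of the point set minus only the triggering point $p$, which then either joins within $r^{(\delta)}$ or is put into $S$; so $\max_q \dist(q, S_{\text{init}} \cup S) \le r^{(\delta)}$. Inside the extension level, a C1 step removes only a member of $S_{\text{init}}$ from $S$ and adds $p$, so $S_{\text{init}} \cup S$ never loses a point while gaining $p$, which preserves $P_2$ via the induction hypothesis. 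Then $P_1$ is derived from $P_2$: given $q$, pick $c \in S_{\text{init}} \cup S$ with $\dist(q,c) \le r^{(\delta)}$; if $c \in S$ we are done with room to spare, and otherwise $c$ is an exempted center, for which I would show both $\dist(c, S) \le r^{(\delta)}$ (yielding the $2r^{(\delta)}$ bound, which is what the doubling step above consumes) and $\dist(c, S) \le 2R^*$ (yielding the $2R^* + r^{(\delta)}$ bound: at the moment $c$ was exempted, the centers present together with the triggering point formed $k+1$ points whose pairwise distances are at least $\dist(c, S_\text{then}\setminus\{c\})$, so Lemma~\ref{lem:k+1_atleastr_lesseq2R} applies, and $R^*$ only grows).

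The main obstacle I anticipate is exactly carrying the bound $\dist(c,S) \le r^{(\delta)}$ on the exempted centers through the induction. A freshly exempted $c = c_m$ is fine, since $\dist(c_m, \{c_1,\dots,c_{m-1}\}) = d_m \le r^{(\delta)}$ and $\{c_1,\dots,c_{m-1}\} \subseteq S$; the delicate point is that a previously exempted $c_b$ whose closest surviving center was $c_m$ loses that witness when $c_m$ is peeled, and one must argue — using the monotonicity $d_b \le d_m$ for $b > m$ together with the fact that peeling happens only while $d_m \le r^{(\delta)}$ — that $c_b$ still has a witness in the shrunken prefix $\{c_1,\dots,c_{m-1}\}$. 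I expect this to require keeping ``$\dist(c,S) \le r^{(\delta)}$ for all $c \in S_{\text{init}}$'' as an explicit auxiliary invariant alongside $P_1$ and $P_2$, and a careful analysis of which Gonzalez-neighbor realizes the distance once the prefix shrinks; this is the part that genuinely exploits the Gonzalez reordering rather than just the abstract invariants.
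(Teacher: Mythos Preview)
Your inductive skeleton, the role you assign $S_{\text{init}}$, and the reduction of $P_1$ to bounding $\dist(c,S)$ for exempted $c\in S_{\text{init}}\setminus S$ are exactly the paper's argument. Your sharper structural claim---that $S\cap S_{\text{init}}$ is always a Gonzalez prefix peeled in reverse order---is correct and is the paper's Claim~\ref{clm:exempted_centers} ($S_{i-1}\subseteq S$) stated more explicitly.

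The gap is precisely where you flag it, but the tool you name does not close it. The monotonicity $d_b\le d_m$ for $b>m$ gives only $\dist(c_b,S_{b-1})\le d_m$, and the witness in $S_{b-1}$ may lie in $\{c_m,\dots,c_{b-1}\}$, all already exempted. What you actually need is the \emph{defining} property of the Gonzalez choice: when $c_m$ was selected it was the farthest point from $S_{m-1}$ among all of $\{c_m,\dots,c_k\}$, so for every $b\ge m$,
\[
\dist(c_b,S_{m-1})\;\le\;\dist(c_m,S_{m-1})\;=\;d_m.
\]
Since $S_{m-1}\subseteq S$, this yields $\dist(c_b,S)\le d_m\le r^{(\delta)}$ immediately---no auxiliary invariant is required. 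The same inequality also closes your $2R^*$ bound, which has the identical gap: your parenthetical bounds $\dist(c,S_{\text{then}}\setminus\{c\})$, not $\dist(c,S)$, and the witness in $S_{\text{then}}$ may itself be peeled later. Instead, bound $d_m\le 2R^*$ at the \emph{current} step (the set $S_{\text{old}}\cup\{p\}$ consists of $k{+}1$ points with pairwise distances at least $d_m$, so Lemma~\ref{lem:k+1_atleastr_lesseq2R} applies) and conclude $\dist(c_b,S)\le d_m\le 2R^*$ for all exempted $c_b$ at once. This is precisely the paper's Inequality~(\ref{eq:dist_exempt_to_S}).
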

\begin{proof}
    The validity of $P_1$ and $P_2$ is established by induction on the number of inserted points. For the base case once the preprocessing phase has finished, we have $\delta = 0$ and that every point is within distance $r^{(0)}$ from a center, and thus both $P_1$ and $P_2$ hold. 
    For the induction step, let $p$ be the inserted point into the point set $P$, and let $P$ be the updated point set including the new point $p$.
    If $p$ is within distance $r^{(\delta_\text{old})}$ from the set
    $S_\text{old}$ (i.e., $\dist(p, S_\text{old}) \leq r^{(\delta_\text{old})}$), then $S = S_\text{old}, \delta = \delta_\text{old}$ and the set $S_\text{init}$ remains unchanged. Thus, the claims for $P_1$ and $P_2$ follow by the induction hypothesis.

    Otherwise, assume that the distance between the new point $p$ and the set $S_\text{old}$ is greater than $r^{(\delta_\text{old})}$ 
    (i.e., $\dist(p, S_\text{old}) > r^{(\delta_\text{old})}$). We analyze the two possible cases of the incremental algorithm.

    \begin{enumerate}
        \item Assume that the algorithm proceeds with case C1 in Lines~\ref{algline:c1a}-\ref{algline:c1b}.
        In this case, we have $\delta = \delta_\text{old}$ and that the set $S_\text{init}$ remains unchanged. Let $c_i$ be the corresponding
        center point selected in~\cref{algline:pick_ci}. 
        To avoid confusion with the reset of $c_i$ in Line~\ref{algline:c1b},
        let $c_i'$ be the old point $c_i$ up to Line~\ref{algline:c1a}.

        \subparagraph{Validity of $P_2$.}
        By the induction hypothesis, every point $q \in P \setminus \{p\}$ is within distance $r^{(\delta_\text{old})}$ from its closest point in $S_\text{init} \cup S_\text{old}$. 
        Since the set $S_\text{init}$ is not modified and $S = (S_\text{old} \setminus \{c_i'\}) \cup \{p\}$, for the validity of $P_2$ it suffices to show the following claim.

        \antonis{Review this.}
        \begin{claim} \label{clm:c_in_Sinit}
            The point $c_i'$ belongs to $S_\text{init}$.
        \end{claim} \vspace{-1.5em}
        \begin{proof}
            By construction in Line~\ref{algline:ci_cj_leq_rdelta}, the distance between the point $c_i'$ and the set $S_\text{old}$ is at 
            most $r^{(\delta_\text{old})}$. Observe that whenever a new point $p'$ is added to the set of centers in Line~\ref{algline:c1b} or Line~\ref{algline:c1_b}, its distance from the other centers is greater than the corresponding radius. Consequently, it holds that $\dist(p', S_\text{old}) > r^{(\delta_\text{old})}$. Hence, the point $c_i'$ cannot have been added to $S_\text{old}$ after the last execution of the 
            Gonzalez Operation, which implies that $c_i'$ must belong to $S_\text{init}$. \end{proof} 
        
        Therefore, every point $q \in P$ is within distance $r^{(\delta)}$ from the set $S_\text{init} \cup S$, and so $P_2$ is satisfied.

        \vspace{-0.7em} \subparagraph{Validity of $P_1$.}
        Remember that for a point $q \in P$, its closest point in $S_\text{init}$ may not belong to the set of centers $S$ after the algorithm processes the point insertion. 
        To that end, we show that there exists another point in $S$ which is close enough to prove the statement. 
        
        We need the following claim which essentially says that all the center points which have been exempted already from being centers
        for the extension level $\delta_{\text{old}}$ since the last execution of the Gonzalez Operation (i.e., points in $S_\text{init}$ 
        and not in $S_\text{old}$), must have index higher than $i$ as determined in Line~\ref{algline:determ_i} with respect to $S_\text{init}$.
        \begin{claim} \label{clm:exempted_centers}
            The set $S_{i-1}$ is a subset of $S_\text{old}$ and of $S$.
        \end{claim}
        \vspace{-1.4em}
        \begin{proof}
            Note that in this scenario, by construction of the algorithm we have that $S = (S_\text{old} \setminus \{c_i'\}) \cup \{p\}$.
            Since $p \notin S_{i-1}$, it is sufficient to show that $S_{i-1} \subseteq S$.
            The proof is using the minimal counter-example argument. Consider the first time that the set $S_{i-1}$
            is not a subset of $S$, and let $c_j$ be the point with the smallest index $j$ in $S_{i-1}$ which 
            is missing from $S$ because it was replaced by another point
            (i.e., $c_j \in S_{i-1} \setminus S$ with the smallest possible index $j$).
            By the minimality of the counter-example, it holds that $S_{j-1} \subseteq S$. 
            Since the point $c_j$ belongs to $S_{i-1}$, 
            we have that $j < i$ by definition. Hence the point $c_i'$ belongs to $S$ but not to $S_{j-1}$.
            By the manner Gonzalez Operation works which always selects the furthest point,
            it holds that $\dist(c_i', S_{j-1}) \leq \dist(c_j, S_{j-1})$.
            Hence, the minimum pairwise distance at the moment the point $c_j$ was removed from $S$, 
            could also be achieved with $c_i'$. Thus as $j < i$, the algorithm should have removed the point 
            $c_i'$ from the set $S$ instead, which yields a contradiction.
        \end{proof}
        \vspace{-0.5em}

        Let $c'$ be any point which has been exempted already from being a center for the extension level $\delta_{\text{old}}$ since the last execution of the Gonzalez Operation. 
        In other words, the point $c'$ belongs to $S_{\text{init}}$ but $c'$ is not contained in $S_\text{old}$ (i.e., $c' \in S_\text{init} \setminus S_\text{old}$).
        Based on Claim~\ref{clm:exempted_centers}, we can infer that the point $c'$ does not belong to $S_{i-1}$.
        By the manner Gonzalez Operation works which always selects the furthest point, it holds that
        $\dist(c', S_{i-1}) \leq \dist(c_i', S_{i-1})$, and by Claim~\ref{clm:exempted_centers} it follows that $\dist(c', S) \leq \dist(c', S_{i-1})$. Since $i$ is the largest such index,
        we have that $\dist(c_i', S_{i-1}) \leq \dist(c_i', S_\text{old} \setminus \{c_i'\})$. By combining all these together, we can conclude that $\dist(c', S) \leq \dist(c_i', S_\text{old} \setminus \{c_i'\})$.
        
        By construction in Line~\ref{algline:ci_cj_leq_rdelta}, it holds that $\dist(c_i', S_\text{old} \setminus \{c_i'\}) \leq r^{(\delta_\text{old})} < \dist(p, S_\text{old})$. Moreover, since
        the distance between $c_i'$ and $S_\text{old} \setminus \{c_i'\}$ is the minimum over all the pairwise distances of points
        in $S_\text{old}$, we have that the pairwise distances of points in the set $S_\text{old} \cup \{p\}$ are at least $\dist(c_i', S_\text{old} \setminus \{c_i'\})$.
        By Lemma~\ref{lem:size_S_worst_recourse}, the size of $S_\text{old} \cup \{p\}$ is $k + 1$.
        Hence based on Lemma~\ref{lem:k+1_atleastr_lesseq2R}, it follows that $\dist(c_i', S_\text{old} \setminus \{c_i'\}) \leq 2R^*$. Finally, by combining all these together we can conclude that:
        \begin{equation} \label{eq:dist_exempt_to_S}
            \dist(c', S) \;\leq\; \dist(c_i', S_\text{old} \setminus \{c_i'\}) \;\leq\; \min(2R^*,\; r^{(\delta_\text{old})}).
        \end{equation}

        From the validity of $P_2$ in this induction step, every point $q \in P$ is within distance $r^{(\delta)}$ from the set $S_\text{init} \cup S$, and recall that $\delta = \delta_\text{old}$.
        Therefore by Inequality~\ref{eq:dist_exempt_to_S} and triangle inequality, all the points are within distance $\min(2R^* + r^{(\delta)},\; 2r^{(\delta)})$ from another center point of the updated set of centers $S$, and so $P_1$ is satisfied.

        \item Assume that the algorithm proceeds with case C2. In this case, the algorithm calls the Doubling
        Operation, which means that $\delta_\text{old} \leq \delta - 1$ and in turn $2r^{(\delta_\text{old})} \leq r^{(\delta)}$.
        If the set of centers is not modified (i.e., $S = S_\text{old})$, then the new point $p$ must be within distance
        $r^{(\delta)}$ from the set $S$ (i.e., $\dist(p, S) \leq r^{(\delta)}$). 
        Moreover by induction hypothesis, any point $q \in P \setminus \{p\}$
        is within distance $2r^{(\delta_\text{old})}$ from the set $S_\text{old}$.
        Since we have that $r^{(\delta)} \geq 2r^{(\delta_\text{old})}$, it follows that the set $S$ is a \distDS{$r^{(\delta)}$}. By construction of the algorithm, for this case we have that
        $S_\text{init} = S$. Therefore, both $P_1$ and $P_2$ are satisfied.
        
        Otherwise if the set of centers is modified, it holds that $\dist(p, S_\text{old}) > r^{(\delta)}$ and the set $S_\text{old}$ cannot be a \distIS{$r^{(\delta)}$} by the way the Doubling Operation works. Observe that after the Doubling Operation,
        the algorithm calls the Gonzalez Operation, and so we have that $S_\text{init} = S_\text{old}$.
        Let $c_i$ be the corresponding center point, and to avoid confusion with the reset of $c_i$
        in Line~\ref{algline:c1_b}, let $c_i'$ be the old point $c_i$ up to Line~\ref{algline:c1_a}.
        By induction hypothesis, every point $q \in P \setminus \{p\}$ is within
        distance $2r^{(\delta_\text{old})}$ from its closest point in $S_\text{old}$. 
        Since we have that $r^{(\delta)} \geq 2r^{(\delta_\text{old})}$, it follows that
        the distance between any point $q \in P \setminus \{p\}$ and the set $S_\text{init}$ (or $S_\text{old}$) is at most $r^{(\delta)}$. 
        Also by construction in Line~\ref{algline:c1_b}, we have that $p \in S$. Thus, we can conclude
        that $\dist(q, S_\text{init} \cup S) \leq r^{(\delta)}$ for any point $q \in P$, and so $P_2$ is satisfied.
        In turn, every point $q \in P \setminus \{p\}$ whose center was the old center point $c_i'$, is within distance 
        $r^{(\delta)}$ from the point $c_i'$. Moreover by the same argument as before in Inequality~\ref{eq:dist_exempt_to_S}, it holds that 
        $\dist(c_i', S_\text{old} \setminus \{c_i'\}) \leq 2R^*$ and $\dist(c_i', S_\text{old} \setminus \{c_i'\}) \leq r^{(\delta)}$.             
        Therefore by triangle inequality and as $S = (S_\text{old} \setminus \{c_i'\}) \cup \{p\}$, all the points are within 
        distance $\min(2R^* + r^{(\delta)},\; 2r^{(\delta)})$ 
        from another center point of the updated set of centers $S$, as needed for the validity of $P_1$.
    \end{enumerate}    
\end{proof}

By combining Lemma~\ref{lem:size_S_worst_recourse}, Lemma~\ref{lem:r_4R}, and Lemma~\ref{lem:C_covers}, 
we can now finish the proof of Theorem~\ref{th:incr_consist} which we restate for convenience.

\incrconsist*
\begin{proof}
    Consider Algorithm~\ref{alg:incr_consist} and its analysis.
    By Lemma~\ref{lem:C_covers}, we can infer that the set $S$ is a \distDS{$(2R^* + r^{(\delta)})$}. 
    Based on Lemma~\ref{lem:r_4R}, we have that $r^{(\delta)} \leq 4R^*$. Therefore, by setting $r = 2R^* + r^{(\delta)}$ 
    and by Lemma~\ref{lem:size_S_worst_recourse}, the claim follows.
\end{proof}

Notice that the number of operations performed by our incremental algorithm (i.e., Algorithm~\ref{alg:incr_consist}) 
after a point insertion, is a polynomial in the input parameter $k$. Therefore, our incremental algorithm is faster than our decremental algorithm (i.e., Algorithm~\ref{alg:dec_consist}).

\begin{lemma}
    Algorithm~\ref{alg:incr_consist} performs $O(k^2)$
    operations per point insertion.
\end{lemma}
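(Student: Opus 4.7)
The plan is to break down \texttt{InsertPoint(p)} into its constituent steps and show each costs $O(k^2)$ or less, with the potentially problematic piece being the \texttt{while}-loop inside \texttt{DoublingOp}.

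First I would account for the straightforward pieces. Computing $\dist(p, S)$ to test the outer \texttt{if} requires comparing $p$ with each of the $k$ centers, so $O(k)$. Finding the center $c_i$ satisfying~(\ref{choose_ci_cj}) can be done by iterating over all $\binom{k}{2}$ ordered pairs in $S$, tracking the minimum pairwise distance and the largest index achieving it, which costs $O(k^2)$. The bookkeeping in case C1 (Lines~\ref{algline:c1a}--\ref{algline:c1b}) and in the C1-after-C2 branch (Lines~\ref{algline:c1_a}--\ref{algline:c1_b}) is only a constant number of set operations on $S$, so $O(1)$ each. Finally, \texttt{GonzalezOp} runs $k-1$ iterations, and by maintaining for each of the remaining at most $k$ points its current distance to the growing prefix $\{c_1,\dots,c_{i-1}\}$ (updated in $O(k)$ per iteration when one new point is appended), each iteration takes $O(k)$, giving $O(k^2)$ in total.

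The one place that requires a non-routine argument is \texttt{DoublingOp}, whose \texttt{while} loop could a priori iterate arbitrarily many times as $\delta$ grows, and each iteration would naively recompute $\min_{x \neq y \in S} \dist(x,y)$ in $O(k^2)$. The key observation I would invoke is that the set $S$ is \emph{not} modified inside \texttt{DoublingOp}, nor is the point $p$, so both quantities appearing in the loop guard are constants throughout its execution. Hence one computes $M := \min_{x \neq y \in S} \dist(x,y)$ once in $O(k^2)$ and $m := \dist(p, S)$ once in $O(k)$, and then directly advances $\delta$ to the smallest integer for which $r^{(\delta)} = 2^{\delta - \delta_{\text{old}}} r^{(\delta_{\text{old}})}$ fails either $r^{(\delta)} < M$ or $r^{(\delta)} < m$, which is a single $O(1)$ arithmetic computation (or $O(\log)$ if done by repeated doubling, still subsumed by $O(k^2)$). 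Thus \texttt{DoublingOp} costs $O(k^2)$.

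Summing these contributions — $O(k)$ for the initial distance check, $O(k^2)$ for selecting $c_i$, at most one call to \texttt{DoublingOp} at $O(k^2)$, one call to \texttt{GonzalezOp} at $O(k^2)$, one additional $c_i$-selection at $O(k^2)$ after \texttt{GonzalezOp}, and $O(1)$ for the constant-size modifications to $S$ — yields the claimed $O(k^2)$ bound per point insertion. The main obstacle, and the only place where the argument is not immediate from inspection of the pseudocode, is the static-$S$ observation that collapses the \texttt{DoublingOp} loop into a one-shot computation.
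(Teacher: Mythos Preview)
Your proof is correct and follows the same step-by-step decomposition as the paper's proof, which simply states that the Gonzalez Operation costs $O(k^2)$ and asserts that ``every other task can be performed in $O(k)$ operations.'' Your treatment is in fact more careful than the paper's: you explicitly justify why \texttt{DoublingOp} does not blow up (via the static-$S$ observation collapsing the loop to one $O(k^2)$ precomputation plus an arithmetic jump), a point the paper leaves implicit, and you correctly account $O(k^2)$ rather than $O(k)$ for the $c_i$-selection step.
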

\begin{proof}
    After a point insertion, the algorithm needs $O(k^2)$ time
    for the Gonzalez Operation, and observe that every other task can be performed in $O(k)$ operations.
\end{proof}

Finally, we note that our incremental algorithm (i.e., Algorithm~\ref{alg:incr_consist}) achieves the tight $O(k \log n)$ total recourse bound as well.

\subsection{Doubling Algorithm and Worst-Case Recourse} \label{sec:doubling_alg}
Lattanzi and Vassilvitskii~\cite{LattanziV17} argued that the $8$-approximation
``doubling algorithm'' by Charikar, Chekuri, Feder, and Motwani~\cite{CharikarCFM97}
has $O(k \log n)$ total recourse, where $n$ is the total number of points inserted by the adversary. In~\cite{LattanziV17} the authors provide a lower bound of $\Omega(k \log n)$ for the total recourse, which implies that the total recourse of the ``doubling algorithm'' is optimal. In turn, the amortized recourse
of the ``doubling algorithm'' is $O(k \log n / n)$. Interestingly enough, 
a \emph{lazy} version of the ``doubling algorithm'' has also worst-case recourse of $1$ as demonstrated in the following observation.
Therefore, this justifies our claim that the approximation ratio of our incremental algorithm is an improvement over prior work.

\begin{observation}
    The $8$-approximation ``doubling algorithm'' by \cite{CharikarCFM97} can be adjusted such that its worst-case recourse is at most $1$.
\end{observation}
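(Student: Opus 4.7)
The plan is to present a lazy variant of the doubling algorithm of~\cite{CharikarCFM97} that defers its merges so that at most one swap in the set of centers happens per point insertion. The structure closely mirrors Algorithm~\ref{alg:incr_consist} but without the Gonzalez reordering step; essentially, the modification is to change ``eager'' pruning of close centers into ``on-demand'' pruning driven by the latest arrival.

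Concretely, I would maintain a set $S$ of exactly $k$ centers and a current radius $r$ that can only increase over time, initialized from any valid $k$-center solution. On insertion of a point $p$, the algorithm proceeds as follows. If $\dist(p, S) \le r$, nothing happens and the recourse is $0$. Otherwise, if there exists a pair of centers in $S$ within distance $r$ of each other, the algorithm removes one of them and adds $p$ in its place; the recourse is exactly $1$. Otherwise, $S$ is a \distIS{$r$}, and the algorithm doubles $r$ (without touching $S$) and rechecks the two conditions above, repeating the doubling step if necessary. Since each call to the update routine can perform at most one swap, the worst-case recourse per update is bounded by $1$.

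The main obstacle is maintaining the $8$-approximation guarantee under this lazy merging. I would argue this by introducing invariants analogous to Invariants~2 and~3 of Section~\ref{sec:incr}: the set $S$ together with any point at distance greater than $r$ from $S$ forms a \distIS{$r/2$} of size $k+1$, yielding $r = O(R^*)$ via Corollary~\ref{cor:r_less2R}; and every point in $P$ lies within distance $O(r)$ of $S$. The first invariant is preserved immediately because doubling only increases $r$, and a swap replaces a center that is $r$-close to another by a point that is strictly farther than $r$ from all remaining centers. The second invariant is the delicate one and follows by an induction analogous to Lemma~\ref{lem:C_covers}: whenever a center $c$ is removed, it is at distance at most $r$ from another surviving center, so by the triangle inequality all points that were previously served by $c$ remain within $O(r)$ of $S$. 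Combining these two invariants gives the claimed constant-factor approximation, with a constant matching (up to small factors) the $8$-approximation of the original eager scheme.
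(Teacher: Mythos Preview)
Your proposal captures the same high-level idea as the paper's proof—perform the merges of the doubling algorithm lazily, at most one per insertion—so the recourse bound of $1$ is immediate. The gap lies in your argument for the covering invariant. You argue that when a center $c$ is removed it is within $r$ of some surviving center $c'$, so points formerly served by $c$ remain within $O(r)$ of $S$; but nothing prevents $c'$ from being removed at a later insertion, then its absorbing center $c''$ from being removed after that, and so on. Your one-step triangle inequality does not bound the length of this merge chain, so the implicit constant in ``$O(r)$'' can grow by one with each removal between consecutive doublings (potentially up to order $k$), and the induction you propose does not close.

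The paper addresses precisely this subtlety with a \emph{leader} rule: once a cluster is chosen as the target of a merge it is marked as a leader, and leaders are never themselves merged into other clusters thereafter. This caps the merge chain at depth one and is the missing ingredient that makes the constant-factor bound go through. Your sketch lacks this (or an equivalent device, such as the Gonzalez ordering on which the proof of Lemma~\ref{lem:C_covers} relies to show that every exempted center remains close to the current $S$).
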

\begin{proof}
    During the merging stage of the ``doubling algorithm'', the algorithm picks an arbitrary cluster and merges all its neighbors to it.
    In turn, more than one center could be removed from the set of centers, resulting in a high worst-case recourse.
    However, observe that at every update only one point (i.e., the new inserted point) can be added to the set of centers.
    Hence by adjusting the previous step such that the algorithm removes only one center from the set of centers per update, yields an algorithm with worst-case recourse of $1$. 
    
    There is one subtle thing to explain regarding the way the algorithm picks clusters to merge, as this affects the approximation ratio. Once a cluster is picked so that its neighbors are merged into it,
    the algorithm should remember it, and let us call such a cluster a \emph{leader}.
    Then in the future updates, a leader cluster is never merged to other clusters, but rather only
    the neighbors of a leader cluster are merged to it. In the original algorithm this merging happens at once,
    but since here we adopt a lazy approach, this slight modification is needed to preserve the approximation ratio.
\end{proof}
\subsection*{Acknowledgments}
Antonis Skarlatos would like to thank Christoph Grunau for taking the time to meet and engage in thoughtful conversations in Boston.


\printbibliography[heading=bibintoc] 

\end{document}